\title{Dynamic Algorithms for Online Multiple Testing}
\author{Ziyu Xu$^{1, 2}$ and Aaditya Ramdas$^{1, 2}$ \vspace{0.05in}\\
Machine Learning Department$^{1}$\\ Department of Statistics and Data Science$^{2}$\\ Carnegie Mellon University \\
\texttt{ziyux@cs.cmu.edu, aramdas@cmu.edu}}
\begin{document}
	\maketitle
	\begin{abstract}
We derive new algorithms for online multiple testing that  provably control false discovery exceedance (FDX) while achieving orders of magnitude more power than previous methods. This statistical advance is enabled by the development of new algorithmic ideas: earlier algorithms are more ``static''	 while our new ones allow for the dynamical adjustment of testing levels based on the amount of wealth the algorithm has accumulated. We demonstrate that our algorithms achieve higher power in a variety of synthetic experiments. We also prove that SupLORD can provide error control for both FDR and FDX, and controls FDR at stopping times. Stopping times are particularly important as they permit the experimenter to end the experiment arbitrarily early while maintaining desired control of the FDR. SupLORD is the first non-trivial algorithm, to our knowledge, that can control FDR at stopping times in the online setting.

	\end{abstract}
	\tableofcontents
\section{Introduction to Online Multiple Testing}
Online multiple hypothesis testing has become an area of recent interest due to the massively increasing rate at which analysts are able to test hypotheses, both in the sciences and in the tech industry. This has been a result of the growth in computational resources for automating the generation and testing of new hypotheses (and collection of data to test them). 

Online multiple testing is an abstraction of the scientific endeavor --- the broad goal is to test a (potentially infinite) sequence of hypotheses $H_1, H_2, \dots$ one by one over time. For example $H_i$ could state that \emph{Alzheimer's drug $i$ is no more effective than a placebo}, and if we reject this hypothesis, we are effectively proclaiming a ``discovery''. Some hypotheses are truly ``null'', meaning that there really is no effect of interest, and the rest are ``non-null'', but we do not know which are which, and we must use the data collected to identify as many non-null hypotheses as possible (making ``true discoveries''), while not rejecting too many truly null hypotheses (avoid making ``false discoveries''). Importantly, in the online setting, a decision to reject (or not reject) hypothesis $H_k$ at time $k$ is irrevocable and cannot be altered at a future time.

It is important to note that the setup assumes that algorithm (or analyst) never finds out whether a hypothesis was correctly rejected or not. In other words, we do not know (until perhaps years later, after a lot of followup time and money has been invested) which of our proclaimed discoveries were true discoveries and which were false. Thus the false discovery proportion (\(\FDP\)), which is the ratio of false to total discoveries is an unobserved random variable, and we would like to make many discoveries (have high ``power'') while keeping the \(\FDP\) small.


Previous algorithms for multiple testing have primarily focused on the false discovery rate (\(\FDR\)), which is the expectation of the \(\FDP\). However, in practice, users may be more interested in a probabilistic bound of the tail of the \(\FDP\) as it offers a stricter guarantee about the distribution of the \(\FDP\). Thus, we are interested in controlling the false discovery exceedance (\(\FDX\)) , which is the probability the \(\FDP\) exceeds a predefined threshold. 
To accomplish this, we first must describe how hypotheses are tested in the online multiple testing problem. As stated prior, we are testing a infinite sequence of hypotheses \(H_1, H_2, \dots\), and each hypothesis is either null or non-null. For each hypothesis, we perform some type of experiment, analyze the data, and arrive at a summary statistic known as a ``p-variable'' that is supported only in \([0, 1]\). Colloquially, p-variables are known as p-values, but we use the name ``p-variable'' to emphasize the fact that p-variables are random variables, and not a probability value. We use ``p-value'' to denote the specific value a p-variable is sampled at. Typically, p-variables are the result of making certain sampling assumptions about the data, and deriving some type of statistic with a known distribution from the data --- an excellent reference on how these statistics are derived is chapter 10 in \cite{wasserman_all_2004}. Hence, \(P_k\) is the p-variable whose distribution is based on the data drawn under hypothesis \(H_k\), and \(p_k\) will be the corresponding p-value to \(P_k\). If \(H_k\) is null, then the the distribution of \(P_k\) must be stochastically larger than uniform. We will formalize this notion later, but the key message here is that we essentially know what the distribution of the p-variable under the null hypothesis is.

As a result, an online multiple testing algorithm is primarily concerned with the stream of p-values \(p_1, p_2, \dots\), which correspond to the sequence of hypotheses being tested. For each hypothesis \(H_k\), the algorithm must output an alpha value, \(\alpha_k\), in \([0, 1]\), based solely on the previous p-values it has received i.e.\ \(p_1, \dots, p_{k - 1}\). Critically, it must output this alpha value before receiving the next p-value \(p_k\). \(H_k\) is then rejected if \(p_k\) is less than or equal to \(\alpha_k\). We denote the set of rejected hypotheses up to the \(k\)th hypothesis as \(\rejset_k \subseteq \left\{1, \dots, k \right\}\). The set of null hypothesis is denoted as \(\hypset_0\subseteq \naturals\).  We let the set of null rejections (i.e. false discoveries) in the first be \(k\) hypotheses be denoted as \(\rejnull_k = \hypset_0 \cap \rejset_k\). Thus, the \(\FDP\) of \(\rejset_k\) is defined as follows:
\begin{align*}
	\FDP(\rejset_k) &~\equiv~\frac{\left|\rejnull_k \right|}{\left|\rejset_k\right| \lor 1}.
\end{align*}
Define the (simultaneous, or time-uniform) \(\FDX_K\) as the probability that the \(\FDP\) ever exceeds a threshold \(\epsilon^* \in (0, 1)\) after time $K$:
\begin{align*}
	\FDX_K^{\epsilon^*}~\equiv~\prob{\underset{k \geq K}{\sup}\ \FDP(\rejset_k) \geq \epsilon^*}.
\end{align*} 
The \(\FDR\) in the online setting is a pointwise guarantee for each hypothesis i.e.\ it is the largest expected \(\FDP\) for any fixed time after \(K\):
\begin{align*}
	\FDR(\rejset_k) &~\equiv~ \expect\left[\FDP(\rejset_k)\right], \qquad  \FDR_K ~\equiv~\underset{k \geq K}{\sup}\ \FDR(\rejset_k).
\end{align*} (In prior work \(\FDR\) and \(\FDX\) have been defined with \(K = 1\).) We also provide the first algorithm that controls a related time-uniform error metric, which is the expectation of the largest \(\FDP\):
\begin{align*}
	\supFD_K \equiv \expect\left[\underset{k \geq K}{\sup }\ \FDP(\rejset_k)\right].   
\end{align*} Any algorithm that controls \(\supFD_1\) at a level \(\ell\) will also control \(\FDR_1\) at \(\ell\) as well, since \(\supFD\) is strictly larger than \(\FDR\). One central difference is that $\FDR_1 \leq \ell$ does not yield a guarantee at a data-dependent stopping time $\tau$, but $\supFD_1 \leq \ell$ does imply that $\FDR(\rejset_\tau) \leq \ell$. This yields the first FDR guarantee at stopping times in the literature. In past work ``anytime FDR control'' really held at any fixed time (when the time is specified in advance), but controlling $\supFD$ truly makes the aforementioned phrase actionable in the online setting, since it is a simultaneous guarantee over all time. We expand on the relationship between these two error metrics in \Cref{subsec:Contributions}.
We aim to maximize \textit{power}, which is the expected proportion of non-null hypotheses that are rejected:
\begin{align*}
	\Power(\rejset_k) ~\equiv~ \expect\left[\frac{\left|\rejset_k \cap \hypset_0^c\right|}{\left|\hypset_0^c\right|}\right].
\end{align*} 
Thus, the problem of online multiple testing is to design an algorithm that rejects as as many non-null hypotheses as possible while not rejecting null hypotheses to a degree that is sufficient to satisfy the desired level of control on an error metric (e.g. FDR, FDX, \(\supFD\)). 
\begin{quote}
We do not make guarantees about the optimality of power. This is because we do not have any information about the alternate hypothesis and make no assumptions about the distribution and frequency of non-null hypotheses. However, we can assert that the error metric is provably controlled, since it is only dependent on null hypotheses.
\end{quote}

The \(\LORD\) algorithm, introduced by \cite{javanmard2018online}, is an online multiple testing algorithm that provably controls the \(\FDR\) at level \(\ell\) (i.e.\  guarantees \(\FDR\) will be less than \(\ell\)). We will refer to \cite{javanmard2018online} as \(\JM\) in the sequel. \(\JM\) show that \(\LORD\), with additional constraints, controls \(\FDX\). We will refer to this version of \(\LORD\) as \(\LORD\FDX\). \textit{\(\LORD\FDX\) is the only existing rule that controls \(\FDX\) at a set level to the best of our knowledge.} We determine, through experiments, that \(\LORD\FDX\) is extremely conservative and we design an algorithm that significantly improves power while controlling \(\FDX\) at the same level.

Our proofs employ the post-hoc probabilistic bounds on the \(\FDP\) developed by \cite{katsevich2020}, which we refer to as \(\KR\) henceforth. \(\KR\) do not propose an algorithm for guaranteeing a certain level of \(\FDX\) control, but rather formulate an algorithm-agnostic bound based on \textit{any} sequence of alpha values $\{\alpha_i\}_{i\leq k}$ and rejection sets $\rejset_k$, in order to form a time-uniform confidence interval for the \(\FDP\). We view our work as an algorithmic mechanization of these bounds such that the \(\FDX\) can be controlled by the practitioner in a premeditated fashion.

\subsection{Notation and P-Variable Assumptions}

\paragraph{Abbreviations.} We abbreviate a few works that will be referenced often. As mentioned in the prequel, \JM\ refers to \cite{javanmard2018online} and \KR\ refers to \cite{katsevich2020}. \FS\ refers to \cite{foster2008alpha}, which is the original work in this area.

\paragraph{Notation.} Recall that \(P_k\) is the \(k\)th p-variable and \(\alpha_k\) is the \(k\)th alpha value. Let \(R_k = \ind{P_k \leq \alpha_k}\) indicate whether the \(k\)th hypothesis is rejected. Generally, script variables such as \(\hypset, \rejset, \filtration\) are sets. Capital letters, such as \(R, V, W, P\) are random variables. Greek letters, such as \(\alpha, \beta, \gamma, \delta, \epsilon\), represent user selected parameters for the algorithm. 
We collect the ``past rejection information'' into a filtration:
\begin{align}
    \filtration_k ~\equiv~ \sigma(\left\{R_i\right\}_{i \leq k}),
    \label{eqn:Filtration}
\end{align} with \(\filtration_0\) being the trivial \(\sigma\)-algebra. Note that \(\{\alpha_{k}\}_{k \in \naturals}\) is predictable with respect to \(\{\filtration_k\}_{k \in \naturals}\) meaning that $\alpha_{k}$ is \(\filtration_{k-1}\)-measurable.

\paragraph{P-variable assumptions.}In classical hypothesis testing, a p-variable is a random variable whose distribution is stochastically larger than uniform under the null, meaning that if the null hypothesis is true, we would have $\prob{P \leq s} \leq s$ for all $s \in [0,1]$. For the online setting, the weakest possible assumption one could make is that conditional on the past, the p-variables still satisfy such a condition.
This is usually formalized (by FS, \citealt{ramdas2017online}, and KR, for example), by the following \textit{conditional superuniformity} assumption about null p-variables:
\begin{align}
	\prob{P_{k + 1} \leq s \mid \filtration_k} \leq s \text{ for }k \in \hypset_0,\ s \in [0, 1].
	\label{eqn:SuperUniform}
\end{align} Note that the above assumption is weaker than assuming that the p-variables are independently, superuniformly distributed under the null hypothesis:
\begin{align}
    \prob{P_{k} \leq s } \leq s \text{ for }k \in \hypset_0,\ s \in [0, 1] \text{ and }\{P_k\}_{k \in \hypset_0} \text{ are jointly independent},
	\label{eqn:IndSuperUniform}
\end{align}
which is also commonly made in the online FDR literature (in \JM, for example). \revision{For the rest of the paper, we make no further assumptions on the relationship between on the hypotheses, \(H_1, H_2, \dots\) other than \eqref{eqn:SuperUniform}. We include assumption~\eqref{eqn:IndSuperUniform} because it implies \eqref{eqn:SuperUniform} and has been used to prove error control in prior work e.g.\ \(\LORD\) requires this assumption for \(\FDR\) control. Hence, the online multiple testing is a relatively assumption-light framework that can be adapted to many practical settings.}
Lastly, we define \(t_r\) to be the time step when the \(r\)th rejection is made, where \(r\) is a positive integer.

\subsection{Summary of Contributions}
\label{subsec:Contributions}

\begin{figure}[h]
	\centering
	\hspace*{-0.2in}
    \includegraphics[width=0.6\columnwidth]{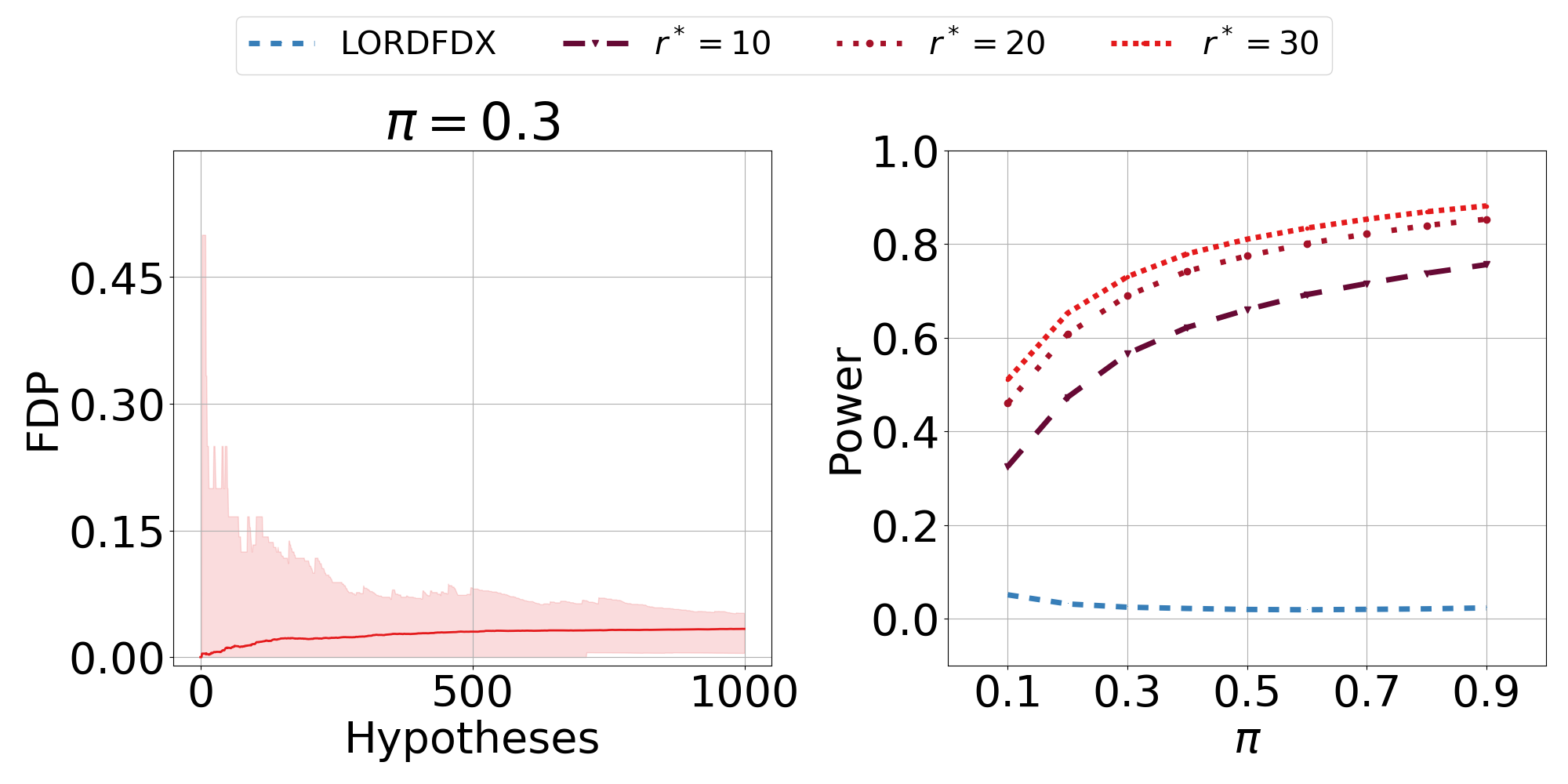}
	\caption{The right plot is the power of \(\SupLORD\), for different delays in rejections, \(r^*\), vs.\ \(\LORD\FDX\) from \(\JM\). The left plot is the average \(\FDP\) of \(\SupLORD\) with \(r^*= 30\) across hypotheses tested for non-null likelihoods \(\pi \in \{0.1, 0.3\}\). All algorithms control \(\FDX\) for \(\epsilon^*=0.15\) at a level of \(\delta^*=0.05\). Experiment details are in \Cref{subsec:FDXExperiments}. The shaded area is a \(1 - \delta^*\) confidence band for the \(\FDP\) --- it is smaller than \(\epsilon^*\) for the majority of hypotheses. This shows \(\FDX\) control at \(\epsilon^*\) holds for a substantial portion of the hypotheses. Also, the power of \(\SupLORD\) increases with \(r^*\). Thus, \(\SupLORD\) has significant \(\FDX\) control and more power than \(\LORD\FDX\) (by JM).}
	\label{fig:FDXPowerExample}
\end{figure}

\paragraph{Delayed FDX control.} We provide a formula for \(\SupLORD\), which controls \(\FDX\) at a set level, and has more power than \(\LORD\FDX\) (by JM). In fact, we observe empirically that \(\LORD\FDX\) rejects only a few hypotheses before becoming unable to reject any more hypotheses, and performs no better than the baseline approach we discuss in \Cref{subsec:AlphaSpending}.  We prove \(\SupLORD\) controls \(\FDX\) by using a time-uniform confidence interval introduced in \(\KR\) on the \(\FDP\). A novel feature of our algorithm is that we may choose the number of rejections after which we begin controlling \(\FDX\) in exchange for more power. Practically, users often test a large number of hypotheses, and make many discoveries. Thus, the small number of rejections with no \(\FDX\) control is an insignificant fraction of the experiment. 
The right plot in \Cref{fig:FDXPowerExample} shows that having the \(\FDX\) bound apply after only 10 rejections is sufficient for a noticeable increase in power over \(\LORD\FDX\). The left plot indicates that the \(\FDX\) is still controlled for a substantial proportion of the hypotheses, despite delaying control for 30 rejections. Consequently, we show through experimental results\footnote{Code for the figures and experiments in this paper can be found at \url{https://github.com/neilzxu/suplord}.} that by affording this flexibility to our algorithm, it can improve power with no practical deficit.

\paragraph{Dynamic scheduling.} Many multiple testing algorithms, including \(\SupLORD\), are based upon a notion of generalized alpha-investing introduced by \cite{aharoni2014generalized} which maintains a nonnegative value known as wealth. Wealth quantifies the error budget of the algorithm, and is an upper bound on the size of the next alpha value. 
For our error metrics of interest, an algorithm gains wealth by making rejections. We discuss and formalize the role of wealth in \Cref{sec:GAI}. Deciding how to allocate an algorithm's wealth to alpha values is a key challenge of designing online multiple testing algorithms. We devise a method for formulating alpha values that \textit{dynamically} chooses larger alpha values when wealth is large. Through experiments, we demonstrate that \(\SupLORD\) with dynamic scheduling outperforms spending schedules introduced in \JM, and fully utilizes its wealth and increases its power as a result.

\paragraph{Theoretical improvements in \(\FDR\) control.} The \(\SupLORD\) algorithm is able to control not only \(\FDX\), but also \(\FDR\) and \(\supFD\) at a set level of control. Table~\ref{table:TheoreticalComparison} illustrates the difference in assumptions and error control guarantees between \(\SupLORD\) and existing methods. Without imposing more restrictive independence assumptions and reducing the regime of control to fixed times, previous algorithms (\FS, \JM, \citealt{ ramdas2017online,ramdas2018saffron,tian2019addis}) could only control a modified \(\FDR\):
\begin{align}\label{def:mfdr}
    \mFDR(\rejset_k) &~\equiv~ \frac{\expect\left[|\rejnull_k|\right]}{\expect\left[\left|\rejset_k\right| \vee 1\right]}.
\end{align}
By controlling \(\supFD\), \(\SupLORD\) controls \(\FDR\) under the same guarantees and assumptions as previous algorithms that controlled \(\mFDR\). Specifically, \(\SupLORD\) offers the following theoretical improvements for control of \(\FDR\):
\begin{enumerate}
    \item \textit{Control at stopping times.}  Stopping times are random times that are a function of past events. Previous results have only controlled \(\FDR\) at fixed times. Consequently, this enables to the user to stop \(\SupLORD\) prematurely, and still maintain valid \(\FDR\) bounds.
    \item \textit{Non-monotonicity.} The spending schedule in prior work required \(\alpha_k\) to be \textbf{monotone}, that is:
    \ifarxiv{}{\vspace{-0.075in}}
    \begin{align}
        \alpha_k \text{ is coordinatewise nondecreasing w.r.t. past rejections } R_1, \dots, R_{k - 1} \ \text{for all }k \geq 1.
        \label{eqn:MonotoneSchedule}
    \end{align}
    \vspace{-0.3in}

    However, \(\SupLORD\) has a provable \(\FDR\) guarantee without this constraint on its spending schedule due to using techniques from \KR. Thus, when using non-monotonic schedules such as dynamic scheduling, \(\SupLORD\) maintains a provable \(\FDR\) guarantee.
    \item \textit{Non-independence}. Prior algorithms require independent p-values to have \(\FDR\) control. \(\SupLORD\) only requires a natural, weaker, baseline assumption we formalize in \eqref{eqn:SuperUniform}. This assumption is required to prove any known guarantee for online multiple testing algorithms.
\end{enumerate}
\ifarxiv{}{\vspace{-0.2in}}

\begin{table}[ht]
\caption{Comparison of the error metrics controlled under different assumptions of online multiple testing algorithms discussed in \Cref{subsec:Contributions}. Note that assumption~\eqref{eqn:IndSuperUniform} implies assumption~\eqref{eqn:SuperUniform}, and hence any guarantees achieved under~\eqref{eqn:SuperUniform} are also achieved under~\eqref{eqn:IndSuperUniform}. To achieve \(\FDR\) control, all algorithms except \(\SupLORD\) require both independent superuniform p-variables, and monotone spending schedules --- \(\SupLORD\) can offer \(\FDR\) guarantees under only conditional superuniformity \eqref{eqn:SuperUniform}. In addition, \(\SupLORD\) is the only algorithm that offers control of \(\FDR\) at stopping times and \(\FDX\) control without alpha-death, as \(\LORD\FDX\) suffers from alpha-death.}
\begin{center}
\begin{tabular}{|c|c|l|}
\hline
\textbf{Method} & \multicolumn{2}{c|}{\textbf{Guarantees}} \\ \hline
\(\LORD\), \(\mathrm{SAFFRON}\), \(\mathrm{ADDIS}\) &  & \multicolumn{1}{c|}{\textit{If p-variables satisfy}} \\
(\JM, \citealt{ramdas2017online} & \textit{If p-variables satisfy} & \multicolumn{1}{c|}{\textit{independence},} \\
\citealt{ramdas2018saffron}, & \textit{conditional superuniformity \eqref{eqn:SuperUniform}:} & \multicolumn{1}{c|}{\textit{superuniformity \eqref{eqn:IndSuperUniform}},} \\
\citealt{tian2019addis}) & \multicolumn{1}{l|}{\tabitem \(\mFDR\) at stopping times} & \textit{and monotone spending \eqref{eqn:MonotoneSchedule}:} \\
\multicolumn{1}{|l|}{} & \multicolumn{1}{l|}{} & \tabitem \(\FDR\) at fixed times \\
 &  & \tabitem \(\mFDR\) at stopping times \\ \hline
\multirow{4}{*}{$\SupLORD$} & \multicolumn{2}{c|}{\textit{If p-variables satisfy conditional superuniformity \eqref{eqn:SuperUniform}:}} \\
 & \multicolumn{2}{l|}{\tabitem \(\FDR\) at stopping times (Corollary~\ref{corollary:FDRStoppingTime})} \\
 & \multicolumn{2}{l|}{\tabitem \(\FDX\) w/o alpha-death (\Cref{thm:SupLORDFDX})} \\
 & \multicolumn{2}{l|}{\tabitem \(\mFDR\) at stopping times (\Cref{thm:mFDRSupLORD})} \\ \hline
\end{tabular}
\end{center}
\label{table:TheoreticalComparison}

\end{table}

\section{Generalized Alpha-Investing (GAI)}
\label{sec:GAI}

One of the key paradigms in online hypothesis testing is notion of alpha-investing that was first introduced by \cite{foster2008alpha}, and expanded upon in later works \citep{aharoni2014generalized, ramdas2017online}. 
Our primary error metric of interest in this section will be \(\FDR\), since prior algorithms, including \(\LORD\), have focused on controlling \(\FDR\). However, this paradigm is generalizable to any of the error metrics we consider in this paper. In particular, we will see how we use this idea to derive our new algorithm, \(\SupLORD\), in \Cref{sec:SupLORD}.

\subsection{Alpha-Spending and Alpha-Death}
\label{subsec:AlphaSpending}
To elucidate alpha-investing, we first discuss the notion of alpha-spending,
which is an Bonferroni correction. We control the FDR (indeed the familywise error rate, a more stringent criterion) by ensuring that the following constraint is satisfied at every \(k\):
\begin{align}
    \sum_{i = 1}^k\alpha_k \leq \ell.
    \label{eqn:BonferroniBound}
\end{align} In condition~\eqref{eqn:BonferroniBound}, we can view \(\ell\) as our error budget, or \textbf{wealth}, since the sum of all alpha values adds up to \(\ell\). Thus, we can imagine this as starting out with wealth of \(\ell\), and spending no more than our current wealth on the current alpha value for each hypothesis. This method for choosing alpha values is known as alpha-spending. We can explicitly define the wealth at time \(k\) for alpha-spending to be:
\begin{align}
    W(k) ~\equiv~ \ell - \sum_{i = 1}^k\alpha_i.
    \label{eqn:WealthFormulaBonferroni}
\end{align} 
Definition~\eqref{eqn:WealthFormulaBonferroni} indicates that wealth is monotonically decreasing when alpha-spending. As more hypotheses are tested, the alpha values will shrink towards zero. Thus, the algorithm will fail to reject nearly every future hypothesis after enough hypotheses are tested. \cite{ramdas2017online} term the shrinkage or disappearance of alpha values and consequent failure to reject hypotheses after a point in time as \textbf{alpha-death}. Alpha-spending suffers from alpha-death because the bound is extremely conservative --- Bonferroni controls the probability of making even a single false rejection to be at level \(\ell\) as opposed to simply controlling the \(\FDR\) to be at \(\ell\).

\subsection{Generalized Alpha-Investing Algorithms}
\paragraph{An alpha-investing example: \(\LORD\).} To avoid alpha-death, we can choose a less conservative bound than Bonferroni while maintaining \(\FDR\) control. \cite{ramdas2017online} provide an estimator that upper bounds the \(\FDR\):
\begin{align}
    \widehat{\FDP}_{\LORD}(\rejset_k) = \frac{\sum_{i = 1}^k \alpha_i}{\left|\rejset_k\right| \vee 1}.
    \label{eqn:FDPEstimator}
 \end{align} Notably, this is also the estimator that \(\LORD\) uses to control \(\FDR\). Thus, we simply need to control \(\widehat{\FDP}_{\LORD}\) at level \(\ell\), for the \(\FDR\) to be controlled at the same level. Hence, we get a different wealth formulation than in the alpha-spending case:
 \begin{align}
     W(k) ~\equiv~ \ell(|\rejset_k|  \vee 1) - \sum_{i = 1}^k \alpha_i.
     \label{eqn:FDPEstimatorWealth}
 \end{align} The key difference between this formulation and alpha-spending is that wealth can be earned. Thus, alpha-death is avoided since $\alpha_i$ need not shrink to zero. We obtain a reward of \(\ell\) each time the algorithm makes a rejection\footnote{In this formulation, the initial wealth is $\ell$ and no wealth is gained from the first rejection, but this can be easily amended so that the sum of the initial wealth and first reward equals $\ell$ using any other split.} i.e.\ increases the cardinality of \(\rejset_k\).
 The wealth reward follows from \eqref{eqn:FDPEstimator} --- when more hypotheses are rejected, the denominator in \eqref{eqn:FDPEstimator} increases, which allows for a larger numerator while maintaining the same \(\widehat{\FDP}_{\LORD}\) value. Thus, an algorithm may carefully ``invest'' its wealth on alpha values in a way that maximizes the number of rejections.
 
\paragraph{Generalized alpha-investing (GAI).}  GAI algorithms have the following wealth update:
 \begin{align}
     W(k) &~\equiv~ \beta_0 + \sum_{i = 1}^{k} \beta_iR_i  - \sum_{i = 1}^k\alpha_i,
     \label{eqn:AlphaInvestingWealth}
\end{align}
 where \(\{\beta_i\}_{ i \in \naturals \cup \{0\}}\) is the boost sequence and is nonnegative. \(\beta_i\) is the boost in wealth that is earned from the algorithm making the rejection on the \(i\)th hypothesis. Similar to alpha values, we require \(\{\beta_k\}_{k \in \naturals}\) to be predictable with respect to \(\{\filtration_k\}_{k \in \naturals}\) and \(\beta_0\) to be a constant. We can recover \eqref{eqn:WealthFormulaBonferroni} by setting \(\beta_i = 0\), and we can recover \eqref{eqn:FDPEstimatorWealth} by setting \(\beta_i = \ell\). The key invariant that all alpha-investing algorithms maintain is the following:
 \begin{align}
     \alpha_k \leq W(k - 1).
     \label{eqn:WealthInvariant}
 \end{align} Thus, selection of alpha values respects this notion of wealth and ensures it is always nonnegative.
 
\begin{figure}[h]
    \centering
    \includegraphics[width=0.8\textwidth]{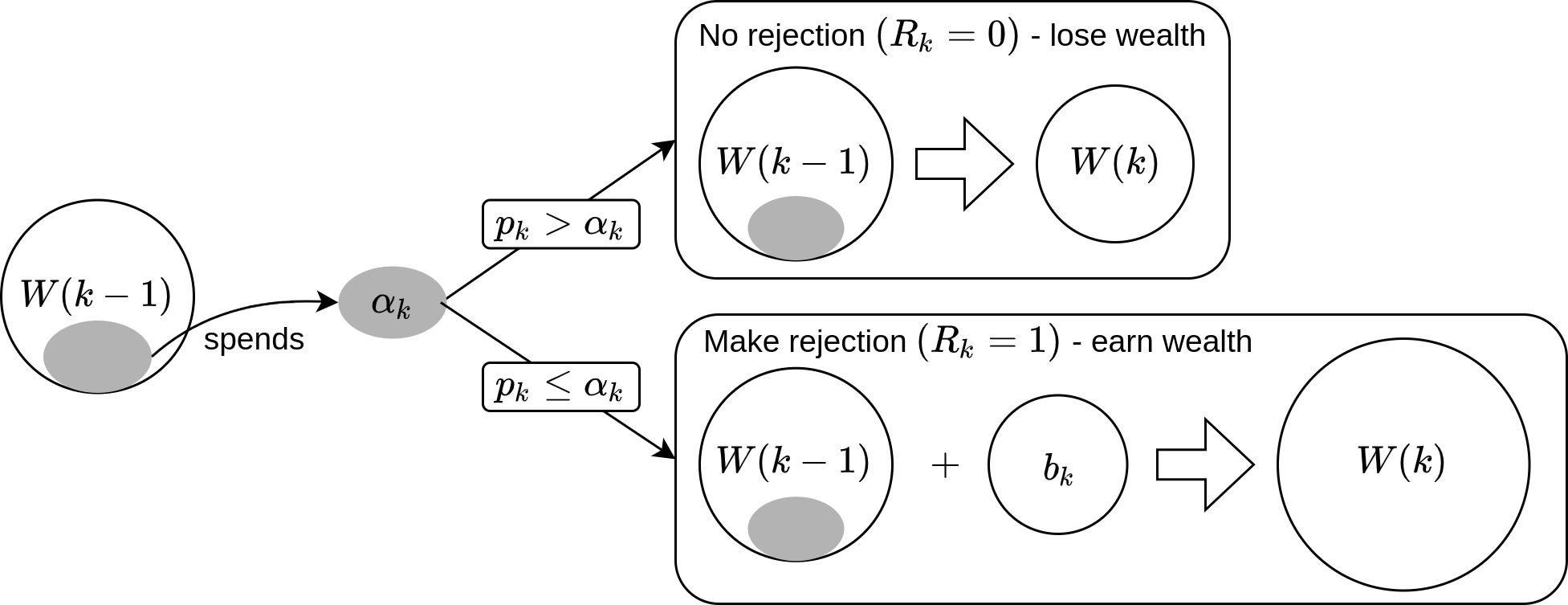}
    \caption{Diagram of the alpha-investing paradigm. The algorithm spends wealth at each hypothesis \(k\) and earns wealth if it makes a rejection, and loses \(\alpha_k\) wealth if it does not.}
    \label{fig:AlphaInvestingDiagram}
\end{figure}
 
\paragraph{Spending schedule.} \JM\ also proposed two methods of choosing alpha values (i.e. spending schedules) based on wealth. Let \(\{\spendcoef_i\}_{i \in \naturals}\) be a \textbf{spending sequence} that has the following properties:
\begin{subequations}
\begin{align}
    \spendcoef_i \geq 0 \text{ for all }i \in \naturals\\
    \sum\limits_{i = 1}^\infty \spendcoef_i = 1.
\end{align}
\label{eqn:SpendingConditions}
\end{subequations} \JM\ propose the following spending schedules:
\begin{align}
    \alpha_k = \begin{cases}
    \gamma_k \beta_0 + \sum\limits_{j = 1}^{\left|\rejset_{k - 1}\right|}\gamma_{k - t_j}\beta_{t_j} & \steady\\
    \gamma_{k - t_{|\rejset_{k - 1}|}}W(t_{|\rejset_{k - 1}|}) & \aggressive.
    \end{cases}
    \label{eqn:StaticSpendingSchedules}
\end{align} 
While both are valid spending schedules in the sense that neither would violate condition~\eqref{eqn:WealthInvariant}, only \steady\ can be used to provide provable \(\FDR\) control in \(\LORD\), since \steady\ is the only monotone rule \eqref{eqn:MonotoneSchedule}. Intuitively, this means that when more rejections are made, a monotone spending schedule should output larger alpha values at all future time steps than it would if fewer rejections were made. \JM\ show that \steady\ satisfies monotonicity while \aggressive\ does not. Hence, \aggressive\ does not provide any \(\FDR\) guarantees when used with \(\LORD\). In \Cref{sec:ControllingError}, we will discuss how \(\SupLORD\) can maintain \(\FDR\) control even when using non-monotone spending schedules. 

For the spending sequence, a default choice is to set \(\gamma_i \propto \frac{\log(i \vee 2)}{i\exp(\sqrt{\log i})}\), which is motivated by optimizing a lower bound for the power of \(\LORD\) in the Gaussian setting in \JM.
 
\section{\(\SupLORD\): Delayed \(\FDX\) Control}
\label{sec:SupLORD}

We propose a new alpha-investing algorithm, \(\SupLORD\), which provides control over the \(\FDX\). \(\SupLORD\) requires the user to specify \(r^*\), a threshold of rejections after which the error control begins to apply, \(\epsilon^*\), the upper bound on the \(\FDP\), and \(\delta^*\), the probability at which the \(\FDP\) exceeds \(\epsilon^*\) at any time step after making \(r^*\) rejections. We derive this algorithm from a new estimator, \(\overline{\FDP}\), that gives probabilistic, time-uniform control over the \(\FDP\). \revision{Notably, \(\SupLORD\) requires no assumptions about relationships between hypotheses tested except for condition~\eqref{eqn:SuperUniform} on p-variables of null hypotheses}.

\subsection{Deriving the Boost Sequence}
To prepare for the \(\overline{\FDP}\), define the following quantities, respectively:
\ifarxiv{
\begin{align*}
	\widehat{V}(\rejset_k) &~\equiv~ \sum\limits_{i = 1}^k \alpha_i,\\
	\widehat{\FDP}_{\SupLORD}(\rejset_k) &~\equiv~ \frac{\widehat{V}(\rejset_k) + 1}{\left|\rejset_k\right|}. 
\end{align*}
}{
\begin{align*}
	\widehat{V}_k ~\equiv~ \sum\limits_{i = 1}^k \alpha_i ~ , \quad \quad \widehat{\FDP}(\rejset_k) ~\equiv~ \frac{\widehat{V}_k + 1}{\left|\rejset_k\right|}. 
\end{align*}
}
Now, for any \(\delta \in (0, 1)\), define 
\begin{align*}
\coef{\frac{1}{\delta}} &~\equiv~ \frac{\log(\frac{1}{\delta})}{ \log\left(1 + \log(\frac{1}{\delta})\right)}.
\end{align*}
Hence our estimator of interest is defined as follows:
\begin{align*}
\overline{\FDP}(\rejset_k) &~\equiv~ \coef{\frac{1}{\delta^*}} \cdot  \widehat{\FDP}(\rejset_k).
\end{align*}
\vspace{-0.2in}
\begin{fact}[Theorem 4 from \(\KR\)]
Let \(\{\alpha_k\}_{k \in \naturals}\) be any sequence of alpha values predictable with respect to filtration \(\filtration_{k}\) in definition~\eqref{eqn:Filtration}.
Assuming conditional superuniformity \eqref{eqn:SuperUniform}, the following uniform bound holds:
	\begin{align*}
	\prob{\FDP(\rejset_k) \leq \overline{\FDP}(\rejset_k) \text{ for all }k \geq 1} ~\geq~ 1 - \delta.
	\end{align*}
	\label{fact:PosthocBound}
\end{fact}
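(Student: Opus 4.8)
The plan is to reduce the time-uniform $\FDP$ statement to a single uniform concentration bound on the number of false discoveries $|\rejnull_k|$, and then establish that bound via an exponential supermartingale together with Ville's inequality. Observe that $\overline{\FDP}(\rejset_k) = \coef{\frac{1}{\delta}} \cdot \frac{\widehat{V}_k + 1}{|\rejset_k|}$ while $\FDP(\rejset_k) = \frac{|\rejnull_k|}{|\rejset_k| \vee 1}$. Hence, on any $k$ with $|\rejset_k| \geq 1$, the inequality $\FDP(\rejset_k) \leq \overline{\FDP}(\rejset_k)$ is equivalent to $|\rejnull_k| \leq \coef{\frac{1}{\delta}}\,(\widehat{V}_k + 1)$; and when $|\rejset_k| = 0$ there are no rejections, so $|\rejnull_k| = 0$ and the right-hand side is $+\infty$, making the bound trivial. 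Since $\widehat{V}_k = \sum_{i=1}^k \alpha_i \geq \sum_{i \in \hypset_0,\, i \leq k} \alpha_i$ (the extra non-null terms only inflate $\widehat{V}_k$, which is crucial because we cannot identify $\hypset_0$), it suffices to show that, with probability at least $1 - \delta$, the event $\big\{ |\rejnull_k| \leq \coef{\frac{1}{\delta}}\,\big(\sum_{i \in \hypset_0,\, i \leq k} \alpha_i + 1\big) \text{ for all } k \big\}$ holds.

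Next I would construct the martingale. Writing $|\rejnull_k| = \sum_{i \in \hypset_0,\, i \leq k} R_i$ with $R_i = \ind{P_i \leq \alpha_i}$, conditional superuniformity \eqref{eqn:SuperUniform} combined with $\filtration_{i-1}$-measurability of $\alpha_i$ gives $\expect[R_i \mid \filtration_{i-1}] \leq \alpha_i$ for $i \in \hypset_0$. For a fixed $\lambda > 0$, define
\[
L_k ~\equiv~ \exp\!\Big(\lambda\, |\rejnull_k| - (e^\lambda - 1)\!\!\sum_{i \in \hypset_0,\, i \leq k}\!\! \alpha_i\Big), \qquad L_0 = 1.
\]
I claim $L_k$ is a nonnegative supermartingale with respect to $\{\filtration_k\}$. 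For $i \notin \hypset_0$ both $|\rejnull_i|$ and the compensator are unchanged, so $L_i = L_{i-1}$. For $i \in \hypset_0$, using that $R_i \in \{0,1\}$, $\expect[e^{\lambda R_i} \mid \filtration_{i-1}] = 1 + (e^\lambda - 1)\expect[R_i \mid \filtration_{i-1}] \leq 1 + (e^\lambda - 1)\alpha_i \leq \exp\!\big((e^\lambda - 1)\alpha_i\big)$, which exactly cancels the compensator increment and yields $\expect[L_i \mid \filtration_{i-1}] \leq L_{i-1}$.

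Applying Ville's inequality to $L_k$ gives $\prob{\sup_k L_k \geq 1/\delta} \leq \delta$, so on the complementary event, for all $k$ simultaneously, $\lambda |\rejnull_k| - (e^\lambda - 1)\sum_{i \in \hypset_0,\, i \leq k}\alpha_i < \log(1/\delta)$, i.e.
\[
|\rejnull_k| < \frac{e^\lambda - 1}{\lambda}\!\!\sum_{i \in \hypset_0,\, i \leq k}\!\!\alpha_i + \frac{\log(1/\delta)}{\lambda}.
\]
The final step is to optimize the free parameter to produce the exact constant $\coef{\frac{1}{\delta}}$. Choosing $\lambda = \log\!\big(1 + \log(1/\delta)\big)$ makes $e^\lambda - 1 = \log(1/\delta)$, so both coefficients collapse to $\frac{\log(1/\delta)}{\log(1 + \log(1/\delta))} = \coef{\frac{1}{\delta}}$, giving $|\rejnull_k| < \coef{\frac{1}{\delta}}\big(\sum_{i \in \hypset_0,\, i \leq k}\alpha_i + 1\big) \leq \coef{\frac{1}{\delta}}(\widehat{V}_k + 1)$. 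Combined with the reduction above, this proves the stated bound.

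I expect the main obstacle to be precisely this last alignment: guessing the exponential supermartingale and, more subtly, identifying the value of $\lambda$ that forces the multiplicative coefficient $\frac{e^\lambda-1}{\lambda}$ and the additive coefficient $\frac{\log(1/\delta)}{\lambda}$ to coincide, which is what produces the nonstandard constant $\coef{\frac{1}{\delta}}$ rather than a naive Chernoff bound. The supporting observations --- that $\widehat{V}_k$ over-counts by including non-null $\alpha_i$ (harmless since $\alpha_i \geq 0$) and that the $|\rejset_k| = 0$ case is vacuous --- are comparatively routine.
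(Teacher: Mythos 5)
Your proposal is correct and is essentially the same argument as the one underlying the cited result: the paper states this as Theorem~4 of \KR\ without reproving it, and \KR's own proof is precisely your exponential supermartingale plus Ville's inequality, with your tuned $\lambda = \log\left(1 + \log\left(\frac{1}{\delta}\right)\right)$ coinciding exactly with the $\theta_x$ of Fact~\ref{fact:MartingaleLemma} (Lemma~1 of \KR, invoked in the paper's appendix) at the level where $\exp(-\theta_x x) = \delta$, i.e.\ $x = \coef{\frac{1}{\delta}}$. The supporting steps --- restricting the compensator to null indices since $\widehat{V}_k$ only over-counts, the vacuous $\left|\rejset_k\right| = 0$ case, and applying conditional superuniformity \eqref{eqn:SuperUniform} with the predictable threshold $\alpha_i$ (legitimate here since $\filtration_{i-1}$ is atomic) --- are all handled correctly.
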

\ifarxiv{}{\vspace{-0.3in}}
Thus, to control \(\FDX\), \(\SupLORD\) must ensure \(\overline{\FDP}(\rejset_k) \leq \epsilon^*\) for all \(k \geq t_{r^*}\). This is equivalent to requiring the following two conditions on \(\{\beta_j\}_{j \in \naturals \cup \{0\}}\):
\begin{subequations}
\begin{align}
    \sum\limits_{i = 0}^{t_{r^* - 1}} \beta_iR_i &\leq \frac{\epsilon^*r^*}{\coef{\frac{1}{\delta^*}}} - 1 \label{eqn:SumSupLordBoostCondition}\ \text{for}\ i \leq t_{r^* - 1}\\
    \beta_i & \leq \frac{\epsilon^*}{\coef{\frac{1}{\delta^*}}}\ \text{for all }i > t_{r^* - 1}. \label{eqn:IncreaseSupLordBoostCondition}
\end{align}
\label{eqn:SupLORDBoostConditions}
\end{subequations}
Unlike previously discussed alpha-investing algorithms, \(\SupLORD\) has two different conditions for its boost sequence. This is the result of \(\SupLORD\)'s capability to delay control of \(\FDX\) up to \(r^*\) rejections. We can think of \(\SupLORD\) as operating in two phases. The first phase of \(\SupLORD\) is when it has not made \(r^* - 1\) rejections yet. The constraint on this phase is formulated in condition~\eqref{eqn:SumSupLordBoostCondition}. In this phase, we are not concerned with how much boost in wealth the algorithm receives at each rejection in the first \(r^* - 1\) rejections. Thus, we only restrict the \textit{total wealth} \(\SupLORD\) attains in the first \(r^* - 1\) rejections so that it does not attain so much wealth that it can violate the bound on \(\overline{\FDP}\) when the \(r^*\)th rejection occurs. 

The second phase, when we have made \(r^*\) rejections, is an analog to the alpha-investing schemes of \(\LORD\) and alpha-spending, except for the estimator \(\overline{\FDP}\). Condition~\eqref{eqn:IncreaseSupLordBoostCondition} restricts the wealth boost for each rejection to be limited so that \(\overline{\FDP}\) is controlled to be less than \(\epsilon^*\). As a result, we can prove the following lemma:
\ifarxiv{}{\vspace{-0.05in}}
\begin{lemma}
    Any GAI algorithm with a boost sequence \(\{\beta_i\}_{i \in \naturals \cup \{0\}}\) that satisfies conditions~\eqref{eqn:SupLORDBoostConditions} will ensure
        \(\overline{\FDP}(\rejset_k) \leq \epsilon^*\)
    for all times \(k \geq t_{r^*}\) such that a rejection is made at time \(k\).
    \label{lemma:FDPBarLemma}
\end{lemma}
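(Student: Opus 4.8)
The plan is to turn the claimed bound into a single scalar inequality on the accumulated alpha mass and then verify that inequality using the wealth invariant together with the two boost conditions. Fix a rejection time $k = t_R$ with $R = |\rejset_k| \geq r^*$. Since $\overline{\FDP}(\rejset_k) = \coef{\frac{1}{\delta^*}} \cdot \frac{\sum_{i=1}^k \alpha_i + 1}{|\rejset_k|}$, the desired conclusion $\overline{\FDP}(\rejset_k) \leq \epsilon^*$ rearranges to
\begin{align*}
\sum_{i=1}^k \alpha_i \leq \frac{\epsilon^* R}{\coef{\frac{1}{\delta^*}}} - 1,
\end{align*}
so the whole proof reduces to bounding the cumulative alpha mass at a rejection time.

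First I would invoke the wealth invariant \eqref{eqn:WealthInvariant}, $\alpha_k \leq W(k-1)$, and expand $W(k-1)$ via the GAI wealth formula \eqref{eqn:AlphaInvestingWealth}. Adding $\sum_{i=1}^{k-1}\alpha_i$ to both sides of $\alpha_k \leq \beta_0 + \sum_{i=1}^{k-1}\beta_i R_i - \sum_{i=1}^{k-1}\alpha_i$ makes the alpha terms telescope, yielding
\begin{align*}
\sum_{i=1}^k \alpha_i \leq \beta_0 + \sum_{i=1}^{k-1}\beta_i R_i.
\end{align*}
The crucial feature here — and the step I expect to be the main obstacle to get right — is that the right-hand side collects only boosts strictly before time $k$, so the boost $\beta_k$ earned at the \emph{current} rejection never appears. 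This is exactly the off-by-one that makes the bookkeeping close: had I instead used only nonnegativity of $W(k)$, I would have retained $\beta_k$ and ended with a bound of the form $\tfrac{\epsilon^*(R+1)}{\coef{\frac{1}{\delta^*}}}-1$, which is too weak by one rejection's worth of budget.

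Next I would split the surviving sum at the end of the first phase, $t_{r^*-1}$. The boosts indexed by $i \leq t_{r^*-1}$ together with $\beta_0$ are precisely the left-hand side of condition~\eqref{eqn:SumSupLordBoostCondition}, so they contribute at most $\tfrac{\epsilon^* r^*}{\coef{\frac{1}{\delta^*}}} - 1$. The remaining boosts correspond to the rejections numbered $r^*, r^*+1, \dots, R-1$; there are exactly $R - r^*$ of them, because the $R$-th rejection at time $k$ was already dropped in the previous step, and each satisfies $\beta_i \leq \tfrac{\epsilon^*}{\coef{\frac{1}{\delta^*}}}$ by condition~\eqref{eqn:IncreaseSupLordBoostCondition}. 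Adding the two pieces gives
\begin{align*}
\sum_{i=1}^k \alpha_i \leq \left(\frac{\epsilon^* r^*}{\coef{\frac{1}{\delta^*}}} - 1\right) + (R - r^*)\frac{\epsilon^*}{\coef{\frac{1}{\delta^*}}} = \frac{\epsilon^* R}{\coef{\frac{1}{\delta^*}}} - 1,
\end{align*}
which is exactly the rearranged target; substituting back into the definition of $\overline{\FDP}$ yields $\overline{\FDP}(\rejset_k) \leq \epsilon^*$. I would finally sanity-check the boundary case $R = r^*$, where the second phase is empty and the bound is supplied entirely by condition~\eqref{eqn:SumSupLordBoostCondition}, to confirm that the counting is consistent there as well.
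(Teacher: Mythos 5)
Your proposal is correct and follows essentially the same route as the paper's proof of Lemma~\ref{lemma:GeneralFDPBarLemma} (of which Lemma~\ref{lemma:FDPBarLemma} is the $a=1$ case): apply the invariant $\alpha_k \leq W(k-1)$ together with the wealth formula~\eqref{eqn:AlphaInvestingWealth} so that the current rejection's boost drops out, then split the surviving boost sum at $t_{r^*-1}$ and apply conditions~\eqref{eqn:SumSupLordBoostCondition} and~\eqref{eqn:IncreaseSupLordBoostCondition} to the two pieces, with the same count of $R - r^*$ second-phase boosts via $|\rejset_{t_r - 1}| = |\rejset_{t_r}| - 1$. The only cosmetic difference is that you rearrange the target inequality into a bound on $\sum_{i=1}^k \alpha_i$ up front, whereas the paper carries the chain of inequalities inside the expression for $\overline{\FDP}_a(\rejset_{t_r})$ directly.
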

\ifarxiv{}{\vspace{-0.05in}}
A formal proof this statement is shown in \Cref{subsec:BoundedFDPProof}. Thus, by combining Lemma~\ref{lemma:FDPBarLemma} with \Cref{fact:PosthocBound}, we can derive the following guarantee for \(\SupLORD\).

\begin{theorem}
	Assuming conditional superuniformity \eqref{eqn:SuperUniform}, let \(0 < \epsilon^*, \delta^* < 1\), \(r^* \geq 1\) be a positive integer, and \(t_{r^*}\) be a random variable that is the time when the \(r^*\)th rejection is made. 
    A GAI algorithm with a boost sequence \(\{\beta_i\}_{i \in \naturals \cup \{0\}}\) that satisfies conditions~\eqref{eqn:SupLORDBoostConditions} guarantees:
    \begin{align*}
        \FDX_{t_r^*}^{\epsilon^*} \leq \delta^*.
    \end{align*}
    \label{thm:SupLORDFDX}
\end{theorem}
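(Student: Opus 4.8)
The plan is to obtain the result by combining the two facts already assembled — the time-uniform confidence statement of \Cref{fact:PosthocBound} and the deterministic wealth guarantee of Lemma~\ref{lemma:FDPBarLemma} — and then closing the gap between them with an elementary step-function argument. Instantiate \Cref{fact:PosthocBound} with $\delta = \delta^*$ and define the good event
\[
    E ~\equiv~ \left\{\FDP(\rejset_k) \leq \overline{\FDP}(\rejset_k) \text{ for all } k \geq 1\right\},
\]
so that $\prob{E} \geq 1 - \delta^*$. A useful feature of this setup is that the bound in \Cref{fact:PosthocBound} holds \emph{simultaneously} over all $k$, so it is in force at the random rejection time $t_{r^*}$ and at every subsequent index with no further argument; this is exactly what lets the control window begin at a data-dependent time rather than a prespecified one, so the randomness of $t_{r^*}$ introduces no extra difficulty.

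The step I expect to require the most care — and the conceptual heart of the proof — is reconciling the scope of Lemma~\ref{lemma:FDPBarLemma}, which bounds $\overline{\FDP}$ only at \emph{rejection} times $k \geq t_{r^*}$, with the definition of $\FDX$, which is a supremum of $\FDP$ over \emph{all} times $k \geq t_{r^*}$. I would bridge this by establishing that $k \mapsto \FDP(\rejset_k)$ is a step function whose value changes only at rejection times: on any interval $t_r \leq k < t_{r+1}$ the set $\rejset_k$ is frozen, hence both the numerator $|\rejnull_k| = |\hypset_0 \cap \rejset_k|$ and the denominator $|\rejset_k| \vee 1$ are constant there. Consequently
\[
    \sup_{k \geq t_{r^*}} \FDP(\rejset_k) ~=~ \sup_{k \geq t_{r^*},\, R_k = 1} \FDP(\rejset_k),
\]
so the supremum defining $\FDX$ is attained at rejection times, precisely the indices that Lemma~\ref{lemma:FDPBarLemma} controls.

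With the reduction established, the conclusion follows by chaining the inequalities on $E$. For every rejection time $k \geq t_{r^*}$ we have, on $E$, that $\FDP(\rejset_k) \leq \overline{\FDP}(\rejset_k) \leq \epsilon^*$, where the first inequality is the definition of $E$ and the second is Lemma~\ref{lemma:FDPBarLemma}; together with the step-function identity this gives $\sup_{k \geq t_{r^*}} \FDP(\rejset_k) \leq \epsilon^*$ on all of $E$. Taking complements yields $\{\sup_{k \geq t_{r^*}} \FDP(\rejset_k) > \epsilon^*\} \subseteq E^c$, whence $\prob{\sup_{k \geq t_{r^*}} \FDP(\rejset_k) > \epsilon^*} \leq \prob{E^c} \leq \delta^*$, which is the asserted bound $\FDX_{t_{r^*}}^{\epsilon^*} \leq \delta^*$. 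The only remaining delicacy is the boundary between the strict inequality $\sup > \epsilon^*$ produced by this chaining and the non-strict event $\{\sup \geq \epsilon^*\}$ in the definition of $\FDX$; this is a routine point that can be closed by exploiting the slack in Lemma~\ref{lemma:FDPBarLemma} coming from the factor $\coef{\frac{1}{\delta^*}} > 1$, or simply by passing to the strict form of the confidence bound, either of which rules out the equality case at the attaining rejection time.
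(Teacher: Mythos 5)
Your proof is correct and takes essentially the same route as the paper, which obtains \Cref{thm:SupLORDFDX} precisely by instantiating \Cref{fact:PosthocBound} at $\delta = \delta^*$ and chaining it with Lemma~\ref{lemma:FDPBarLemma}; your step-function reduction is exactly the identity $\sup_{k \geq t_{r^*}} \FDP(\rejset_k) = \sup_{r \geq r^*} \FDP(\rejset_{t_r})$ that the paper relies on (used explicitly in its proof of \Cref{thm:SupremumBound}). If anything, you are more careful than the paper, which passes silently over both that reduction and the strict-versus-non-strict boundary at $\epsilon^*$ in the definition of $\FDX$ that you flag and patch.
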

\ifarxiv{}{\vspace{-0.4in}}
A default choice for selecting a boost sequence that satisfies conditions~\eqref{eqn:SupLORDBoostConditions} is as follows:
\begin{align}
    \beta_i = \begin{cases}
     \frac{\frac{\epsilon^*r^*}{\coef{\frac{1}{\delta^*}}} - 1}{r^*}  &  i \leq t_{r^* - 1},\\ \frac{\epsilon^*}{\coef{\frac{1}{\delta^*}}} & i > t_{r^* - 1}.
    \end{cases}
    \label{eqn:DefaultBoostSupLORD}
\end{align} This boost sequence is tight with both upper bounds in conditions~\eqref{eqn:SupLORDBoostConditions}, and it evenly distributes the upper bound in condition~\eqref{eqn:SumSupLordBoostCondition} across the initial wealth and first \(r^* - 1\) rejections.

\subsection{Numerical Simulations and Real Data Experiments for \(\FDX\) Control}
\label{subsec:FDXExperiments}

\revision{We primarily explore the power of our procedures through simulations. The reason for this is simple: when working with real data, we do not know which of our discoveries are true and which are false (otherwise, scientific hypothesis testing would be very easy; we would simply avoid the false ones).
Definitively determining which discoveries are true in real data is generally intractable 
and requires extensive effort (e.g., after a large-scale genome-wide association study, if we proclaim that we have discovered a gene that regulates our propensity for some disease, it may take many years of followup targeted `gene knockout' experiments to determine if our discovery was genuinely true, or a false alarm). In contrast, with synthetic hypotheses and data, we know which hypotheses are null by construction, and can evaluate the power of our algorithms and confirm they do satisfy control of the desired error metric. Nevertheless, we do compare our methods on a real world dataset, the International Mouse Phenotype Consortium (IMPC) dataset that was created by \cite{karp_prevalence_2017} and has been used to test other online multiple testing algorithms in prior work \citep{tian_online_2021}.  However, because of the aforementioned issues, we can only count the total number of discoveries as a proxy for each algorithm's power (proportional to the total number of \emph{true} discoveries).}

We perform experiments for several baseline methods that control false discoveries. 
\begin{enumerate}
    \item
\textbf{LORD and LORDFDX.} We use the version of \(\LORD\) and \(\LORD\FDX\) with a \steady\ spending schedule from \cite{ramdas2017online} that is more powerful than its original formulation in \(\JM\), and provably controls \(\FDR\) to be under level \(\ell\).
\item
\textbf{Alpha-spending.} This is simply the online Bonferroni rule that controls the probability of any false discovery to be under level \(\ell\). We also use the \steady\ schedule for spending.
\end{enumerate}

We compare \(\SupLORD\) against existing methods for false discovery control.  We choose \(\delta^*=0.05,\ \epsilon^*=0.15\) for \(\SupLORD\) and \(\LORD\FDX\). In \(\SupLORD\), we choose \(r^* = 30\) for our simulations \revision{and \(r^*=100\) for the IMPC dataset, since it contains many more hypotheses than our simulations --- we also use the default boost sequence in \eqref{eqn:DefaultBoostSupLORD}}. We set \(\ell=0.05\) for \(\LORD\) and alpha-spending. For \(\LORD\) and \(\LORD\FDX\), we choose \(\beta_0 = 0.1\ell,\ \beta_1=0.9\ell\) and \(\beta_i = \ell\) for all \(i > 1\). All methods use the default \(\{\spendcoef_i\}_{i \in \naturals}\) where \(\gamma_i \propto \frac{\log(i \vee 2)}{i\exp(\sqrt{\log i})}\) as stated in \Cref{sec:GAI}.

\paragraph{Synthetic generation of p-values.} Our simulations follow a similar setup to \cite{ramdas2018saffron}. We compute p-values based on a one sided test for Gaussian values i.e.\ we sample a Gaussian \(Z_i\) and compute one-sided p-values for whether \(\mu_i > 0\) using the statistic \(P_i = \Phi(-Z_i)\) where \(\Phi\) is the standard Gaussian c.d.f. Let \(\pi_i\) be the probability of the \(i\)th hypothesis being non-null. We generate our \(Z_i\) as follows:
\begin{align*}
	B_i \sim \Bern(\pi_i), \quad \quad Z_i \sim \begin{cases}
		\Gaussian(0, 1) & B_i = 0\\
		\Gaussian(\mu_i, 1) & B_i = 1.
	\end{cases}
\end{align*}

Here \(B_i\) is a Bernoulli variable that denotes whether the \(i\)th hypothesis is null. If \(B_i = 0\), then we sample \(Z_i\) from the null distribution, which is the standard Gaussian. If \(B_i = 1\), then we draw \(Z_i\) from the non-null distribution, which is a shifted standard Gaussian with mean \(\mu_i\). We may select different signal strengths \(\mu_i\), and non-null likelihoods \(\pi_i\) at each time step for our experiments. For our primary simulation, we choose constant values \(\mu, \pi\) such that \(\mu_i = \mu, \pi_i = \pi\) across all \(i\). Addition simulations with other ways of selecting \(\mu_i\) and \(\pi_i\) are discussed in \Cref{sec:AdditionalExperiments}.

Ultimately, we generate 200 experiments with 1000 hypotheses for each data setting. Standard errors are negligible in all of our numerical results, and we do not plot them.

\paragraph{Real data.} \revision{The IMPC dataset consists of a series of experiments that aims to identify which mouse phenotypes each mouse protein coding gene has an effect on. \cite{karp_prevalence_2017} analyze the raw experimental data, and produce a series of p-values corresponding to each pairwise combination of gene and phenotype. Thus, the null hypothesis of each experiment that knocking out a single gene \textit{i} has no effect on a single phenotype of interest \textit{j}. As a result, there are 172328 total hypotheses in the dataset. This method of experimentation naturally fits an online structure, because scientists may develop new phenotypes to test against or discover new genes to analyze over time, and consequently want be able to freely extend the set of hypotheses they are testing.}


\paragraph{Results for different alpha-investing algorithms.} In \Cref{fig:BaselineGaussian}, we observe that the only other online \(\FDX\) control algorithm, \(\LORD\FDX\), has significantly less power than \(\SupLORD\). In fact, \(\LORD\FDX\) is on par or worse than the Bonferroni bound of alpha-spending. Thus, \(\SupLORD\) is able to obtain more power by delaying control for a constant number of rejections. Similar results are shown in the HMM setting, which we defer, along with additional metrics, to \Cref{sec:AdditionalExperiments}. \revision{Further, these results are consistent with \(\SupLORD\) rejecting the most hypotheses on the IMPC data set in \Cref{table:RealDataFDX} --- the gains in power seem to translate to a real world setting.}

\begin{wraptable}{r}{0.3\linewidth}
    \centering
    \caption{\label{table:RealDataFDX}\revision{Comparing total discoveries (rejected hypotheses) by each algorithm on the IMPC data. \(\SupLORD\) rejects the most hypotheses.}}
    \begin{tabular}{l|r}
         \revision{\textbf{Method}} & \revision{\textbf{Rejections}}\\
         \hline
         \revision{Bonferroni} & \revision{879} \\
         \revision{\(\LORD\)} & \revision{12352} \\
         \revision{\(\LORD\FDX\)} & \revision{8} \\
         \revision{\(\SupLORD\)} & \revision{\textit{18793}}
    \end{tabular}
\end{wraptable}

Interestingly, these results also suggest a reversal of the previous notion that \(\FDX\) control was stricter than \(\FDR\) control. Previously, \(\LORD\FDX\) for controlling \(\FDX\) at some \(\delta^*\) was demonstrated by \(\JM\) to be less powerful than \(\LORD\) controlling \(\FDR\) at a level of \(\ell = \delta^*\) in experiments. This is due to \(\LORD\FDX\) being \(\LORD\) with an additional constraint i.e.\ if this additional constraint is violated, \(\LORD\FDX\) enters alpha-death. Hence, it is impossible for \(\LORD\FDX\) to make more rejections that \(\LORD\) when they are controlled at equivalent levels i.e.\ \(\ell = \delta^*\). On the other hand, \(\SupLORD\) uses the time-uniform bounds from \KR\ to establish control of \(\FDX\). \(\SupLORD\) suggests that controlling \(\FDX\) is not necessarily more restrictive than controlling \(\FDR\). Since high probability bounds may be of greater interest to the user in practice, \(\SupLORD\) demonstrates that controlling \(\FDX\) does not necessarily  sacrifice power.

\begin{figure}[h!]
    \centering
	\includegraphics[width=0.7\textwidth]{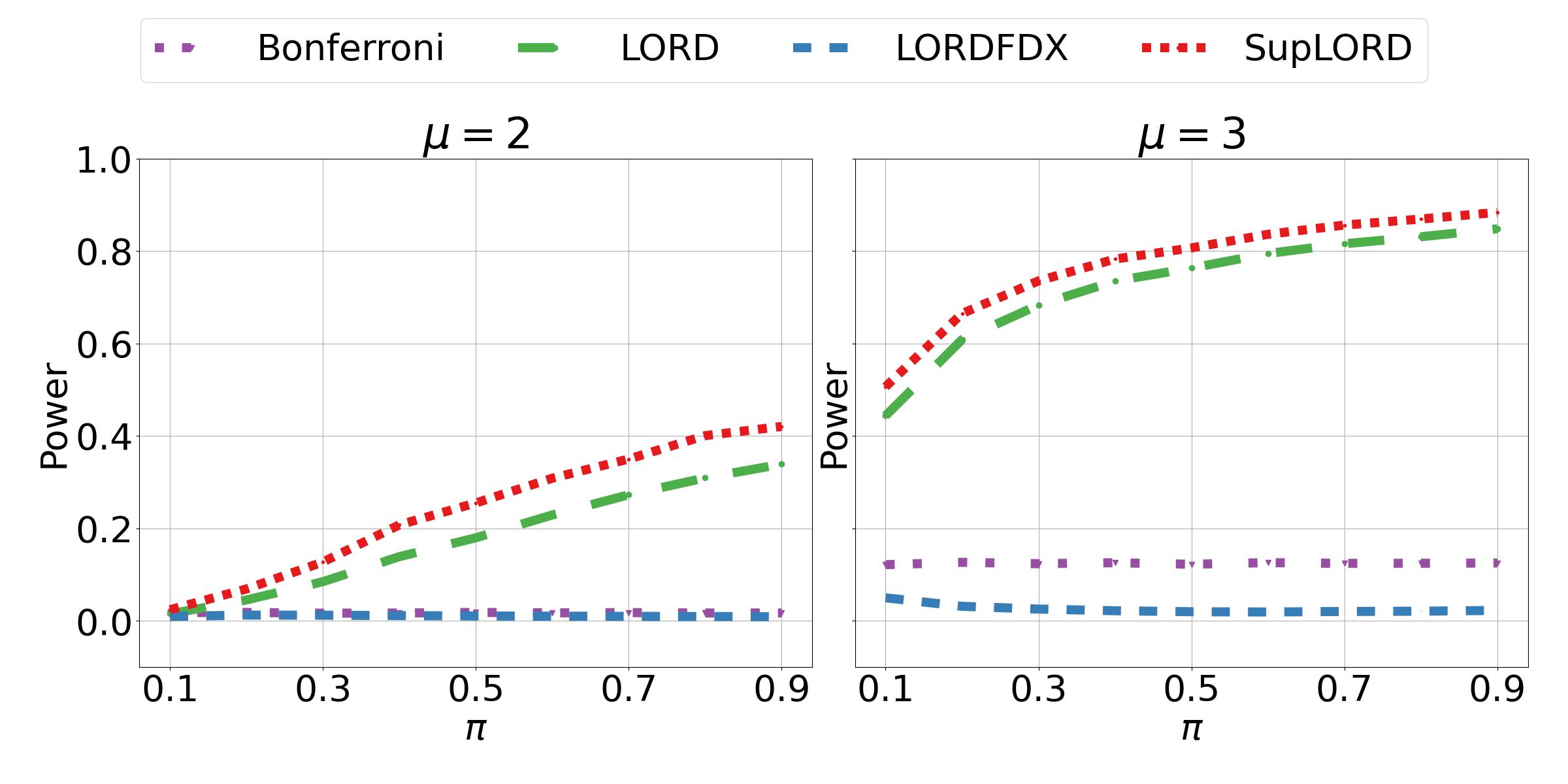} 
	\caption{Plot of non-null likelihood, \(\pi\), vs.  power for signal strengths of \(\mu \in \left\{2, 3\right\}\). \(\FDR\) is controlled at \(\ell =0.05\), and \(\SupLORD\) and \(\LORD\FDX\) are controlled at \(\FDX\) with \(\epsilon^*=0.15\)  at a level of \(\delta^*=0.05\). Experiment details are in \Cref{subsec:FDXExperiments}. \(\SupLORD\) consistently has higher power across all non-null likelihoods and signal strengths.}
	\label{fig:BaselineGaussian}
\end{figure}

\section{Dynamic Scheduling With Non-Monotone Spending Rules}
\label{sec:DynamicScheduling}

\paragraph{Under-utilized wealth leads to unnecessary conservativeness.}
Recall that the wealth value at each step, \(W(k)\), quantifies the error budget that can be allocated to the next alpha value. \Cref{subfig:IncreasingWealth} shows that, empirically, the wealth increases as more hypotheses are tested when using \steady\ with \(\SupLORD\) and \(\LORD\). To the best of our knowledge, this problem of \steady\ has not been addressed in previous work. The algorithm is not fully utilizing its error budget --- the rate at which the algorithm is making rejections and accumulating wealth is much higher than the rate which wealth is spent. While \aggressive\ does have better utilization of its wealth, it also does not distribute its spending as evenly as \steady, since it spends a constant percentage of its current wealth at each step. To remedy this, we derive a new spending schedule that uses wealth much more rapidly when wealth has been accumulated. At the same time, we do not want our algorithm to spend excessively when wealth is scarce. Thus, our new spending schedule is \textit{adaptive} to its wealth --- it adjusts its rate of spending to evenly spend as much wealth as possible while avoiding alpha-death.

\revision{At a technical level, the underutilization of wealth occurs because prior alpha-spending rules are ``monotone'' \eqref{eqn:MonotoneSchedule} functions of past rejections, effectively handicapping them by making them unable to adapt to the current wealth. This monotonicity restriction was primarily imposed in order to mathematically prove FDR control, instead of mFDR control, in a proof technique introduced by \citet{javanmard2018online}, and extended by all the followup works by Ramdas and coauthors. Our new proof technique avoids the requirement for monotonicity entirely, thus opening the possibility of designing smarter, wealth-adaptive spending rules, while still having FDX and FDR control.}

\subsection{Dynamic Scheduling Formulation}
To address this problem, we provide a method for altering \(\{\spendcoef_i\}_{i \in \naturals}\), called \textit{dynamic scheduling}, based on the wealth of the algorithm. The general paradigm of our new spending schedule is similar to the \steady\ spending schedule in \eqref{eqn:StaticSpendingSchedules}. In the \steady\ spending schedule, we can see that each alpha value is the sum of portions of the wealth boosts at each past rejection. The portion of the wealth boost of the \(i\)th rejection that the current alpha value spends is dependent on \(\gamma_{k - t_i}\), where \(k\) is our current time, and \(t_i\) is the time the \(i\)th rejection was made. Our new spending schedule follows the same paradigm. However, we use a spending sequence that changes based upon the wealth of the algorithm at the time of the rejection, which we denote \(\{\accelgamma_i\}_{i \in \naturals}\). To create this, we must first have a base spending sequence \(\{\spendcoef_i\}_{i \in \naturals}\). Then, we formulate this new spending sequence as follows:
\begin{align}
	\accelgamma_i(W, W(0)) &\equiv \begin{cases}
		\frac{\gamma_i^{\decaycoef \cdot \frac{W}{W(0)} \vee 1}}{\sum\limits_{j = 1}^\decaylen \gamma_j^{\decaycoef \cdot \frac{W}{W(0)} \vee 1}} & \text{if }\decaycoef \cdot \frac{W}{W(0)} > 1,\ i \leq \decaylen\ \text{(active)}\\
		0 & \text{if }\decaycoef \cdot \frac{W}{W(0)} > 1,\ i > \decaylen\ \text{(active and finished)}\\
		\gamma_i&\text{else (inactive)},
	\end{cases}
	\label{eqn:DynamicGammaFormulation}
\end{align}
where \(W\) and \(W(0)\) are the wealth of the algorithm at the time of rejection and the beginning, respectively. \(\accelgamma_i\) adapts to wealth by switching between the active and inactive cases. If the wealth is high when making a rejection, \(\accelgamma_i\) goes into the active case where it spends wealth at a more rapid pace. Otherwise, it falls back to spending according to \(\gamma_i\). \Cref{subfig:AccelCoefVisualization} illustrates this adaptivity. \(\decaycoef \in \reals^+\) controls the pace at which \(\accelgamma_i\) spends wealth as a function the algorithm's current wealth. \(\decaylen \in \naturals\) controls the length of time before \(\accelgamma_i\) has exhausted the wealth earned from its rejection --- the first \(\decaylen\) elements of \(\gamma_i\) are normalized to create \(\accelgamma_i\). Larger \(\decaycoef\) and smaller \(\decaylen\) increases spending in the active case. 
Note that \(\accelgamma_i\) satisfy conditions~\eqref{eqn:SpendingConditions}, making it a valid spending sequence.

\begin{figure}
    \centering
    \subfigure[Wealth of different GAI algorithms (setting specified in \Cref{subsec:FDXExperiments}). The line is the mean over 200 trials and the shaded area is a 95\% confidence set. \(\SupLORD\) and \(\LORD\) both have increasing wealth.]{
        \centering
        \includegraphics[width=0.45\textwidth]{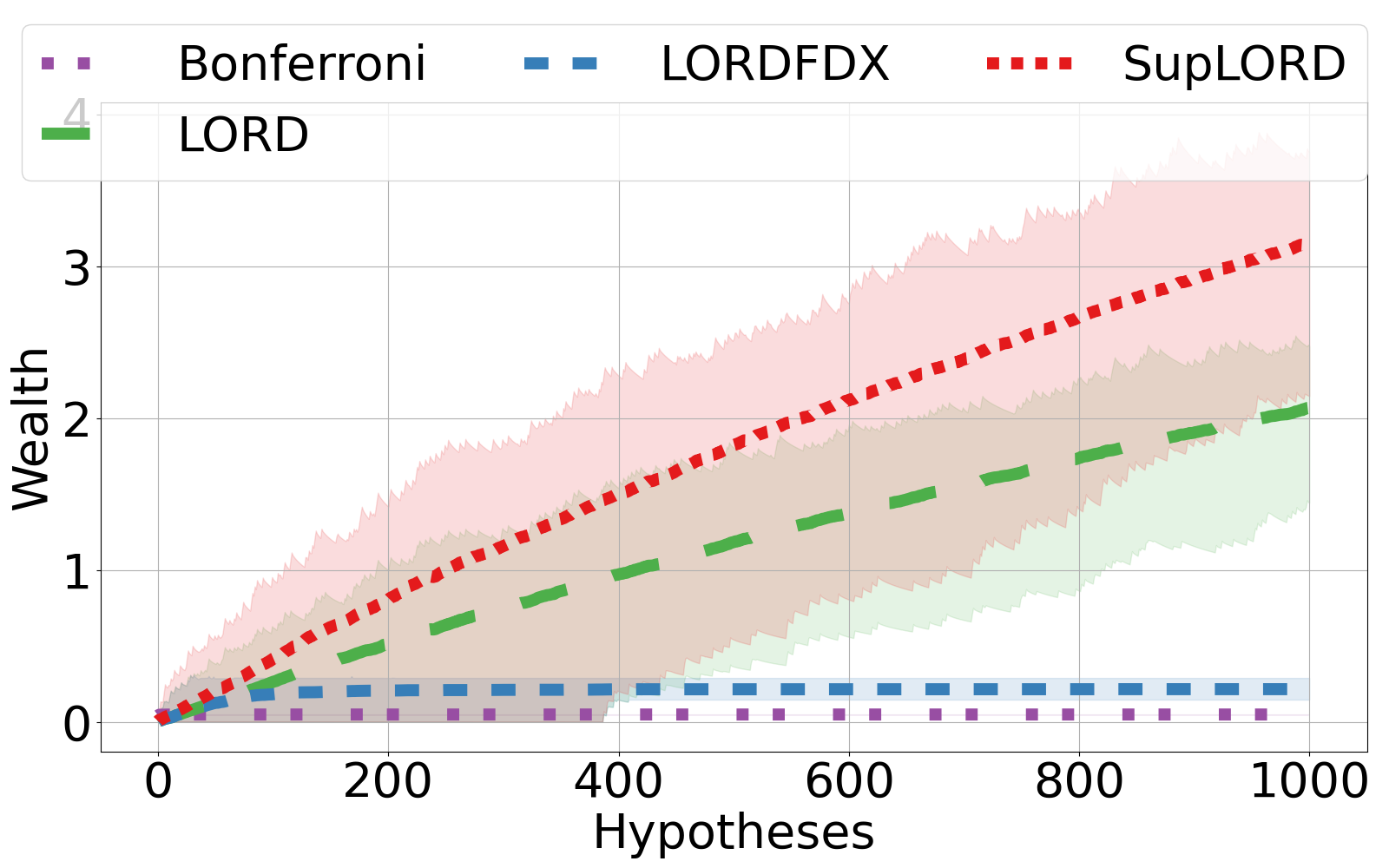}
        \label{subfig:IncreasingWealth}
    }\qquad\subfigure[Visualization of \(\accelgamma_i\) vs \(\spendcoef_i\) in \eqref{eqn:DynamicGammaFormulation}. \(\accelgamma_i\) is only active for \(\rho\) duration and is much larger than \(\gamma_i\) during that duration, which results in increased spending.]{
        \centering
        \includegraphics[width=0.45 \textwidth]{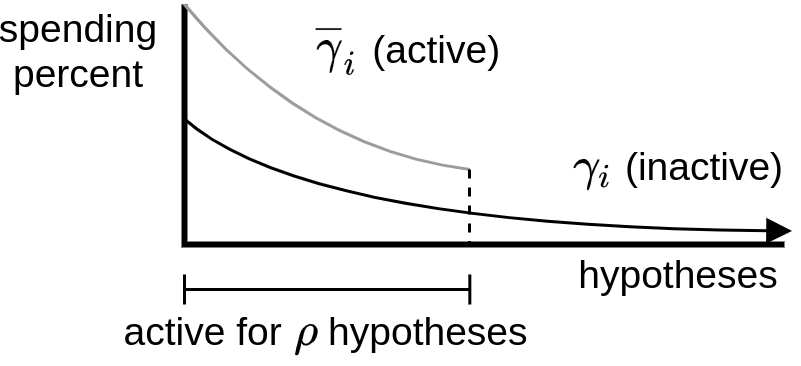}
        \label{subfig:AccelCoefVisualization}
    }
    \caption{Visualization of concepts pertaining to dynamic scheduling.}
    \label{fig:DynamicGammaPicture}
\end{figure}

Thus, using \(\accelgamma_i\) allows for the spending schedule to quickly use up entire boosts in wealth from rejections if the current wealth of the algorithm is large. We define our dynamic alpha schedule as:
\begin{align}
\alpha_k = \accelgamma_k(W(0), W(0))\cdot \beta_0 + \sum\limits_{i = 1}^{\left|\rejset_{k - 1}\right|} \accelgamma_{k - t_i}(W(t_i), W(0))\cdot \beta_{t_i}.
\end{align} We will show through simulation results that this method does solve the problem of wealth under-utilization and increases the power of \(\SupLORD\) as a result.

\subsection{Numerical Simulations and Real Data Experiments for Dynamic Algorithms}
\ifarxiv{}{\vspace{-5pt}}
\label{subsec:DynamicExperiments}
\paragraph{Spending schedules.}
We compare the differences in \steady, \aggressive, and dynamic scheduling on \(\SupLORD\). We choose \(\epsilon^*=0.15,\ \delta^*=0.05,\ r^*=30\) \revision{as the parameters for simulations, and \(r^*=100\) for the IMPC dataset} in all \(\SupLORD\) algorithms. We use the default boost sequence and spending sequence for all three schedules.
\ifarxiv{}{\vspace{-5pt}}
\paragraph{Results for comparing spending schedules.}
\begin{figure}
	\centering
	\subfigure[Non-null likelihood $\pi$ vs. power]{
		\includegraphics[width=0.58\textwidth]{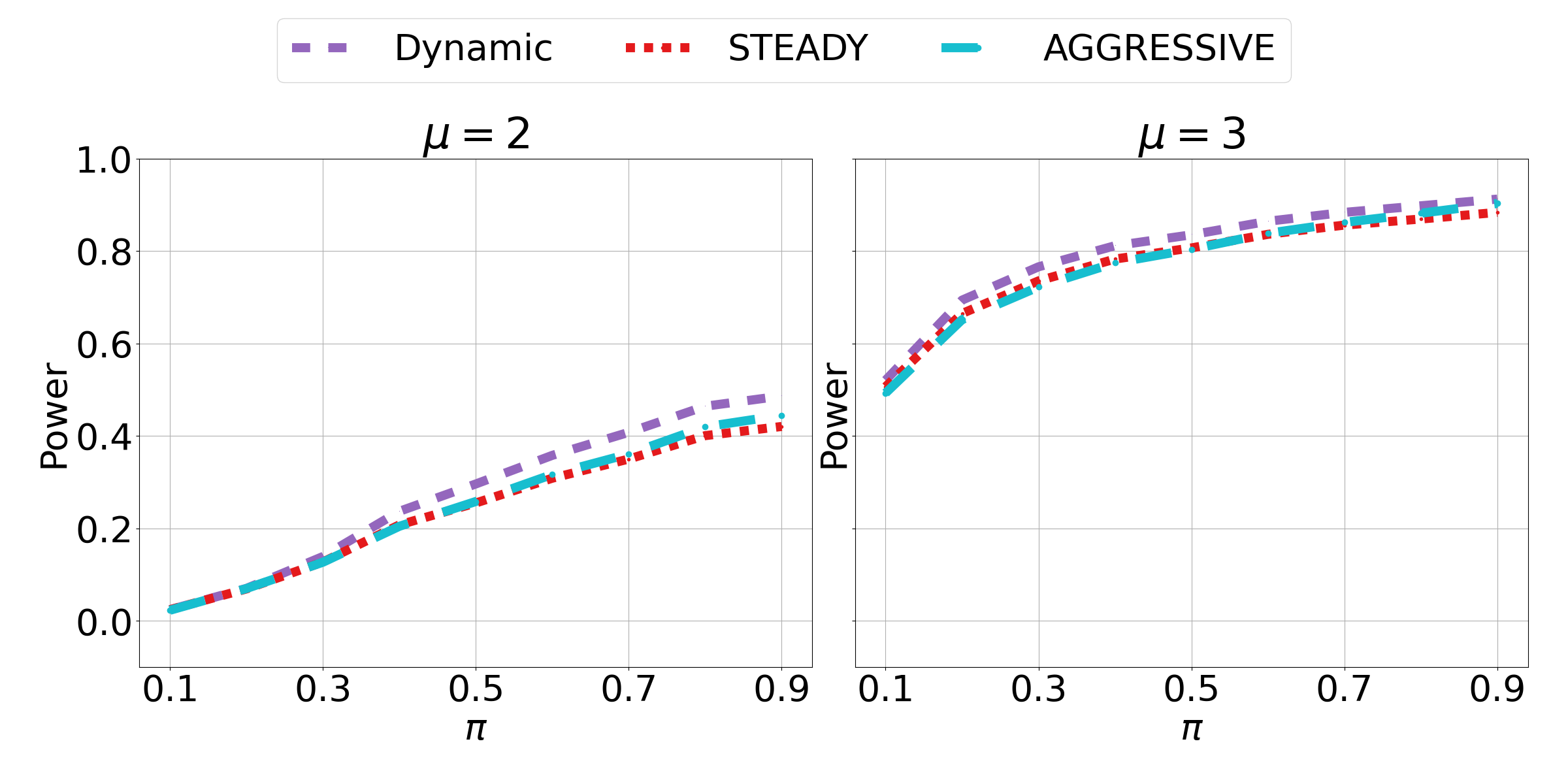}
		\label{subfig:DynamicGaussian}
	} \subfigure[Wealth vs. time]{
		\centering 
		\includegraphics[width=0.37\textwidth]{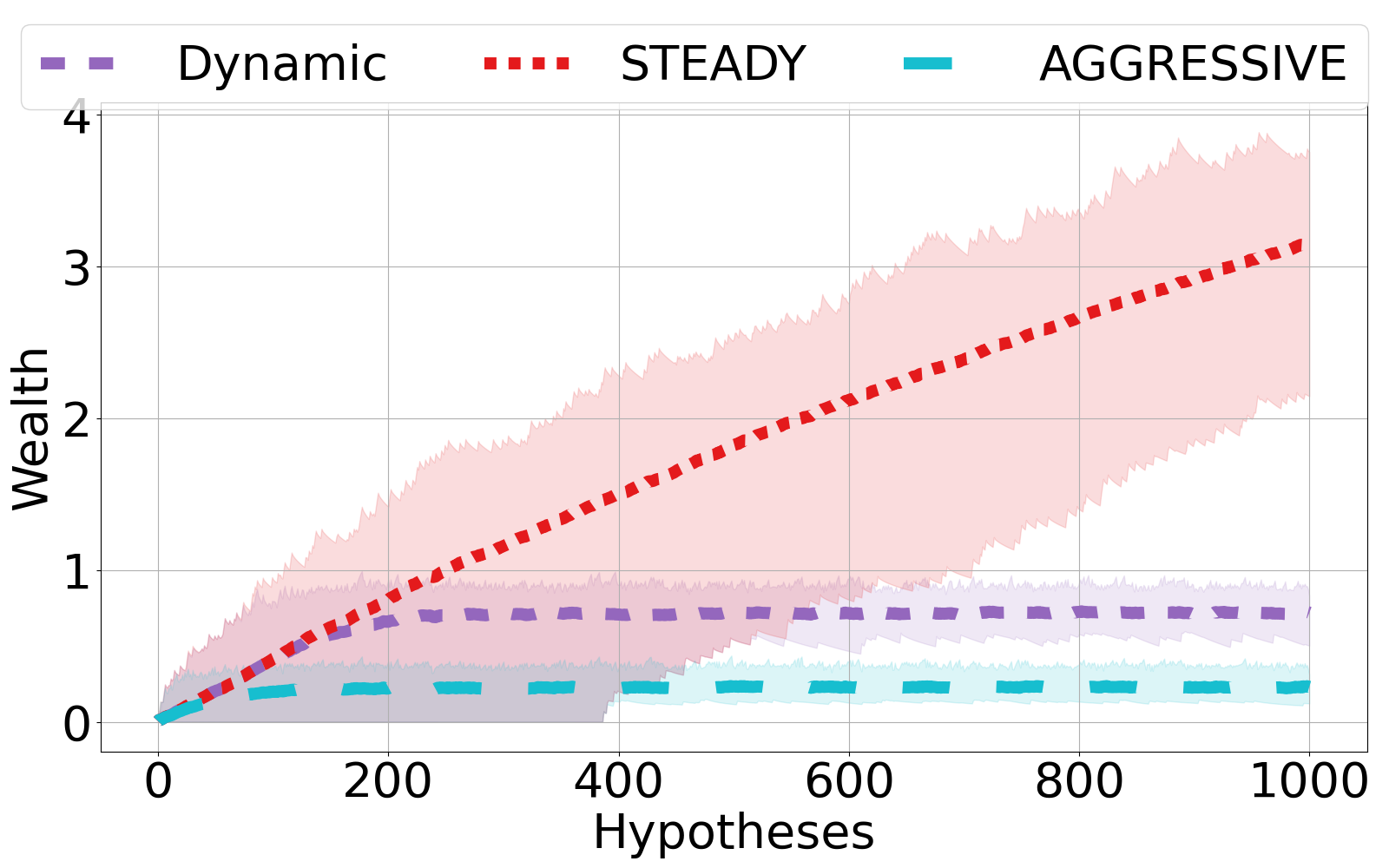}
		\label{subfig:DynamicGaussianWealth}
	}
	\caption{The left plot compares different spending schedule choices for \(\SupLORD\) plotted by likelihood of non-null hypotheses, \(\pi\), vs. power for signal strengths of \(\mu \in \left\{2, 3\right\}\) in the manner specified in \Cref{subsec:FDXExperiments}. Dynamic scheduling provides a consistent (albeit small) power improvement  over \steady\ and \aggressive\ \(\SupLORD\). The right plot shows that the dynamic scheduling does not hoard wealth like \steady, but saves more than \aggressive.}
\end{figure}

\begin{wraptable}{r}{0.3\linewidth}
    \centering
    \caption{\label{table:RealDataDynamic}\revision{Rejections made by each schedule on the IMPC data. Dynamic scheduling has the most rejections.}}
    \begin{tabular}{l|r}
         \revision{\textbf{Schedule}} & \revision{\textbf{Rejections}}\\
         \hline
         \revision{\steady} & \revision{18519} \\
         \revision{\aggressive} & \revision{15416} \\
         \revision{dynamic} & \revision{\textit{18793}}
    \end{tabular}
\end{wraptable}

\Cref{subfig:DynamicGaussian} shows that using a dynamic schedule does increase power when the signal and likelihood of the non-null hypotheses are larger, and does not do worse than the \steady\ or \aggressive\ when the signal and likelihood of non-nulls are low. This corresponds to the formulation of the \(\accelgamma_i\), as it falls back to \(\gamma_i\) if the algorithm's wealth is low. In the largest settings of non-null likelihood and signal, \aggressive\ catches up to dynamic scheduling in power. \revision{Admittedly, dynamic scheduling is limited in its power gains, since it only allocates existing wealth more effectively. Yet, it still outperforms both baseline schedules across simulation settings --- we illustrate that dynamic scheduling generally produces larger alpha values than either baseline schedule in \Cref{sec:DynamicSchedulingExp}. Further, a dynamic schedule also makes more rejections than other schedules on the IMPC dataset in \Cref{table:RealDataDynamic}, which is consistent with its power improvement in simulations. 
Thus, a dynamic schedule fully utilizes wealth and is a good default for practitioners.}

\ifarxiv{}{\vspace{-10pt}}
\section{Improving \(\FDR\) Control Through \(\supFD\) Control}
\label{sec:ControllingError}

Finally, we illustrate the connections between \(\FDR\), \(\FDX\), and \(\supFD\). We use additional results from \(\KR\) to prove that \(\SupLORD\) controls the \(\supFD\), and by extension, \(\FDR\). These \(\FDR\) results close the gap between \(\FDR\) control and \(\mFDR\) control~\eqref{def:mfdr}, which was first introduced by \FS\ as a proxy for \(\FDR\).  \FS\ showed that  \(\mFDR(\rejset_\tau)\), where \(\tau\) is a stopping time with respect to $\{\filtration_k\}$, could be controlled under assumption~\eqref{eqn:SuperUniform}. One example of a stopping time is the time of the $k$-th rejection, for some fixed $k \in \naturals$.
Successive works (\JM, \citealt{ramdas2017online}) have managed to show \(\FDR\) control with more restrictive assumptions of \eqref{eqn:IndSuperUniform} and monotone spending schedules.\footnote{Technically, \JM\ do have a theorem that proves \(\FDR\) control for dependent p-variables, but it is dependent on a strict uniformity assumption and also suffers from alpha-death since the proportion of wealth it can spend at each step is inversely proportional to the logarithm of the number of hypotheses that have been tested.} In addition, they also have only controlled at \(\FDR\) fixed times, unlike the stopping time control of \(\mFDR\) proved by \FS. Thus, \(\SupLORD\) completes the original goal of \FS\ by showing control, under the same assumptions, on the actual desired metric, \(\FDR\), for stopping times. To prove this, we first note \(\SupLORD\) provides a family of \(\FDX\) bounds by \Cref{fact:PosthocBound}.
\ifarxiv{}{\vspace{-5pt}}
\begin{corollary}[Corollary of \Cref{thm:SupLORDFDX}]
	Assuming conditional superuniformity \eqref{eqn:SuperUniform}, any GAI algorithm that satisfies conditions~\eqref{eqn:SupLORDBoostConditions} also ensures the following family of bounds for any \(\delta \in (0, 1)\):
	\ifarxiv{}{\vspace{-5pt}}
	\begin{align*}
		\FDX_{t_{r^*}}\left(\frac{\coef{\frac{1}{\delta}}}{\coef{\frac{1}{\delta^*}}} \epsilon^*\right) ~\leq~ \delta. 
	\end{align*} 
	\label{corollary:SupLORDFamilyFDXBound}
\end{corollary}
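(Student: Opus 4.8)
The plan is to re-run the argument behind \Cref{thm:SupLORDFDX}, but to carefully decouple the two distinct roles that $\delta^*$ plays in that theorem: it appears in the boost conditions~\eqref{eqn:SupLORDBoostConditions} (which shape the algorithm), and it appears as the confidence level at which \Cref{fact:PosthocBound} is invoked. The key observation is that the algorithmic guarantee supplied by \Cref{lemma:FDPBarLemma} is a \emph{deterministic} statement about the estimator $\widehat{\FDP}$ that does not reference the confidence level of the \KR\ bound at all. So I would first extract this deterministic bound, and then feed an \emph{arbitrary} $\delta \in (0,1)$ into \Cref{fact:PosthocBound}, trading the confidence level $\delta$ against the effective $\FDP$ threshold.

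First, since the boost sequence satisfies conditions~\eqref{eqn:SupLORDBoostConditions}, \Cref{lemma:FDPBarLemma} gives $\overline{\FDP}(\rejset_k) \leq \epsilon^*$ at every rejection time $k \geq t_{r^*}$. Unfolding the definition $\overline{\FDP}(\rejset_k) = \coef{\frac{1}{\delta^*}}\cdot\widehat{\FDP}(\rejset_k)$, this is exactly the deterministic inequality
\[
  \widehat{\FDP}(\rejset_k) \leq \frac{\epsilon^*}{\coef{\frac{1}{\delta^*}}}
  \qquad\text{at every rejection time } k \geq t_{r^*},
\]
which holds on every sample path and involves no probabilistic statement --- in particular it is untouched by the choice of $\delta$ below.

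Next, I would fix an arbitrary $\delta \in (0,1)$ and apply \Cref{fact:PosthocBound} with coefficient $\coef{\frac{1}{\delta}}$ (rather than $\coef{\frac{1}{\delta^*}}$): on an event of probability at least $1-\delta$, we have $\FDP(\rejset_k) \leq \coef{\frac{1}{\delta}}\,\widehat{\FDP}(\rejset_k)$ simultaneously for all $k \geq 1$. Intersecting this with the deterministic bound from the previous step, at every rejection time $k \geq t_{r^*}$ we get $\FDP(\rejset_k) \leq \coef{\frac{1}{\delta}}\cdot \frac{\epsilon^*}{\coef{\frac{1}{\delta^*}}} = \frac{\coef{\frac{1}{\delta}}}{\coef{\frac{1}{\delta^*}}}\epsilon^*$. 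To promote this from rejection times to all times $k \geq t_{r^*}$, I would invoke the fact that $\FDP(\rejset_k) = |\rejnull_k|/(|\rejset_k|\vee 1)$ is piecewise constant in $k$ and changes only when a rejection is made, so $\sup_{k \geq t_{r^*}}\FDP(\rejset_k) = \sup_{r \geq r^*}\FDP(\rejset_{t_r})$. Hence on the high-probability event $\sup_{k\geq t_{r^*}}\FDP(\rejset_k) \leq \frac{\coef{\frac{1}{\delta}}}{\coef{\frac{1}{\delta^*}}}\epsilon^*$, which rearranges to the claimed bound $\FDX_{t_{r^*}}\!\left(\frac{\coef{\frac{1}{\delta}}}{\coef{\frac{1}{\delta^*}}}\epsilon^*\right)\leq\delta$.

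The only delicate point --- and the one place the proof is more than bookkeeping --- is the mismatch in domains: \Cref{lemma:FDPBarLemma} controls $\widehat{\FDP}$ only at rejection times (indeed $\sum_{i \le k}\alpha_i$ keeps growing between rejections, so $\widehat{\FDP}$ itself need not stay below the threshold at non-rejection times), whereas \Cref{fact:PosthocBound} bounds the true $\FDP$ at all $k$. The reconciliation is precisely the piecewise-constancy of the true $\FDP$, which lets us restrict attention to rejection times where $\widehat{\FDP}$ is controlled. This is the same subtlety already handled in \Cref{thm:SupLORDFDX}, and the strict-versus-weak inequality at the boundary threshold is dealt with identically; everything else is a direct substitution of $\delta$ for $\delta^*$ in the single invocation of the \KR\ bound.
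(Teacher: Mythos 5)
Your proof is correct and follows the same route the paper intends for this corollary: conditions~\eqref{eqn:SupLORDBoostConditions} give, via Lemma~\ref{lemma:FDPBarLemma}, the deterministic path-wise bound \(\widehat{\FDP}(\rejset_k) \leq \epsilon^*/\coef{\frac{1}{\delta^*}}\) at rejection times \(k \geq t_{r^*}\), and a single invocation of \Cref{fact:PosthocBound} at the new level \(\delta\) then trades confidence against the effective threshold, exactly as in \Cref{thm:SupLORDFDX} with \(\delta\) substituted for \(\delta^*\). Your care in passing from rejection times to all \(k \geq t_{r^*}\) via piecewise constancy of the \(\FDP\) is precisely the step the paper itself uses in proving \Cref{thm:SupremumBound} (the identity \(\sup_{k \geq t_{r^*}}\FDP(\rejset_k) = \sup_{r \geq r^*}\FDP(\rejset_{t_r})\)), so nothing is missing.
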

\ifarxiv{}{\vspace{-15pt}}
Before we state our control of \(\supFD\), we will introduce the following constant:
\ifarxiv{}{\vspace{-5pt}}
\begin{align*}
	c &= \int\limits_{0}^1 \coef{\frac{1}{\delta}}\ d \delta \approx 1.419 ~.
\end{align*}
\ifarxiv{}{\vspace{-15pt}}
\begin{theorem}
	Assuming conditional superuniformity \eqref{eqn:SuperUniform}, any GAI algorithm that satisfies conditions~\eqref{eqn:SupLORDBoostConditions} ensures the following bound holds:
	\ifarxiv{}{\vspace{-10pt}}
	\begin{align*}
		\supFD_{t_{r^*}} ~\leq~ \frac{c\epsilon^*}{ \coef{\frac{1}{\delta^*}}}.
	\end{align*}
	\label{thm:SupremumBound}
\end{theorem}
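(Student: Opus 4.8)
The plan is to reduce the claim to the family of time-uniform tail bounds already furnished by Corollary~\ref{corollary:SupLORDFamilyFDXBound} and then integrate them against the layer-cake formula for the expectation. Write $M \equiv \sup_{k \geq t_{r^*}} \FDP(\rejset_k)$, so that $\supFD_{t_{r^*}} = \expect[M]$ and $M \in [0,1]$ since every $\FDP$ lies in $[0,1]$. Set $A \equiv \epsilon^*/\coef{\frac{1}{\delta^*}}$ and $g(\delta) \equiv A\,\coef{\frac{1}{\delta}}$. Then Corollary~\ref{corollary:SupLORDFamilyFDXBound} is exactly the statement that $\prob{M \geq g(\delta)} \leq \delta$ for every $\delta \in (0,1)$. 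The crucial observation is that the target constant factors as $c A = A \int_0^1 \coef{\frac{1}{\delta}}\, d\delta = \int_0^1 g(\delta)\, d\delta$, so it suffices to prove $\expect[M] \leq \int_0^1 g(\delta)\, d\delta$; in other words, the whole theorem is an instance of the general principle that a parametric high-probability bound integrates to a bound in expectation, and the precise constant $c$ emerges because it is literally the integral of $\coef{\frac{1}{\delta}}$ over $\delta$.

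To carry this out I would first record that $g$ is a continuous, strictly decreasing bijection from $(0,1)$ onto $(A,\infty)$. Substituting $x = \log(1/\delta)$ reduces the monotonicity to showing that $\phi(x) = x/\log(1+x)$ is strictly increasing on $(0,\infty)$, which follows by checking that the numerator of $\phi'$, namely $\log(1+x) - x/(1+x)$, is positive for $x>0$ (it vanishes at $0$ and has positive derivative $x/(1+x)^2$); the limits $\coef{\frac1\delta} \to 1$ as $\delta \to 1^-$ and $\coef{\frac1\delta}\to\infty$ as $\delta\to0^+$ then pin down the range. Let $h = g^{-1}$. Using the layer-cake identity $\expect[M] = \int_0^\infty \prob{M \geq t}\, dt$, I would split the integral at $t = A$: on $(0,A)$ bound $\prob{M\geq t} \leq 1$ to contribute at most $A$, and on $(A,\infty)$ use $\prob{M \geq t} \leq h(t)$, which is legitimate because for $t > A$ we may take $\delta = h(t) \in (0,1)$ in the corollary.

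The remaining step is the change of variables $t = g(\delta)$ in $\int_A^\infty h(t)\, dt$, which (since $h(g(\delta)) = \delta$, and $t=A,\infty$ correspond to $\delta = 1, 0$) turns it into $-\int_0^1 \delta\, g'(\delta)\, d\delta$; integrating by parts and combining with the boundary term $g(1)=A$ yields $\int_A^\infty h(t)\,dt = \int_0^1 g(\delta)\,d\delta - A$, so that $\expect[M] \leq A + \left(\int_0^1 g - A\right) = \int_0^1 g(\delta)\,d\delta = cA$, as desired. I expect the only delicate point to be the boundary term in the integration by parts, namely verifying $\lim_{\delta\to 0^+}\delta\,g(\delta)=0$; this holds because $\coef{\frac1\delta}$ grows like $\log(1/\delta)/\log\log(1/\delta)$, which is $o(1/\delta)$, so the $\delta$-factor wins. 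Everything else is a careful but routine manipulation of monotone functions and improper integrals, with the integrability needed for the substitution guaranteed by the finiteness of $c = \int_0^1 \coef{\frac1\delta}\,d\delta \approx 1.419$.
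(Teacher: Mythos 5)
Your proof is correct, and it rigorously carries out the route the paper only gestures at in the main text (``we can achieve this expectation by integrating over the family of \(\FDX\) bounds described in Corollary~\ref{corollary:SupLORDFamilyFDXBound}''), whereas the paper's formal proof in \Cref{sec:SupremumBoundProof} is organized differently: it first establishes the algorithm-independent expected supremum ratio bound \(\expect[\sup_{k \in \naturals} \FDP(\rejset_k)/\widehat{\FDP}_a(\rejset_k)] \leq c_a\) (\Cref{fact:ExpectedSupremumBound}, quoted from an earlier revision of \KR\ and re-derived by integrating the martingale tail bound of \Cref{fact:MartingaleLemma}), and only afterwards applies the deterministic guarantee \(\sup_{r \geq r^*} \widehat{\FDP}_a(\rejset_{t_r}) \leq \epsilon^*/\coefext{a}{\frac{1}{\delta^*}}\) from Lemma~\ref{lemma:GeneralFDPBarLemma}, decoupling the two suprema via \(\sup (XY) \leq (\sup X)(\sup Y)\). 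You instead inject the deterministic bound first, once per confidence level, through Corollary~\ref{corollary:SupLORDFamilyFDXBound}, and then integrate the resulting family of tail bounds on \(M = \sup_{k \geq t_{r^*}} \FDP(\rejset_k)\) by the layer-cake formula, splitting at \(t = A\) and inverting \(g\). The underlying calculus is the same in both cases --- the paper's proof of \Cref{fact:ExpectedSupremumBound} performs exactly your change of variables (characterizing \(x = \coefext{a}{\frac{1}{\delta}}\)) and checks the same boundary limits \(\lim_{\delta \to 0^+} \delta \coefext{a}{\frac{1}{\delta}} = 0\) and \(\lim_{\delta \to 1^-} \delta \coefext{a}{\frac{1}{\delta}} = 1\) that you verify for the integration by parts --- and your supporting steps (monotonicity of \(x/\log(1+x)\), the bijection of \(g\) onto \((A,\infty)\), bounding \(\prob{M \geq t} \leq 1\) below \(A\) and by \(h(t)\) above) are all sound. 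What each approach buys: yours is self-contained relative to the results stated in the main body and never needs the expected-sup-ratio fact from the earlier \KR\ revision, making the constant \(c\) appear transparently as \(\int_0^1 g(\delta)\, d\delta\); the paper's factorization isolates \(c_a\) as a clean, reusable, algorithm-independent quantity so that the GAI boost conditions~\eqref{eqn:SupLORDBoostConditions} enter only through the deterministic bound on \(\widehat{\FDP}_a\), and it directly yields the general-\(a\) statement (\Cref{thm:GeneralSupremumBound}) --- though your argument generalizes verbatim by replacing \(\coef{\cdot}\) with \(\coefext{a}{\cdot}\).
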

\ifarxiv{}{\vspace{-15pt}}
A detailed proof is provided in \Cref{sec:SupremumBoundProof} of the supplement. Intuitively, we can achieve this expectation by integrating over the family of \(\FDX\) bounds described in Corollary~\ref{corollary:SupLORDFamilyFDXBound}. Thus, this theorem establishes a connection between controlling \(\FDX\) and \(\supFD\) for \(\SupLORD\). Unlike prior \(\FDR\) bounds which are only valid at fixed times, \Cref{thm:SupremumBound} provides a bound on \(\FDR\) for stopping times. Formally, a stopping time \(\tau\) is a random variable where the event \(\tau = k\) is measurable in \(\filtration_k\) for all \(k\). Due to the supremum in \Cref{thm:SupremumBound}, we may derive the following corollary:
\ifarxiv{}{\vspace{-5pt}}
\begin{corollary}
	Assuming conditional superuniformity \eqref{eqn:SuperUniform}, \(\tau\) is a stopping time that satisfies \(\tau \geq  t_{r^*}\), any GAI algorithm that satisfies conditions~\eqref{eqn:SupLORDBoostConditions} ensures the following bounds hold:
	\begin{align*}
		\FDR(\rejset_\tau) ~\leq~ \frac{c\epsilon^*}{ \coef{\frac{1}{\delta^*}}}.
	\end{align*}
	\label{corollary:FDRStoppingTime}
\end{corollary}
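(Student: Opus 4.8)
The plan is to reduce Corollary~\ref{corollary:FDRStoppingTime} directly to Theorem~\ref{thm:SupremumBound}, exploiting the fact that $\supFD_{t_{r^*}}$ is a \emph{simultaneous} (time-uniform) bound rather than a fixed-time one. First I would recall the two relevant definitions: $\FDR(\rejset_\tau) = \expect[\FDP(\rejset_\tau)]$, and $\supFD_{t_{r^*}} = \expect\left[\sup_{k \geq t_{r^*}} \FDP(\rejset_k)\right]$. The entire argument then hinges on a single pointwise observation: since $\tau$ is a stopping time satisfying $\tau \geq t_{r^*}$ almost surely, on every sample path the index $\tau$ lies in the range $\{k : k \geq t_{r^*}\}$ over which the supremum is taken, so $\FDP(\rejset_\tau)$ is one of the terms being supremized.

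Concretely, I would establish the almost-sure domination
\begin{align*}
    \FDP(\rejset_\tau) ~\leq~ \sup_{k \geq t_{r^*}} \FDP(\rejset_k),
\end{align*}
and then take expectations on both sides and invoke Theorem~\ref{thm:SupremumBound}:
\begin{align*}
    \FDR(\rejset_\tau) ~=~ \expect[\FDP(\rejset_\tau)] ~\leq~ \expect\left[\sup_{k \geq t_{r^*}} \FDP(\rejset_k)\right] ~=~ \supFD_{t_{r^*}} ~\leq~ \frac{c\epsilon^*}{\coef{\frac{1}{\delta^*}}},
\end{align*}
which is exactly the claimed bound. No new probabilistic machinery is needed; all the analytic content (integrating the family of $\FDX$ bounds from Corollary~\ref{corollary:SupLORDFamilyFDXBound} against $\delta$ to produce the constant $c$) has already been absorbed into Theorem~\ref{thm:SupremumBound}.

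The only point requiring care, and the place I would spend a sentence verifying, is measurability: I need $\FDP(\rejset_\tau)$ to be a bona fide random variable so that $\expect[\FDP(\rejset_\tau)]$ is well-defined. This follows routinely from writing $\FDP(\rejset_\tau) = \sum_{k} \FDP(\rejset_k)\,\ind{\tau = k}$, where each event $\{\tau = k\}$ is $\filtration_k$-measurable because $\tau$ is a stopping time and each $\FDP(\rejset_k)$ is $\filtration_k$-measurable since $\{\rejset_k\}$ is adapted; moreover, on the event $\{\tau = k\}$ we have $k \geq t_{r^*}$, which is precisely what guarantees the pointwise domination above. I do not anticipate a genuine obstacle here: the substantive difficulty lives entirely in Theorem~\ref{thm:SupremumBound}, and this corollary is the clean payoff of having a supremum-type (as opposed to pointwise) control, exactly the structural advantage over prior fixed-time $\FDR$ guarantees emphasized in the introduction.
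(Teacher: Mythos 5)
Your proposal is correct and matches the paper's own argument: the paper derives Corollary~\ref{corollary:FDRStoppingTime} directly from \Cref{thm:SupremumBound} via exactly the pointwise domination $\FDP(\rejset_\tau) \leq \sup_{k \geq t_{r^*}} \FDP(\rejset_k)$ (valid path-by-path since $\tau \geq t_{r^*}$), followed by taking expectations. Your added measurability check via the decomposition $\FDP(\rejset_\tau) = \sum_k \FDP(\rejset_k)\,\ind{\tau = k}$ is a harmless refinement of a step the paper leaves implicit.
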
 
\ifarxiv{}{\vspace{-15pt}}
Thus, using \(\SupLORD\) provides the benefit of maintaining \(\FDR\) control even when the user may desire to adaptively stop testing hypotheses, and lends flexibility to its usage as a result. Since this result follows directly from \Cref{thm:SupremumBound}, it does not require any extra assumptions besides conditional superuniformity \eqref{eqn:SuperUniform}. In contrast, prior online FDR methods (\citealt{ramdas2017online,ramdas2018saffron}, \JM) require stronger assumptions of independence \eqref{eqn:IndSuperUniform} and monotone spending schedules \eqref{eqn:MonotoneSchedule}. Further, they only control \(\FDR\) at fixed times. Consequently, Corollary~\ref{corollary:FDRStoppingTime} improves upon prior work in both the generality of its assumptions and the strength of its guarantee. Additional \(\SupLORD\) guarantees are discussed in \Cref{sec:FDRSupLORDProof} (tighter \(\FDR\) control at fixed times) and \Cref{sec:mFDRSupLORDProof} (\(\mFDR\) control at stopping times).
\ifarxiv{}{\vspace{-0.1in}}
\section{Related Work on Online Multiple Testing}
\revision{Online multiple testing was introduced by \FS~\citep{foster2008alpha}, which formulated the problem and setup, and the first algorithm better than a Bonferroni bound through the paradigm of ``alpha-investing''. Notably, \FS\ introduced the notion of \(\mFDR\) and proved control at stopping times for \(\mFDR\) in this work because they found \(\FDR\) control difficult to prove. \cite{aharoni2014generalized} extended the guarantees from \(\FS\) to a broader class of algorithms. The \(\LORD\) algorithm formulated by \(\JM\)~\citep{javanmard2018online} is a particularly powerful type of GAI algorithm, and \(\JM\) showed it could control \(\FDR\) at all fixed times --- this was consequently the first substantial \(\FDR\) control result in the area. In fact, \cite{chen_power_2020-1} proved that \(\LORD\) is asymptotically as powerful as the best offline algorithm for controlling \(\FDR\). \cite{ramdas2017online} then relaxed the conditions for \(\FDR\) control in \(\LORD\) and formulated a family of algorithms, known as GAI++, that improved power over any GAI algorithm, including \(\LORD\).} For instances of application, \cite{xu2015infrastructure} provide a comprehensive overview of how online hypothesis testing plays a crucial in modern A/B testing. \cite{robertson2018online,robertson2019onlinefdr} apply the problem in a biomedical setting.

Several works have improved the power of the \(\LORD++\) algorithm \citep{tian2019addis, ramdas2018saffron}. This paper's methods are not specific to \(\LORD\) --- we may generalize dynamic scheduling to any generalized alpha-investing algorithm and the \(\FDX\) control obtained by \(\SupLORD\) can also be applied to algorithms with other \(\FDP\) estimators, like the one in \cite{ramdas2018saffron}. 

Other related work on online selective inference includes online methods for controlling family-wise error rate \citep{tian_online_2021}, false coverage rate \citep{weinstein2019online}, and \(\FDR\) in minibatches i.e.\ neighboring batches of hypotheses also have \(\FDR\) control \citep{zrnic2020power}. Controlling any of these metrics guarantees control of the \(\FDR\) as well. \cite{zrnic_asynchronous_2021} considers \(\FDR\) control under local dependence between p-values. \cite{gang2020structureadaptive} take a Bayesian approach to the problem and increase the power of their algorithms by exploiting local structure. 

\revision{All aforementioned works are primarily concerned with error control in expectation and thus are focused on controlling the \(\FDR\). In contrast, this paper introduces an first empirically powerful algorithm, \(\SupLORD\), for controlling the \(\FDP\) at a set level with high probability i.e.\ \(\FDX\). Further, control at stopping times has only been shown for \(\mFDR\) by \FS\ and \cite{aharoni2014generalized}. \(\SupLORD\) is the first algorithm to demonstrate any control of \(\FDR\) at stopping times. Thus, \(\SupLORD\) broadens the online multiple testing literature to account for \(\FDX\) control and fills the existing gap of controlling \(\FDR\) at stopping times.}
\section{Summary}
We have shown a new algorithm, \(\SupLORD\), that controls \(\FDX\) and is empirically more powerful by a significant amount than previous \(\FDX\) controlling algorithms. We also show that \(\SupLORD\) may simultaneously control \(\FDX\), \(\FDR\), and \(\supFD\) together. In addition, we make the observation that existing algorithms are not tight to their level of error control, and formulate dynamic scheduling, a framework for adapting to the wealth of an algorithm. We empirically show that dynamic scheduling increases power and reduces this wealth gap. Thus, we formulated the most powerful algorithms for \(\FDX\) control and brought insight to the interplay between multiple error metrics in this paper. \revision{Finally, our paper showed the first non-trivial algorithm with \(\FDR\) control at stopping times, filling a gap in the literature that had been noticed by \FS\ when they first proposed the problem.} We plan on implementing our methods and incorporating them into the \texttt{onlineFDR} R package \citep{robertson2019onlinefdr}, which contains current state-of-the-art online multiple testing algorithms. One line for future work is understanding how \(\decaycoef\) and \(\decaylen\) can be set optimally in dynamic scheduling, and whether there is choice of parameters that is provably most powerful or optimal in some sense.

\paragraph{Acknowledgements} We would like to thank our anonymous reviewers for their suggestions and questions. AR acknowledges support from NSF CAREER 1945266.


\bibliography{hypothesis}
\clearpage
\appendix
\section{\(\SupLORD\) General Formulation}
\label{sec:GeneralSupLORD}

We first define a more general form of \(\SupLORD\) that is valid for any \(\gamma^*, r^*\). We add an additional user parameter of \(a > 0\). We now redefine our estimators of \(\FDP\) to include \(a\).
\ifarxiv{}{\vspace{-0.1in}}
\begin{align*}
	\widehat{\FDP}_a(\rejset_k) &~\equiv~ \frac{\widehat{V}_k + a}{\left|\rejset_k\right|},\\
	\coefext{a}{\frac{1}{\delta}} &~\equiv~ \frac{\log(\frac{1}{\delta})}{a \log\left(1 + \frac{\log(\frac{1}{\delta})}{a}\right)} ~ ,\\
	\overline{\FDP}_a(\rejset_k) &~\equiv~ \coefext{a}{\frac{1}{\delta}} \cdot  \widehat{\FDP}_a(\rejset_k).
\end{align*}\ifarxiv{}{\vspace{-0.1in}}

We drop \(a\) from the notation when \(a = 1\). We first cite the more general form of \Cref{fact:PosthocBound}.

\begin{fact}[Detailed form of Theorem 4 from \(\KR\)]
	Let \(\{\alpha_k\}_{k \in \naturals}\) be any sequence of alpha values predictable with respect to filtration \(\filtration_{k}\) in definition~\eqref{eqn:Filtration}.
	Assuming conditional superuniformity \eqref{eqn:SuperUniform}, the following uniform bound holds:
	\begin{align*}
		\prob{\FDP(\rejset_k) \leq \overline{\FDP}_a(\rejset_k) \text{ for all }k \geq 1} \geq 1 - \delta.
	\end{align*}
	\label{fact:GeneralPosthocBound}
\end{fact}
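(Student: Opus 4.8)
The plan is to prove Fact~\ref{fact:GeneralPosthocBound} as a time-uniform concentration inequality on the number of false discoveries, built from an exponential supermartingale and Ville's inequality, invoking only conditional superuniformity~\eqref{eqn:SuperUniform} and never independence. First I would let $V_k = |\rejnull_k| = \sum_{i \in \hypset_0,\, i \leq k} \ind{P_i \leq \alpha_i}$ count the false discoveries made through time $k$, and observe that for each null index $i$, predictability of $\alpha_i$ with respect to $\filtration_{i-1}$ together with~\eqref{eqn:SuperUniform} gives $\prob{R_i = 1 \mid \filtration_{i-1}} \leq \alpha_i$, so the conditional increments of $V_k$ at null times are dominated by the predictable quantities $\alpha_i$. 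The null compensator $\sum_{i \in \hypset_0,\, i \leq k} \alpha_i$ is in turn bounded by the fully observable, algorithm-agnostic quantity $\widehat{V}_k = \sum_{i=1}^k \alpha_i$; this relaxation is precisely what lets the bound hold without knowing which hypotheses are null.

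Next, for a free parameter $\lambda > 0$ I would define $M_k = e^{\lambda V_k} \big/ \prod_{i \in \hypset_0,\, i \leq k} \big(1 + (e^\lambda - 1)\alpha_i\big)$ and verify it is a nonnegative supermartingale with $M_0 = 1$. The check splits into two cases: at a non-null $k$ the process is unchanged, since $V_k = V_{k-1}$ and no new factor appears; at a null $k$ the multiplicative increment is $e^{\lambda R_k}/(1 + (e^\lambda-1)\alpha_k)$, whose conditional expectation equals $\big(1 + (e^\lambda-1)\prob{R_k = 1 \mid \filtration_{k-1}}\big)/(1+(e^\lambda-1)\alpha_k) \leq 1$ by superuniformity and $e^\lambda - 1 > 0$. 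Applying Ville's inequality then yields that, with probability at least $1 - \delta$, simultaneously for all $k$, $\lambda V_k \leq \log(1/\delta) + \sum_{i \in \hypset_0,\, i \leq k} \log\big(1 + (e^\lambda-1)\alpha_i\big)$; bounding $\log(1+x) \leq x$ and then $\sum_{i \in \hypset_0,\, i \leq k} \alpha_i \leq \widehat{V}_k$ gives $V_k \leq \tfrac{\log(1/\delta)}{\lambda} + \tfrac{e^\lambda - 1}{\lambda}\widehat{V}_k$ for all $k$ on this event.

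The key design step is to choose $\lambda$ so that this collapses into the stated form. Setting $\lambda = \log\!\big(1 + \tfrac{\log(1/\delta)}{a}\big)$ makes $e^\lambda - 1 = \log(1/\delta)/a$, which balances the two terms so that the coefficient of $\widehat{V}_k$ and the free term become exactly $\coefext{a}{\frac{1}{\delta}}$ multiplying $\widehat{V}_k$ and $a$ respectively; concretely one obtains $V_k \leq \coefext{a}{\frac{1}{\delta}}\,(\widehat{V}_k + a)$ for all $k$. Finally I would divide through by $|\rejset_k|$: when $|\rejset_k| \geq 1$ this gives $\FDP(\rejset_k) = V_k/(|\rejset_k| \vee 1) \leq \coefext{a}{\frac{1}{\delta}}\,\widehat{\FDP}_a(\rejset_k) = \overline{\FDP}_a(\rejset_k)$, and when $|\rejset_k| = 0$ the inequality holds trivially since $\FDP(\rejset_k) = 0$.

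I expect the main obstacle to be this last balancing step. The supermartingale construction and Ville's inequality are standard in the confidence-sequence literature, but reverse-engineering the precise $\lambda$ so that the Chernoff-type bound telescopes into the clean multiplier $\coefext{a}{\frac{1}{\delta}} = \log(1/\delta)\big/\big(a \log(1 + \log(1/\delta)/a)\big)$, rather than leaving a $\lambda$-dependent bound to be optimized afterward, is the delicate part. A secondary point requiring care is that the entire argument must go through using only~\eqref{eqn:SuperUniform}, so I must ensure the supermartingale property never invokes joint independence of the null p-variables, relying solely on the one-step conditional bound $\prob{R_i = 1 \mid \filtration_{i-1}} \leq \alpha_i$.
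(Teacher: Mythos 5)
Your proposal is correct, and it is worth noting that the paper itself offers no proof of this statement: it is imported verbatim as Theorem 4 of \(\KR\), so there is no in-paper argument to diverge from. What you have done is reconstruct \(\KR\)'s own proof route, and you have done so accurately. Your supermartingale \(M_k\) with the product compensator \(\prod_{i \in \hypset_0,\, i \leq k}\bigl(1 + (e^\lambda - 1)\alpha_i\bigr)\), the one-step check via \(\prob{R_k = 1 \mid \filtration_{k-1}} \leq \alpha_k\), and Ville's inequality are exactly the mechanism behind Fact~\ref{fact:MartingaleLemma} (Lemma 1 of \(\KR\)), which the paper does state and use in \Cref{sec:SupremumBoundProof}. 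Moreover, your ``key design step'' is precisely the inversion the paper carries out in its proof of Fact~\ref{fact:ExpectedSupremumBound}: your \(\lambda = \log\bigl(1 + \log(1/\delta)/a\bigr)\) is \(\KR\)'s \(\theta_x\) evaluated at \(x = \coefext{a}{\frac{1}{\delta}}\), since \(e^{\lambda} = 1 + \log(1/\delta)/a = 1 + \lambda x\) and \(e^{-a\lambda x} = \delta\); so the tail level \(\exp(-a\theta_x x)\) set equal to \(\delta\) solves to exactly the multiplier \(\coefext{a}{\frac{1}{\delta}}\), just as in the paper's computation. The algebra checks out: with this \(\lambda\) the free term is \(\log(1/\delta)/\lambda = a\,\coefext{a}{\frac{1}{\delta}}\), giving \(V_k \leq \coefext{a}{\frac{1}{\delta}}(\widehat{V}_k + a)\) uniformly, which divides through to the claimed bound. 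Two small technical points, neither a gap: first, applying condition~\eqref{eqn:SuperUniform} at the random threshold \(\alpha_k\) requires the standard extension from fixed \(s\) to \(\filtration_{k-1}\)-measurable thresholds (condition on \(\filtration_{k-1}\), under which \(\alpha_k\) is constant); second, when \(\left|\rejset_k\right| = 0\) the quantity \(\overline{\FDP}_a(\rejset_k)\) as defined has a zero denominator, so the statement should be read with the convention \(\overline{\FDP}_a = +\infty\) there, under which your trivial-case handling is fine. One stylistic alternative that shortens your argument: replace the product compensator by the exponential one, \(L_k = \exp\bigl(\lambda V_k - (e^\lambda - 1)\widehat{V}_k\bigr)\), which is a supermartingale directly (non-null steps only shrink it since \(\alpha_k \geq 0\)), absorbing your two relaxations \(\log(1+x) \leq x\) and nulls-to-all-indices into the construction itself.
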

\ifarxiv{}{\vspace{-0.3in}}

Consequently, our conditions on the boost sequence \(\{\beta_i\}_{i \in \naturals \cup \{0\}}\) is modified to account for \(a\):

\begin{subequations}
\begin{align}
    \sum\limits_{i = 0}^{t_{r^* - 1}} \beta_iR_i &\leq \frac{\epsilon^*r^*}{\coefext{a}{\frac{1}{\delta^*}}} - a \label{eqn:GeneralSumSupLordBoostCondition}\\
    \beta_i & \leq \frac{\epsilon^*}{\coefext{a}{\frac{1}{\delta^*}}}\ \text{for all }i > t_{r^* - 1}. \label{eqn:GeneralIncreaseSupLordBoostCondition}
\end{align}
\label{eqn:GeneralSupLORDBoostConditions}
\end{subequations}

The default \(\{\beta_i\}_{i \in \naturals\cup\{0\}}\) can be specified as follows:
\begin{align}
	\beta_i = \begin{cases}
		\frac{1}{r^*}\left(\frac{\epsilon^* r^*}{\coefext{a}{\frac{1}{\delta^*}}} - a\right) & i < r^* , \\
		\frac{\epsilon^*}{\coefext{a}{\frac{1}{\delta^*}}} & i \geq r^*
	\end{cases} \quad .
	\label{eqn:GeneralSimpleSupLORDB}
\end{align} Choosing \(a = 1\) recovers conditions~\eqref{eqn:SupLORDBoostConditions} and the default sequence in \eqref{eqn:DefaultBoostSupLORD}.

Naturally, the we then prove the following bound holds for the general form:
\begin{theorem}
	Given that conditional superuniformity \eqref{eqn:SuperUniform} holds, let \(0 < \epsilon^*, \delta^* < 1\), \(r^* \geq 1\) be a positive integer, and \(t_{r^*}\) be a random variable that is the time when the \(r^*\)th rejection is made. 
    A GAI algorithm with a boost sequence \(\{\beta_i\}_{i \in \naturals \cup \{0\}}\) that satisfies conditions~\eqref{eqn:GeneralSupLORDBoostConditions} guarantees:
    \begin{align*}
        \FDX_{t_r^*}^{\epsilon^*} \leq \delta^*.
    \end{align*}
    \label{thm:GeneralSupLORDFDX}
\end{theorem}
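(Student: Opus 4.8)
The plan is to mirror the proof of the \(a=1\) statement in \Cref{thm:SupLORDFDX}, which combines the deterministic bound of \Cref{lemma:FDPBarLemma} with the time-uniform high-probability bound of \Cref{fact:PosthocBound}. Concretely, I would (i) prove the \(a\)-analogue of \Cref{lemma:FDPBarLemma}, namely that conditions~\eqref{eqn:GeneralSupLORDBoostConditions} force \(\overline{\FDP}_a(\rejset_k) \leq \epsilon^*\) at every rejection time \(k \geq t_{r^*}\), and (ii) invoke \Cref{fact:GeneralPosthocBound} with \(\delta = \delta^*\). The first reduction to record is algebraic: since \(\overline{\FDP}_a(\rejset_k) = \coefext{a}{\frac{1}{\delta^*}} \cdot \frac{\widehat{V}_k + a}{|\rejset_k|}\) with \(\widehat{V}_k = \sum_{i=1}^k \alpha_i\), the inequality \(\overline{\FDP}_a(\rejset_k) \leq \epsilon^*\) is equivalent to \(\sum_{i=1}^k \alpha_i \leq \frac{\epsilon^* |\rejset_k|}{\coefext{a}{\frac{1}{\delta^*}}} - a\). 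Thus the deterministic task is entirely to bound the running sum of alpha values against a budget linear in \(|\rejset_k|\).

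For the deterministic bound I would use the GAI wealth recursion~\eqref{eqn:AlphaInvestingWealth} together with the nonnegativity of wealth implied by the invariant~\eqref{eqn:WealthInvariant}, which gives \(\sum_{i=1}^k \alpha_i \leq \beta_0 + \sum_{i=1}^k \beta_i R_i\). I would then split the earned boosts into those from the first \(r^*-1\) rejections, controlled by~\eqref{eqn:GeneralSumSupLordBoostCondition} (right-hand side \(\frac{\epsilon^* r^*}{\coefext{a}{\frac{1}{\delta^*}}} - a\)), and those from the later rejections, each bounded by~\eqref{eqn:GeneralIncreaseSupLordBoostCondition}. The delicate point, which I expect to be the main obstacle, is an off-by-one: naively summing the later boosts over rejections \(r^*, \dots, r\) at a rejection time \(t_r\) overshoots the target budget by one unit of \(\frac{\epsilon^*}{\coefext{a}{\frac{1}{\delta^*}}}\), yielding only \(\overline{\FDP}_a \leq \epsilon^*(1 + 1/r)\). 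The resolution is the timing of the wealth update: a rejection's boost is credited only \emph{after} that step's alpha value has been spent, so \(\alpha_{t_r} \leq W(t_r - 1)\) is constrained by the wealth \emph{before} the \(r\)-th boost is earned. Consequently \(\sum_{i=1}^{t_r}\alpha_i\) is bounded by the budget for \(r\) rejections rather than \(r+1\). I would make this precise through the \(a\)-version of the wealth invariant \(W(k) = \frac{\epsilon^*(|\rejset_k| + 1)}{\coefext{a}{\frac{1}{\delta^*}}} - a - \sum_{i=1}^k \alpha_i\), valid for \(k \geq t_{r^* - 1}\), so that \(W(k) \geq 0\) supplies exactly the per-rejection-time budget \(\frac{\epsilon^*|\rejset_k|}{\coefext{a}{\frac{1}{\delta^*}}} - a\) once \(\alpha_{t_r}\) is folded in at each rejection.

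Finally I would assemble the probabilistic conclusion. Applying \Cref{fact:GeneralPosthocBound} with \(\delta = \delta^*\) produces an event of probability at least \(1 - \delta^*\) on which \(\FDP(\rejset_k) \leq \overline{\FDP}_a(\rejset_k)\) for all \(k \geq 1\). The last ingredient is that \(\FDP(\rejset_k) = |\rejnull_k|/(|\rejset_k| \vee 1)\) changes only at rejection times, so for any \(k \geq t_{r^*}\) it equals \(\FDP(\rejset_{k'})\) at the most recent rejection time \(k' \geq t_{r^*}\); on the good event this is at most \(\overline{\FDP}_a(\rejset_{k'}) \leq \epsilon^*\) by step~(i). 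Hence \(\sup_{k \geq t_{r^*}} \FDP(\rejset_k) \leq \epsilon^*\) on an event of probability at least \(1 - \delta^*\), giving \(\FDX_{t_{r^*}}^{\epsilon^*} \leq \delta^*\). Since \(a = 1\) recovers conditions~\eqref{eqn:SupLORDBoostConditions} and \Cref{thm:SupLORDFDX}, the whole argument is a generalization with \(a\) and \(\coefext{a}{\cdot}\) replacing \(1\) and \(\coef{\cdot}\) throughout; it then only remains to recheck that the algebraic identities and the off-by-one resolution survive the substitution, which is routine to verify.
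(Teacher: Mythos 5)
Your proposal is correct and takes essentially the same route as the paper: the paper's proof of Lemma~\ref{lemma:GeneralFDPBarLemma} in \Cref{subsec:BoundedFDPProof} is exactly your deterministic chain --- expand \(\sum_{i \le t_r}\alpha_i\) via the wealth recursion~\eqref{eqn:AlphaInvestingWealth}, drop \(\alpha_{t_r} - W(t_r - 1) \le 0\) using the invariant~\eqref{eqn:WealthInvariant} (precisely your off-by-one resolution, since the \(r\)th boost is credited only after \(\alpha_{t_r}\) is spent), and bound the remaining boosts by conditions~\eqref{eqn:GeneralSupLORDBoostConditions} so the numerator telescopes to \(\epsilon^* r / \coefext{a}{\frac{1}{\delta^*}}\) --- followed by \Cref{fact:GeneralPosthocBound} with \(\delta = \delta^*\) and the observation (implicit in the paper, explicit in your write-up) that \(\FDP(\rejset_k)\) is constant between rejection times. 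The one cosmetic caveat: your auxiliary identity \(W(k) = \frac{\epsilon^*(\left|\rejset_k\right| + 1)}{\coefext{a}{\frac{1}{\delta^*}}} - a - \sum_{i \le k}\alpha_i\) holds with equality only for boost sequences tight with~\eqref{eqn:GeneralSupLORDBoostConditions} (e.g.\ the default~\eqref{eqn:GeneralSimpleSupLORDB}); for a general sequence satisfying the conditions it is only an upper bound on \(W(k)\), which is all your argument actually uses.
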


Similar to \Cref{sec:SupLORD}, \Cref{thm:GeneralSupLORDFDX} follows from \Cref{fact:GeneralPosthocBound} and a variant of Lemma~\ref{lemma:FDPBarLemma} for the general form \(\SupLORD\) --- Lemma~\ref{lemma:GeneralFDPBarLemma}. We prove Lemma~\ref{lemma:GeneralFDPBarLemma} in the sequel.

\subsection{Proof of Lemma~\ref{lemma:FDPBarLemma}}
\label{subsec:BoundedFDPProof}

First, we introduce this general form version of Lemma~\ref{lemma:FDPBarLemma}:
\begin{lemma}
    An GAI algorithm with a boost sequence \(\{\beta_i\}_{i \in \naturals \cup \{0\}}\) that satisfies conditions~\eqref{eqn:GeneralSupLORDBoostConditions} will ensure the following:
    \begin{align*}
        \overline{\FDP}(\rejset_k) \leq \epsilon^*
    \end{align*} for all \(k \geq t_{r^*}\) such that the \(k\)th hypothesis is rejected.
    \label{lemma:GeneralFDPBarLemma}
\end{lemma}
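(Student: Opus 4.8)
The plan is to prove Lemma~\ref{lemma:GeneralFDPBarLemma} by directly verifying that the defining inequality $\overline{\FDP}(\rejset_k) \le \epsilon^*$ holds whenever $k \ge t_{r^*}$ is a rejection time. Since $\overline{\FDP}(\rejset_k) = \coefext{a}{\frac{1}{\delta^*}} \cdot \widehat{\FDP}_a(\rejset_k) = \coefext{a}{\frac{1}{\delta^*}} \cdot \frac{\widehat{V}_k + a}{|\rejset_k|}$ with $\widehat{V}_k = \sum_{i=1}^k \alpha_i$, the target inequality is algebraically equivalent to a bound on the cumulative sum of alpha values, namely
\begin{align*}
	\sum_{i=1}^k \alpha_i + a \;\le\; \frac{\epsilon^*\,|\rejset_k|}{\coefext{a}{\frac{1}{\delta^*}}}.
\end{align*}
So the whole lemma reduces to controlling $\widehat{V}_k = \sum_{i=1}^k \alpha_i$ by the number of rejections made up to time $k$.

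First I would invoke the wealth invariant~\eqref{eqn:WealthInvariant}, which guarantees $\alpha_i \le W(i-1)$ and hence $W(k) \ge 0$ for all $k$. Expanding the GAI wealth recursion~\eqref{eqn:AlphaInvestingWealth}, nonnegativity of $W(k)$ is exactly the statement
\begin{align*}
	\sum_{i=1}^k \alpha_i \;\le\; \beta_0 + \sum_{i=1}^{k} \beta_i R_i.
\end{align*}
The next step is to bound the right-hand side using the two boost conditions~\eqref{eqn:GeneralSupLORDBoostConditions}. I would split the accumulated boosts into those earned during the first $r^*-1$ rejections (times $i \le t_{r^*-1}$) and those earned afterward. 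Since $k \ge t_{r^*}$, the rejection set $\rejset_k$ contains at least $r^*$ rejections; writing $|\rejset_k| = r$ with $r \ge r^*$, exactly $r - (r^*-1)$ of the boosts come from the ``second phase.'' Condition~\eqref{eqn:GeneralSumSupLordBoostCondition} caps the phase-one contribution (including $\beta_0$) by $\frac{\epsilon^* r^*}{\coefext{a}{1/\delta^*}} - a$, and condition~\eqref{eqn:GeneralIncreaseSupLordBoostCondition} caps each of the remaining $r - r^* + 1$ boosts by $\frac{\epsilon^*}{\coefext{a}{1/\delta^*}}$. Combining these gives
\begin{align*}
	\sum_{i=1}^k \alpha_i \;\le\; \left(\frac{\epsilon^* r^*}{\coefext{a}{\frac{1}{\delta^*}}} - a\right) + (r - r^* + 1)\,\frac{\epsilon^*}{\coefext{a}{\frac{1}{\delta^*}}} \;=\; \frac{\epsilon^*(r+1)}{\coefext{a}{\frac{1}{\delta^*}}} - a,
\end{align*}
which rearranges to exactly the cumulative-sum bound above (with $|\rejset_k| = r$), completing the argument.

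The step I expect to require the most care is the bookkeeping in the phase split: one must track precisely how many second-phase boosts have accrued and make sure the indexing of $\beta_i$ against rejection times $t_j$ versus hypothesis indices is handled correctly, since the conditions are stated in terms of the rejection count while the wealth sum runs over all hypotheses (with $R_i$ zeroing out the non-rejections). A subtle point is that condition~\eqref{eqn:GeneralSumSupLordBoostCondition} is phrased as a bound on $\sum_{i=0}^{t_{r^*-1}} \beta_i R_i$, so only the boosts actually earned (not merely indexed) count toward phase one; I would be careful to align the ``$r - r^* + 1$'' count of second-phase rejections with the fact that the $r^*$-th rejection itself falls into the second phase. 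Once the counting is pinned down, the remaining manipulation is purely algebraic, and the general statement with parameter $a$ specializes to Lemma~\ref{lemma:FDPBarLemma} by setting $a = 1$.
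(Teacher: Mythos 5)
There is a genuine gap, and it is exactly the off-by-one you flagged as the delicate step but did not resolve. Your reduction of the lemma to the cumulative-sum bound \(\sum_{i=1}^k \alpha_i + a \leq \epsilon^*|\rejset_k|/\coefext{a}{\frac{1}{\delta^*}}\) is correct, and so is the wealth identity \(\sum_{i=1}^k \alpha_i \leq \beta_0 + \sum_{i=1}^k \beta_i R_i\) obtained from \(W(k) \geq 0\). But the inequality you actually derive,
\begin{align*}
	\sum_{i=1}^{k} \alpha_i + a ~\leq~ \frac{\epsilon^*(r+1)}{\coefext{a}{\frac{1}{\delta^*}}} \qquad \text{with } r = |\rejset_k|,
\end{align*}
does \emph{not} ``rearrange to exactly the cumulative-sum bound above'': the target has \(r\) in the numerator, not \(r+1\), and your bound only yields \(\overline{\FDP}_a(\rejset_k) \leq \epsilon^* \cdot \frac{r+1}{r}\), which is strictly weaker than \(\epsilon^*\) (and materially so at \(r = r^*\), the very first time the guarantee is supposed to kick in). The extra term \(\epsilon^*/\coefext{a}{\frac{1}{\delta^*}}\) comes from using only \(W(k) \geq 0\): at a rejection time \(k = t_r\) this lets the boost earned at the \(r\)th rejection itself appear on the right-hand side, so you count \(r - r^* + 1\) phase-two boosts. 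That boost, however, is credited only \emph{after} \(\alpha_{t_r}\) has been committed, so it cannot have financed any of \(\alpha_1, \dots, \alpha_{t_r}\); including it is what inflates the bound.

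The paper's proof repairs precisely this by invoking the sharper form of the invariant \eqref{eqn:WealthInvariant}, namely \(\alpha_{t_r} \leq W(t_r - 1)\), rather than nonnegativity of \(W(t_r)\). Writing
\begin{align*}
	\sum_{i=1}^{t_r} \alpha_i ~=~ \beta_0 + \sum_{i=1}^{t_r - 1}\beta_i R_i - W(t_r - 1) + \alpha_{t_r} ~\leq~ \beta_0 + \sum_{i=1}^{t_r - 1}\beta_i R_i,
\end{align*}
only the boosts from the first \(r - 1\) rejections survive, i.e.\ \(r - r^*\) phase-two boosts, and conditions~\eqref{eqn:GeneralSupLORDBoostConditions} then give the sum plus \(a\) at most \(\frac{\epsilon^* r^*}{\coefext{a}{\frac{1}{\delta^*}}} + (r - r^*)\frac{\epsilon^*}{\coefext{a}{\frac{1}{\delta^*}}} = \frac{\epsilon^* r}{\coefext{a}{\frac{1}{\delta^*}}}\), which is exactly \(\overline{\FDP}_a(\rejset_{t_r}) \leq \epsilon^*\). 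In short: your bookkeeping of which boosts have been \emph{earned} by time \(t_r\) is right, but the lemma requires bookkeeping of which boosts can have been \emph{spent} by time \(t_r\); substituting \(\alpha_k \leq W(k-1)\) for \(W(k) \geq 0\) at the current rejection closes the gap, after which your outline coincides with the paper's argument (including the specialization \(a = 1\)).
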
 
\begin{proof}
Let \(k = t_r\) where \(r \geq r^*\). \(|\rejset_{t_r - 1}| = |\rejset_{t_r}| - 1\) since a new rejection is made at \(t_r\):
\begin{align*}
	\overline{\FDP}_a(\rejset_{t_r}) &= \coefext{a}{\frac{1}{\delta^*}} \cdot \left(\frac{\sum\limits_{i = 1}^{t_r}\alpha_i + a}{\left|\rejset_{t_r}\right|}\right) \tag*{by definition}\\
    &= \coefext{a}{\frac{1}{\delta^*}} \cdot \left(\frac{\beta_0 + \sum\limits_{i = 1}^{t_r - 1} \beta_i - W(k - 1) + \alpha_{t_r} + a}{\left|\rejset_{t_r}\right|}\right) \tag*{by \eqref{eqn:AlphaInvestingWealth}}\\
    &\leq \coefext{a}{\frac{1}{\delta^*}} \cdot \left(\frac{\beta_0 + \sum\limits_{i = 1}^{t_r - 1} \beta_i + a}{\left|\rejset_{t_r}\right|}\right) \tag*{by \eqref{eqn:WealthInvariant}}\\
    &= \coefext{a}{\frac{1}{\delta^*}} \cdot \left(\frac{\frac{\epsilon^*r^*}{\coefext{a}{\frac{1}{\delta^*}}} + \frac{\epsilon^*(\left|\rejset_{t_r - 1}\right| - (r^* - 1))}{\coefext{a}{\frac{1}{\delta^*}}}}{\left|\rejset_{t_r}\right|}\right) \tag*{by conditions~\eqref{eqn:GeneralSupLORDBoostConditions}}\\
    &\leq\coefext{a}{\frac{1}{\delta^*}} \cdot \left(\frac{\frac{\epsilon^*r^*}{\coefext{a}{\frac{1}{\delta^*}}} + \frac{\epsilon^*(\left|\rejset_{t_r}\right| - 1 - (r^* - 1))}{\coefext{a}{\frac{1}{\delta^*}}}}{\left|\rejset_{t_r}\right|}\right) \tag*{since a rejection is made at time \(t_r\)}\\
    &= \epsilon^*.
\end{align*}
Thus, we have proven Lemma~\ref{lemma:GeneralFDPBarLemma}. 
\end{proof}

Lemma~\ref{lemma:FDPBarLemma} follows directly from setting \(a = 1\).

\subsection{Tradeoffs in the choice of $a$}
\label{subsec:Tradeoffs}

Let \(w_0\) be the upper bound in condition~\eqref{eqn:GeneralSumSupLordBoostCondition} and \(b\) be the upper bound in condition~\eqref{eqn:GeneralIncreaseSupLordBoostCondition}:
\begin{align*}
    w_0 ~&\equiv~ \frac{\epsilon^*r^*}{\coefext{a}{\frac{1}{\delta^*}}} - a,\\
    b ~&\equiv~ \frac{\epsilon^*}{\coefext{a}{\frac{1}{\delta^*}}}\ \text{for all }i > t_{r^* - 1}.
\end{align*}

The offset, \(a\), in \(\overline{\FDP}_a\) plays a crucial role in determining the tradeoff between \(w_0\) and the bound on earning per rejection from \(r^*\) onwards, \(b\). Given the user chosen parameters \(\epsilon^*\), \(\delta^*\), and \(r^*\), we provide a principled way of choosing \(a\) such that both quantities can be maximized.

First, note that the upper bound in condition~\eqref{eqn:GeneralIncreaseSupLordBoostCondition}, \(b\), is an increasing function of \(a\) that asymptotically approaches \(\epsilon^*\). Thus, we cannot maximize it. However, we may maximize \(w_0\), the upper bound in condition~\eqref{eqn:GeneralSumSupLordBoostCondition}, which is equivalent to solving the following optimization problem:
\begin{align*}
	\max_{a > 0} ~ ~ 
	\frac{\epsilon^*r^*}{\coefext{ a}{\frac{1}{\delta^*}}} - a ~ .
\end{align*}  Since the objective is a smooth concave function with respect to \(a\), by taking the derivative of the objective and solving for 0, we can determine the maximum \(w_0\) is achieved when the following condition holds:
\begin{align}
	\log\left(1 + \frac{\log\left(\frac{1}{\delta^*}\right)}{a}\right) - \frac{\log\left(\frac{1}{\delta^*}\right)}{a + \log\left(\frac{1}{\delta^*}\right)} = \frac{\log\left(\frac{1}{\delta^*}\right)}{\epsilon^* r^*}.
	\label{eqn:Canonicala}
\end{align}

\begin{figure}[!h]
	\centering
	\includegraphics[width=\columnwidth]{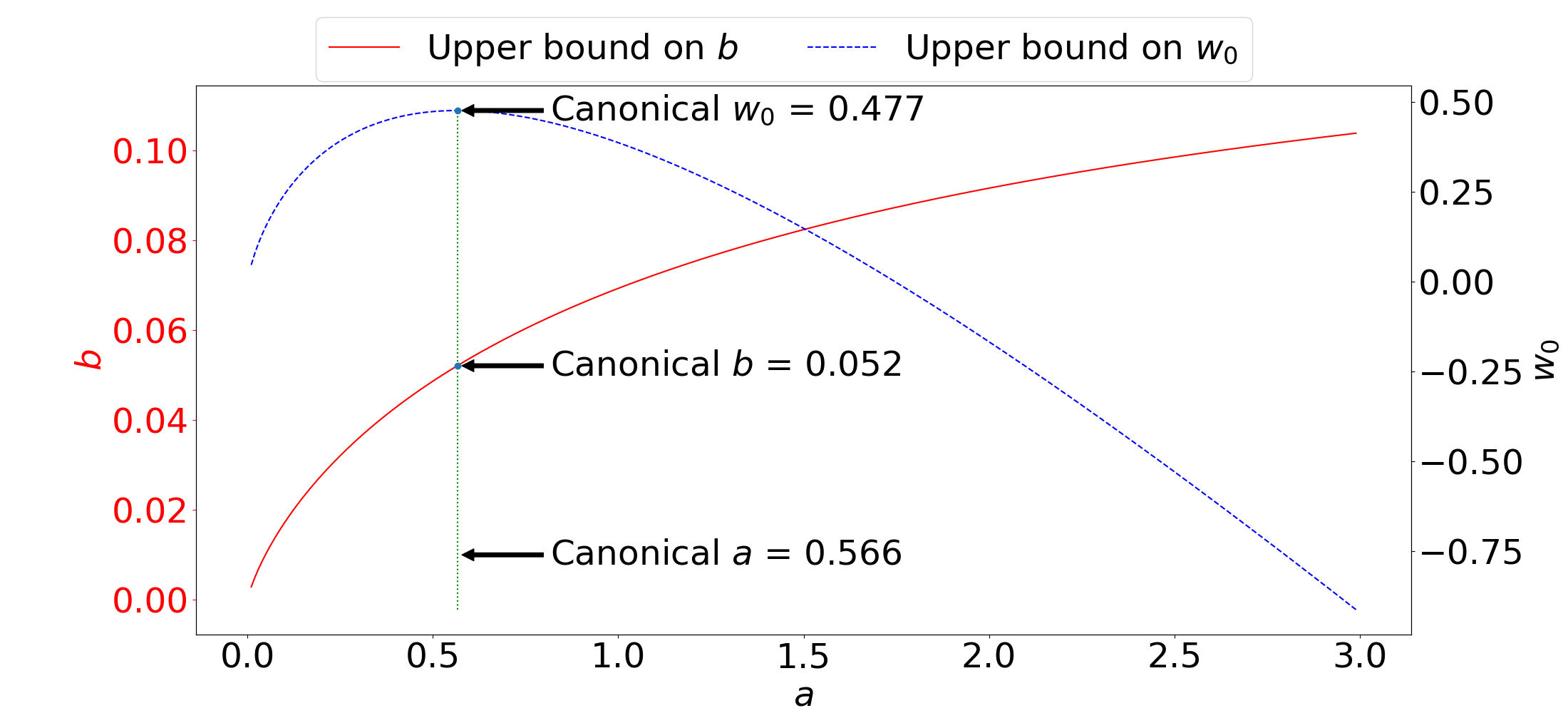}
	
	\caption{Tradeoff between \(w_0\) and \(b \) as a function of \(a\) given \(\delta^*=0.05, \epsilon^*=0.15, r^*=20\). Note that \(b\) is increasing with respect to \(a\), while \(w_0\) has a maximum. We also denote the canonical triple as a triple \((a, b, w_0)\), that can be selected to simultaneously maximize \(b\) and \(w_0\).}
	\label{fig:InitGainTradeoff}
\end{figure}

We illustrate in \Cref{fig:InitGainTradeoff} the relationship between \(a, b,\) and \(w_0\) for an example setting of \(\epsilon^*=0.15, \delta^*=0.05,\) and \(r^*=20\). We define the canonical triple for \((a, b, w_0)\) as the \(b\) and \(w_0\) corresponding to the \(a\) that is the positive solution to problem~\eqref{eqn:Canonicala}. Note that for all \(a\) smaller than the canonical \(a\), we may strictly improve any possible \(b\) and \(w_0\) under that \(a\) by selecting the canonical triple. On the other hand, the \(b\) and \(w_0\) for any \(a\) larger than the canonical \(a\) form an optimal Pareto-frontier for trading off \(b\) and \(w_0\). Thus, the canonical \(a\) can be treated as the default choice for \(a\) and consequently, the user may avoid tuning additional hyperparameters. In addition, we may also define a default \(\{\beta_i\}_{i \in \naturals \cup \{0\}}\) w.r.t.\ the canonical triple:
\begin{align}
    \beta_i = \begin{cases}
		\frac{1}{r^*}\left(\frac{\epsilon^* r^*}{\coefext{a}{\frac{1}{\delta^*}}} - a\right) & i \leq t_{r^* - 1}\\
		\frac{\epsilon^*}{\coefext{a}{\frac{1}{\delta^*}}} & i > t_{r^* - 1}.
	\end{cases}
\end{align} This default \(\{\beta_i\}_{i \in \naturals \cup \{0\}}\) is tight with conditions~\eqref{eqn:GeneralSupLORDBoostConditions} for the canonical \(a\), and can be considered as the default \(\{\beta_i\}_{i \in \naturals \cup \{0\}}\) sequence for the general \(\SupLORD\) form.

\section{Proof of \Cref{thm:SupremumBound}}
\label{sec:SupremumBoundProof}

First, we define the following term:
\begin{align*}
	c_a = \int\limits_0^1 \coefext{a}{\frac{1}{\delta}}\ d\delta.
\end{align*} Note that the \(c\) in \Cref{sec:ControllingError} is defined with $a=1$, in which case \(c = c_1 \leq 1.42\). 
\begin{theorem}
	Assuming conditional superuniformity \eqref{eqn:SuperUniform}, any GAI algorithm that satisfies conditions~\eqref{eqn:GeneralSupLORDBoostConditions}:
	\begin{align*}
		\supFD_{t_{r^*}} ~\leq~ \frac{c_a\epsilon^*}{ \coef{\frac{1}{\delta^*}}}.
	\end{align*}
	\label{thm:GeneralSupremumBound}
\end{theorem}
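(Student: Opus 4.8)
The plan is to convert the family of time-uniform \(\FDX\) bounds from Corollary~\ref{corollary:SupLORDFamilyFDXBound} (in its general-\(a\) form) into a bound on the expected supremal \(\FDP\) via the layer-cake representation of the expectation. Write \(S \equiv \sup_{k \geq t_{r^*}} \FDP(\rejset_k)\), so that \(\supFD_{t_{r^*}} = \expect[S]\), and the goal is to bound \(\expect[S]\) by integrating the tail bounds.

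First I would record the family of tail bounds. Fix an arbitrary \(\delta \in (0,1)\) and apply Fact~\ref{fact:GeneralPosthocBound} with this \(\delta\): with probability at least \(1 - \delta\), simultaneously for all \(k\), \(\FDP(\rejset_k) \leq \coefext{a}{\frac{1}{\delta}}\, \widehat{\FDP}_a(\rejset_k)\). Lemma~\ref{lemma:GeneralFDPBarLemma} gives \(\widehat{\FDP}_a(\rejset_k) \leq \epsilon^*/\coefext{a}{\frac{1}{\delta^*}}\) at every rejection time \(k \geq t_{r^*}\); since \(\FDP(\rejset_k)\) changes only at rejection times, \(S\) equals the supremum over rejection times \(t_r\) with \(r \geq r^*\), all of which are covered by the lemma. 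Combining the two facts gives, with probability at least \(1-\delta\), \(S \leq g(\delta)\), where
\[
 g(\delta) \equiv \frac{\coefext{a}{\frac{1}{\delta}}}{\coefext{a}{\frac{1}{\delta^*}}}\,\epsilon^*.
\]
Hence \(\prob{S \geq g(\delta)} \leq \delta\) for every \(\delta \in (0,1)\). These are marginal tail bounds on the single variable \(S\), so no joint coupling across \(\delta\) is needed — exactly what the layer-cake formula consumes.

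Next I would verify that \(g\) is a continuous, strictly decreasing bijection from \((0,1)\) onto \((t_0, \infty)\), where \(t_0 \equiv \epsilon^*/\coefext{a}{\frac{1}{\delta^*}}\): writing \(x = \log(1/\delta)\), the map \(x \mapsto x/\bigl(a\log(1+x/a)\bigr)\) increases from \(1\) to \(\infty\), while \(x\) decreases in \(\delta\). Thus \(g^{-1}\) exists and is decreasing, and for \(t > t_0\) the tail bound reads \(\prob{S \geq t} \leq g^{-1}(t)\). Then by the layer-cake formula, split at \(t_0\) and integrate:
\[
 \expect[S] = \int_0^\infty \prob{S \geq t}\,dt \leq \int_0^{t_0} 1\,dt + \int_{t_0}^\infty g^{-1}(t)\,dt = t_0 + \int_{t_0}^\infty g^{-1}(t)\,dt.
\]
Substituting \(t = g(\delta)\) and integrating by parts turns \(\int_{t_0}^\infty g^{-1}(t)\,dt\) into \(\int_0^1 g(\delta)\,d\delta - t_0\), so the \(t_0\) terms cancel and \(\expect[S] \leq \int_0^1 g(\delta)\,d\delta = \frac{\epsilon^*}{\coefext{a}{\frac{1}{\delta^*}}}\int_0^1 \coefext{a}{\frac{1}{\delta}}\,d\delta = \frac{c_a\,\epsilon^*}{\coefext{a}{\frac{1}{\delta^*}}}\), which is the claimed bound (reducing to \(\frac{c\,\epsilon^*}{\coef{\frac{1}{\delta^*}}}\) at \(a = 1\)). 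Equivalently and more slickly, set \(D \equiv g^{-1}(S)\), capped at \(1\) on the event \(\{S \leq t_0\}\); the tail bound makes \(D\) stochastically larger than \(\mathrm{Uniform}(0,1)\), and since \(g\) is decreasing, \(\expect[S] = \expect[g(D)] \leq \expect[g(U)] = \int_0^1 g(\delta)\,d\delta\).

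The main obstacle is the bookkeeping at the boundary \(t_0\) together with the change of variables: one must confirm monotonicity and invertibility of \(g\), verify that \(c_a = \int_0^1 \coefext{a}{\frac{1}{\delta}}\,d\delta < \infty\) (the integrand grows only like \(\log(1/\delta)/\log\log(1/\delta)\) as \(\delta \to 0\), tamed by the implicit \(e^{-x}\) weight, giving \(c_1 \approx 1.42\)), and correctly absorb the contribution of \(\{S \leq t_0\}\) so that the integration-by-parts (equivalently, the stochastic-dominance) step is rigorous.
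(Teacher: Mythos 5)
Your proof is correct, and it takes a genuinely different (though closely related) route from the paper's. The paper first establishes an algorithm-independent fact, Fact~\ref{fact:ExpectedSupremumBound}: \(\expect\left[\sup_k \FDP(\rejset_k)/\widehat{\FDP}_a(\rejset_k)\right] \leq c_a\), proved by applying the layer-cake formula to the exponential tail bound of Fact~\ref{fact:MartingaleLemma}; it then factors \(\sup_{r \geq r^*} \FDP(\rejset_{t_r}) \leq \left(\sup_r \FDP(\rejset_{t_r})/\widehat{\FDP}_a(\rejset_{t_r})\right)\cdot\left(\sup_r \widehat{\FDP}_a(\rejset_{t_r})\right)\), pulls the second factor out of the expectation using the deterministic bound \(\widehat{\FDP}_a(\rejset_{t_r}) \leq \epsilon^*/\coefext{a}{\frac{1}{\delta^*}}\) from Lemma~\ref{lemma:GeneralFDPBarLemma}, and invokes the expected-supremum-ratio fact. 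You instead apply Fact~\ref{fact:GeneralPosthocBound} at every level \(\delta\), combine it with the same lemma to get the marginal tails \(\prob{S > g(\delta)} \leq \delta\), and run the layer-cake formula on \(S\) itself --- in effect rigorizing the heuristic the paper states after Theorem~\ref{thm:SupremumBound} (``integrating over the family of \(\FDX\) bounds'' of Corollary~\ref{corollary:SupLORDFamilyFDXBound}) but never formalizes in that shape. Underneath, the two arguments are isomorphic: Theorem 4 of \(\KR\) is Fact~\ref{fact:MartingaleLemma} evaluated at \(x = \coefext{a}{\frac{1}{\delta}}\), and your change of variables and integration by parts, including the boundary limits \(\delta \cdot \coefext{a}{\frac{1}{\delta}} \to 0\) as \(\delta \to 0^+\) and \(\to 1\) as \(\delta \to 1^-\), reproduces the computation inside the paper's proof of Fact~\ref{fact:ExpectedSupremumBound}, just carried out on \(S\) rather than on the ratio. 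What your version buys: it relies only on the published uniform bound applied per-\(\delta\) (your observation that these are marginal tail bounds on a single random variable \(S\), so no coupling across \(\delta\) is needed, is exactly the soundness point), avoiding any citation of the earlier-revision corollary of \(\KR\); your stochastic-dominance reformulation via \(D = g^{-1}(S)\) is a clean way to dispose of the boundary bookkeeping at \(t_0\). What the paper's version buys: a reusable, standalone expected-supremum-ratio bound that does not depend on the particular boost conditions. Two further checks in your favor: your reduction of \(S\) to the supremum over rejection times \(\{t_r\}_{r \geq r^*}\) (since \(\FDP\) is constant between rejections) matches the paper's second displayed equality, and your denominator \(\coefext{a}{\frac{1}{\delta^*}}\) is what both proofs actually deliver --- the \(\coef{\frac{1}{\delta^*}}\) appearing in the theorem statement is evidently a typo for the general-\(a\) quantity.
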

\Cref{thm:SupremumBound} can be seen as an instance of this theorem where \(a = 1\).

To prove this theorem, we must prove a preliminary fact.
\begin{fact}[Corollary 1 from an \href{https://arxiv.org/pdf/1803.06790v3.pdf}{earlier revision} of \(\KR\)]
	Let \(\{\alpha_k\}_{k \in \naturals} \) be any
	sequence of alpha values predictable with respect to filtration \(\filtration_k\) in definition~\eqref{eqn:Filtration}. Assume conditional superuniformity \eqref{eqn:SuperUniform} is true, and let \(a > 0\) be fixed. Then, we obtain the following expected supremum ratio bound:
	\begin{align*}
		\expect\left[
		\sup_{k \in \naturals}\ \frac{\FDP(\rejset_k)}{\widehat{\FDP}_a(\rejset_k)}\right] \leq c_a.
	\end{align*}
	\label{fact:ExpectedSupremumBound}
\end{fact}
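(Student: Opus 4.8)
The plan is to deduce this expected-supremum bound directly from the high-probability uniform bound already available in \Cref{fact:GeneralPosthocBound}, by a tail-integration argument over the confidence level. Set
\[
X ~\equiv~ \sup_{k \in \naturals}\ \frac{\FDP(\rejset_k)}{\widehat{\FDP}_a(\rejset_k)},
\]
which is a well-defined nonnegative random variable: when $|\rejset_k| = 0$ we have $|\rejnull_k| \le |\rejset_k| = 0$, so the $k$-th ratio is $0$, and when $|\rejset_k| \ge 1$ it simplifies to $|\rejnull_k|/(\widehat{V}_k + a)$. The first observation is that for each fixed $\delta \in (0,1)$, since $\overline{\FDP}_a(\rejset_k) = \coefext{a}{\frac{1}{\delta}} \cdot \widehat{\FDP}_a(\rejset_k)$, the event $\{X \le \coefext{a}{\frac{1}{\delta}}\}$ is exactly the event $\{\FDP(\rejset_k) \le \overline{\FDP}_a(\rejset_k)\text{ for all }k\}$. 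Hence \Cref{fact:GeneralPosthocBound} furnishes, for every $\delta \in (0,1)$,
\[
\prob{X > \coefext{a}{\tfrac{1}{\delta}}} ~\le~ \delta.
\]
This is a whole family of tail bounds for the single random variable $X$, indexed by the confidence level.

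Next I would convert this family into a bound on $\expect[X]$. Writing $g(\delta) \equiv \coefext{a}{\frac{1}{\delta}}$, I would first verify that $g$ is continuous and strictly decreasing on $(0,1)$ with $g(\delta)\downarrow 1$ as $\delta\uparrow 1$ and $g(\delta)\uparrow\infty$ as $\delta\downarrow 0$; substituting $u = \log(1/\delta)$ reduces this to showing $u \mapsto u/\log(1 + u/a)$ is increasing, which follows from the elementary inequality $\log(1+x) \ge x/(1+x)$. Thus $g$ has a continuous strictly decreasing inverse $g^{-1}:(1,\infty)\to(0,1)$, and evaluating the tail bound at $\delta = g^{-1}(t)$ gives $\prob{X > t} \le g^{-1}(t)$ for every $t > 1$, while $\prob{X > t} \le 1$ trivially for $t \le 1$.

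Finally I would integrate the tail. Using $\expect[X] = \int_0^\infty \prob{X > t}\, dt$ together with the two bounds just derived,
\[
\expect[X] ~\le~ \int_0^1 1\, dt + \int_1^\infty g^{-1}(t)\, dt ~=~ 1 + \int_1^\infty g^{-1}(t)\, dt,
\]
and it remains to identify the right-hand side with $c_a = \int_0^1 g(\delta)\, d\delta$. Slicing the region under the graph of $g$ horizontally, heights $y \in (0,1)$ contribute full width $1$ (since $g > 1$ everywhere) and heights $y > 1$ contribute width $g^{-1}(y)$, giving $\int_0^1 g(\delta)\,d\delta = 1 + \int_1^\infty g^{-1}(y)\, dy$; combining with the previous display yields $\expect[X] \le c_a$. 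The main obstacle is precisely this passage from per-level tail bounds to the expectation: one must confirm the monotonicity and invertibility of $g$, check that the singularity of $g$ at $\delta = 0$ is integrable so that $c_a < \infty$, and justify the horizontal-slicing identity relating $\int_0^1 g$ to $\int_1^\infty g^{-1}$. All of this can equivalently be packaged as the standard fact that $\expect[X] = \int_0^1 Q_X(\delta)\, d\delta$ for the upper quantile function $Q_X$, combined with the observation that the tail bounds force $Q_X(\delta) \le g(\delta)$ for all $\delta$.
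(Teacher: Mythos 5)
Your proof is correct, and it takes a genuinely (if mildly) different route from the paper's. The paper does not argue from the uniform bound in \Cref{fact:GeneralPosthocBound}; it instead imports a finer ingredient, \Cref{fact:MartingaleLemma} (Lemma 1 of \KR), which gives the exponential tail bound \(\prob{X > x} \leq \exp(-a\theta_x x)\) for \(X = \sup_k \FDP(\rejset_k)/\widehat{\FDP}_a(\rejset_k)\) at every \(x\), then evaluates \(\expect[X] = \int_0^\infty \prob{X > x}\,dx \leq \int_0^\infty \exp(-a\theta_x x)\,dx\) via the change of variables \(\delta = \exp(-a\theta_x x)\) --- whose inverse it computes to be exactly \(x = \coefext{a}{\frac{1}{\delta}}\) --- followed by integration by parts and the boundary-limit checks \(\delta \cdot \coefext{a}{\frac{1}{\delta}} \to 0\) and \(\to 1\). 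You replace both steps: you treat \Cref{fact:GeneralPosthocBound} as a black box, observe that its event is exactly \(\{X \leq \coefext{a}{\frac{1}{\delta}}\}\) (your handling of the \(\left|\rejset_k\right| = 0\) terms, where the ratio is interpreted as \(0\), is needed for this equivalence and is right), and then run the standard quantile/layer-cake integration to get \(\expect[X] \leq \int_0^1 g(\delta)\,d\delta = c_a\). The two computations carry the same content --- the paper's inversion of \(\exp(-a\theta_x x) = \delta\) is precisely the statement that Lemma 1 evaluated along the curve \(x = g(\delta)\) coincides with the Theorem 4 uniform bound --- and your bookkeeping (strict monotonicity and invertibility of \(g\), integrability of the singularity at \(\delta = 0\), the identity \(\int_0^1 g(\delta)\,d\delta = 1 + \int_1^\infty g^{-1}(t)\,dt\)) substitutes exactly for the paper's integration by parts and limit checks. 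What each buys: your version is more self-contained within this paper, needing no citation beyond the already-stated \Cref{fact:GeneralPosthocBound}, and packages the analysis as the familiar fact that pointwise tail bounds dominate the quantile function; the paper's route is closer to the source in \KR\ and gets the tail bound at all \(x\) directly, without having to establish that \(g\) is a decreasing bijection of \((0,1)\) onto \((1,\infty)\). Since Theorem 4 of \KR\ is itself derived from Lemma 1, your derivation is not logically weaker, merely routed through the coarser published statement.
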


We include a proof of this fact for completeness. First, we introduce another lemma from \(\KR\):

\begin{fact}[Lemma 1 from \(\KR\)]
	Let the conditions of \Cref{fact:ExpectedSupremumBound} be satisfied. Then, the following is true:
	\begin{align*}
		\prob{\sup_{k \in \naturals}\ \frac{\FDP(\rejset_k)}{\widehat{\FDP}_a(\rejset_k)} > x} ~\leq~ \exp\left(-a\theta_xx\right),
	\end{align*} where \(\theta_x\) is the unique solution to \(\exp(\theta) = 1 + \theta x\).
	\label{fact:MartingaleLemma}
\end{fact}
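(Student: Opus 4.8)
The plan is to prove this time-uniform tail bound by the standard supermartingale recipe: exhibit a nonnegative supermartingale that is forced to be large on the exceedance event, then apply Ville's maximal inequality. The essential feature is that conditional superuniformity \eqref{eqn:SuperUniform} alone suffices — no independence is needed — because the supermartingale property is checked one step at a time through conditioning. I would begin by simplifying the controlled ratio. Writing $V_k \equiv |\rejnull_k| = \sum_{i \in \hypset_0,\, i \le k} R_i$ for the number of false rejections, a direct computation gives
\begin{align*}
\frac{\FDP(\rejset_k)}{\widehat{\FDP}_a(\rejset_k)} ~=~ \frac{V_k}{\widehat{V}_k + a},
\end{align*}
the ratio being $0$ when $|\rejset_k| = 0$ (since then $V_k = 0$ and $\widehat{V}_k + a \ge a > 0$). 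Hence $\{\sup_k \tfrac{\FDP}{\widehat{\FDP}_a} > x\}$ is precisely the event $\{\exists\, k:\ V_k > x(\widehat{V}_k + a)\}$.

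The core step is to build, for a fixed $\theta > 0$, the process
\begin{align*}
L_k ~\equiv~ \exp\!\left(\theta \sum_{i \in \hypset_0,\, i \le k} R_i \;-\; (e^\theta - 1)\sum_{i \in \hypset_0,\, i \le k} \alpha_i\right),
\end{align*}
with $L_0 = 1$, and to verify it is a nonnegative supermartingale with respect to $\{\filtration_k\}$. For $k \notin \hypset_0$ the sums are unchanged, so $L_k = L_{k-1}$; for $k \in \hypset_0$, using that $\alpha_k$ is $\filtration_{k-1}$-measurable and that conditional superuniformity gives $\prob{R_k = 1 \mid \filtration_{k-1}} \le \alpha_k$,
\begin{align*}
\expect\!\left[e^{\theta R_k} \mid \filtration_{k-1}\right] &= 1 + (e^\theta - 1)\,\prob{R_k = 1 \mid \filtration_{k-1}} \\
&\le 1 + (e^\theta - 1)\alpha_k ~\le~ \exp\!\big((e^\theta - 1)\alpha_k\big),
\end{align*}
where the last inequality is $1 + u \le e^u$. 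Multiplying by the $\filtration_{k-1}$-measurable factor $\exp(-(e^\theta-1)\alpha_k)$ yields $\expect[L_k \mid \filtration_{k-1}] \le L_{k-1}$.

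Finally I would connect $L_k$ to the exceedance event and optimize $\theta$. Since $\sum_{i \in \hypset_0,\, i \le k} \alpha_i \le \widehat{V}_k$, I get the lower bound $L_k \ge \exp\!\big(\theta V_k - (e^\theta - 1)\widehat{V}_k\big)$. On the event $V_k > x(\widehat{V}_k + a)$ the exponent strictly exceeds $(\theta x - (e^\theta - 1))\widehat{V}_k + a\theta x$. Choosing $\theta = \theta_x$, the unique positive root of $e^\theta = 1 + \theta x$, makes the coefficient $\theta x - (e^\theta - 1)$ vanish, so $L_k > \exp(a\theta_x x)$ on the event. Applying Ville's inequality $\prob{\sup_k L_k \ge \lambda} \le \expect[L_0]/\lambda = 1/\lambda$ with $\lambda = \exp(a\theta_x x)$ then gives $\prob{\exists\, k:\ V_k > x(\widehat{V}_k + a)} \le \exp(-a\theta_x x)$, which is the claim.

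The main obstacle — really the only delicate point — is the mismatch between numerator and denominator: $V_k$ sums rejections over null indices only, whereas $\widehat{V}_k$ sums $\alpha_i$ over \emph{all} indices. The resolution is to define the supermartingale purely on null indices and then exploit that the omitted non-null $\alpha_i$ terms only enlarge $\widehat{V}_k$, loosening the exponent in the favorable direction. One must also confirm that $L_k$ stays $\filtration_k$-measurable despite $\hypset_0$ being treated as a fixed (if unknown) deterministic set, and record the elementary facts about $\theta_x$: its existence and uniqueness as a positive root in the relevant regime $x > 1$, and that $\theta x - (e^\theta - 1)$ equals exactly $0$ there, so that the coefficient of $\widehat{V}_k$ cancels rather than merely being bounded.
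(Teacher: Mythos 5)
Your proof is correct and is essentially \emph{the} proof of this result: the paper itself states it as an imported fact (Lemma 1 of \(\KR\)) without reproducing an argument, and the proof in \(\KR\) is exactly your construction --- the exponential supermartingale \(\exp\bigl(\theta V_k - (e^\theta - 1)\sum_{i \in \hypset_0,\, i \le k}\alpha_i\bigr)\) verified one step at a time via conditional superuniformity, the observation that the non-null/all-indices mismatch is benign since \(\sum_{i \in \hypset_0,\, i \le k}\alpha_i \le \widehat{V}_k\) only loosens the exponent favorably, and Ville's inequality at the optimized root \(\theta_x\) of \(e^\theta = 1 + \theta x\). Your fine print is also the right fine print: \(\theta_x > 0\) exists uniquely only for \(x > 1\) (for \(x \le 1\) the bound is vacuously \(1\)), and measurability goes through because \(\hypset_0\) is a fixed deterministic set and \(\alpha_k\) is \(\filtration_{k-1}\)-measurable.
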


\begin{proof}[Proof of \Cref{fact:ExpectedSupremumBound}]
First, we characterize \(x\) when \(\exp(-a\theta_x x) = \delta\) in \Cref{fact:MartingaleLemma}:
\begin{align*}
	\theta_x x &= -\frac{\log(\delta)}{a},\\
	\theta_x &= \log\left(-\frac{\log(\delta)}{a} + 1\right). \tag*{by \(\exp(\theta_x) = 1+ \theta_x x\)}
\end{align*} We now substitute this equality for \(\theta_x\) into the following:
\begin{align*}
	\exp(\theta_x) &= 1 + \theta_x x,\\
	x&=\frac{\exp(\theta_x) - 1}{\theta_x}\\
	&= \frac{\frac{-\log(\delta)}{a}}{\log\left(-\frac{\log(\delta)}{a} + 1\right)}\\
	&=\frac{\log\left(\frac{1}{\delta}\right)}{a \log\left(1 + \frac{\log\left(\frac{1}{\delta}\right)}{a}\right)}\\
	&=\coefext{a}{\frac{1}{\delta}}.
\end{align*}

Now, we bound the expectation:
\begin{align*}
	\expect\left[\sup_{k \in \naturals}\ \frac{\FDP(\rejset_k)}{\widehat{\FDP}_a(\rejset_k)}\right] &= \int\limits_0^\infty \prob{\sup_{k \in naturals}\ \frac{\FDP(\rejset_k)}{\widehat{\FDP}_a(\rejset_k)} > x}\ dx \tag*{by quantile formula for expectation}\\
	&\leq \int\limits_0^\infty \exp(-a\theta_x x)\ dx \tag*{by \Cref{fact:MartingaleLemma}}\\
	&=\int\limits_{1}^0 \delta\ d \coefext{a}{\frac{1}{\delta}} \tag*{substitute in \(\delta\) and \(\exp(-a\theta_xx)\)}\\
	&=0 \cdot \coefext{a}{0} - \coefext{a}{1} + \int\limits_{0}^1 \coefext{a}{\frac{1}{\delta}}\ d\delta.
\end{align*}

To simplify the last expression, we observe the following two facts:
\begin{align*}
	\underset{\delta\rightarrow 0^+}{\lim} \delta \cdot \coefext{a}{\frac{1}{\delta}} ~&=~ 0,\\
	\underset{\delta\rightarrow 1^-}{\lim} \delta \cdot \coefext{a}{\frac{1}{\delta}} ~&=~ 1.
\end{align*} Thus, we can drop \(0 \cdot \coefext{a}{0}\), since it is 0, and \(- 1 \cdot \coefext{a}{1}\) since it negative, and we achieve our desired upper bound.
\end{proof}

We proceed to prove \Cref{thm:GeneralSupremumBound}.	
\begin{align*}
	\supFD_{t_{r^*}} &= \expect\left[\underset{k \geq t_{r^*}}{\sup}~\FDP(\rejset_k)\right]\\
	&=\expect\left[\underset{r \geq r^*}{\sup}~\FDP(\rejset_{t_r})\right]\\
	&\leq \expect\left[\underset{r \geq r^*}{\sup}~\frac{\FDP(\rejset_{t_r})}{\widehat{\FDP}_a(\rejset_{t_r})}\right] \cdot \left(\underset{r \geq r^*}{\sup}\ \widehat{\FDP}_a(\rejset_{t_r})\right) \tag*{relaxation of \(\sup\)}\\
	&\leq \frac{c_a}{\coefext{a}{\frac{1}{\delta^*}}} \cdot \left(\underset{r \geq r^*}{\sup}\ \overline{\FDP}_a(\rejset_k)\right) \tag*{by \Cref{fact:ExpectedSupremumBound} and definition of \(\widehat{\FDP}_a(\rejset_k)\)}\\
	&\leq \frac{c_a \epsilon^*}{\coefext{a}{\frac{1}{\delta^*}}}. \tag*{by Lemma~\ref{lemma:GeneralFDPBarLemma}}
\end{align*} Thus, we have shown that the general \(\SupLORD\) algorithm controls \(\supFD\) as claimed.

\section{Tighter Fixed Time \(\FDR\) Control}
\label{sec:FDRSupLORDProof}

A tighter, fixed time \(\FDR\) bound exists for \(\SupLORD\) with additional constraints:
\begin{theorem}
	 Assuming independent superuniformity \eqref{eqn:IndSuperUniform}, any GAI algorithm with a monotone spending schedule \eqref{eqn:MonotoneSchedule} that satisfies condition~\eqref{eqn:GeneralSupLORDBoostConditions} and
	 \begin{align}
	     \beta_0 + \beta_1 \leq \frac{\epsilon^*}{\coefext{a}{\frac{1}{\delta^*}}},
	     \label{eqn:FixedFDRBetaBound}
	 \end{align}
	can ensure the following fixed time \(\FDR\) bound:
	\ifarxiv{}{\vspace{-0.1in}}
	\begin{align*}
		\FDR_1 ~\leq~ \frac{\epsilon^*}{\coefext{a}{\frac{1}{\delta^*}}}.
	\end{align*}
	\label{thm:FDRSupLORD}
\end{theorem}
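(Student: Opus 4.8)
The plan is to prove this sharper fixed-time bound through the classical ``super-uniformity'' argument of \JM\ and \citet{ramdas2017online}, rather than the martingale bound of \KR\ used for \FDX\ and \supFD; this is exactly why the extra hypotheses of independence~\eqref{eqn:IndSuperUniform} and monotonicity~\eqref{eqn:MonotoneSchedule} are imposed here. Write \(b \equiv \frac{\epsilon^*}{\coefext{a}{\frac{1}{\delta^*}}}\), so the target becomes \(\FDR(\rejset_k)\le b\) at every fixed \(k\), whence \(\FDR_1=\sup_{k\ge 1}\FDR(\rejset_k)\le b\).

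The first step is the standard super-uniformity lemma for monotone rules under independent superuniform nulls, which yields
\begin{align*}
\FDR(\rejset_k)=\sum_{j\in\hypset_0,\,j\le k}\expect\!\left[\frac{R_j}{\left|\rejset_k\right|\vee 1}\right]\;\le\;\expect\!\left[\frac{\sum_{j\in\hypset_0,\,j\le k}\alpha_j}{\left|\rejset_k\right|\vee 1}\right]\;\le\;\expect\!\left[\frac{\widehat{V}_k}{\left|\rejset_k\right|\vee 1}\right].
\end{align*}
I would establish the middle inequality by the usual leave-one-out device: for each fixed null \(j\), monotonicity~\eqref{eqn:MonotoneSchedule} guarantees that suppressing the rejection of \(j\) does not increase any other rejection, while independence~\eqref{eqn:IndSuperUniform} lets me integrate out \(P_j\) after conditioning on the remaining p-values, so that superuniformity~\eqref{eqn:SuperUniform} replaces the indicator \(R_j\) by its bound \(\alpha_j\). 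The last inequality simply enlarges the numerator from the null alphas to \(\widehat{V}_k=\sum_{i\le k}\alpha_i\).

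The second and more novel step is to promote this to a deterministic bound \(\widehat{V}_k\le b\,(\left|\rejset_k\right|\vee 1)\) valid at every \(k\), including before the \(r^*\)-th rejection, which is precisely where condition~\eqref{eqn:FixedFDRBetaBound} is used. From the wealth identity~\eqref{eqn:AlphaInvestingWealth} and nonnegativity of the wealth implied by the invariant~\eqref{eqn:WealthInvariant}, I have \(\widehat{V}_k\le \beta_0+\sum_{m=1}^{\left|\rejset_k\right|}\beta_m\), the total boost earned through the \(\left|\rejset_k\right|\)-th rejection. It therefore suffices to show \(\beta_0+\sum_{m=1}^{R}\beta_m\le b\,(R\vee 1)\) for every rejection count \(R\ge 0\): the cases \(R=0\) and \(R=1\) reduce to \(\beta_0\le b\) and \(\beta_0+\beta_1\le b\), the latter being exactly~\eqref{eqn:FixedFDRBetaBound}, while for \(R\ge 2\) I peel off \(\beta_0+\beta_1\le b\) and bound each remaining boost by \(\beta_m\le b\), obtaining \(\beta_0+\sum_{m=1}^R\beta_m\le b+(R-1)b=bR\). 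Chaining the two steps gives \(\FDR(\rejset_k)\le\expect[\widehat{V}_k/(\left|\rejset_k\right|\vee 1)]\le b\) for all \(k\), and taking the supremum proves the theorem.

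The main obstacle is the per-rejection-count inequality \(\beta_0+\sum_{m=1}^R\beta_m\le b\,(R\vee 1)\) in the intermediate regime \(2\le R< r^*\): condition~\eqref{eqn:GeneralSumSupLordBoostCondition} only caps the \emph{total} first-phase boost \(\beta_0+\sum_{m=1}^{r^*-1}\beta_m\le \frac{\epsilon^* r^*}{\coefext{a}{\frac{1}{\delta^*}}}-a\) and not the individual boosts, so I must ensure the boost sequence distributes this budget without any partial average exceeding \(b\). Condition~\eqref{eqn:FixedFDRBetaBound} is engineered to handle the worst case \(R=1\) (where a single rejection leaves the denominator at \(1\)), and the per-term bound \(\beta_m\le b\) holds automatically for \(m\ge r^*\) by~\eqref{eqn:GeneralIncreaseSupLordBoostCondition} and for the intermediate indices under the default sequence~\eqref{eqn:GeneralSimpleSupLORDB}, where each boost equals \(b-\tfrac{a}{r^*}\le b\). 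Notably this route produces the sharp constant \(b\) with no \(c_a\)-inflation, in contrast to the stopping-time bound of \Cref{thm:GeneralSupremumBound}.
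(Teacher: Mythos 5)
Your proof is correct and is in substance the same as the paper's: the paper simply cites Theorem~1 of \citet{ramdas2017online} (\Cref{fact:GAI++}), whose proof consists of exactly your two steps --- the leave-one-out super-uniformity lemma, which needs \eqref{eqn:IndSuperUniform} and \eqref{eqn:MonotoneSchedule}, followed by the deterministic wealth bound \(\widehat{V}_k \leq \frac{\epsilon^*}{\coefext{a}{\frac{1}{\delta^*}}}(\left|\rejset_k\right| \vee 1)\) --- and then concludes from \eqref{eqn:FixedFDRBetaBound}; you have merely unpacked that black box. One worthwhile remark: the obstacle you flag is genuine and applies equally to the paper's own one-line deduction, since condition~\eqref{eqn:GeneralSumSupLordBoostCondition} caps only the \emph{total} first-phase boost while \Cref{fact:GAI++}'s bound is \(\max(\{\beta_0 + \beta_1\} \cup \{\beta_i\}_{i \in \naturals \cup \{0\}})\), so both arguments implicitly require each intermediate boost to satisfy \(\beta_m \leq \frac{\epsilon^*}{\coefext{a}{\frac{1}{\delta^*}}}\) for \(2 \leq m < r^*\) --- a condition not entailed by the theorem's stated hypotheses but satisfied by the default sequences \eqref{eqn:GeneralSimpleSupLORDB} and \eqref{eqn:DefaultmFDRBoostSequence}, as you correctly note.
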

\ifarxiv{}{\vspace{-0.25in}} This result uses the same proof techniques as \cite{ramdas2017online, ramdas2018saffron, tian2019addis} for \(\FDR\) control, hence the much stronger assumptions and control only at fixed times. Thus, we can improve our \(\FDR\) control by a factor \(c_a\) of the cost of usability --- independence is not necessarily a practical assumption, and control only at fixed times denies the flexibility of early stopping to the user.
\begin{proof}
To prove \Cref{thm:FDRSupLORD}, we utilize the following theorem from \cite{ramdas2017online}:
\begin{fact}[Theorem 1 from \citealt{ramdas2017online}]
	Assuming independent superuniformity \eqref{eqn:IndSuperUniform}, any GAI algorithm with a monotone spending schedule \cref{eqn:MonotoneSchedule} with boost sequence \(\{\beta\}_{i \in \naturals \cup \{0\}}\) controls \(\FDR\) at any fixed time:
	\begin{align*}
		\FDR_1 \leq \max(\{\beta_0 + \beta_1\} \cup \{\beta_i\}_{i \in \naturals \cup \{0\}}).
	\end{align*}
	\label{fact:GAI++}
\end{fact}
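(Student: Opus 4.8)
The plan is to reduce the claim to Fact~\ref{fact:GAI++} (Theorem~1 of \citet{ramdas2017online}) and then carry out a term-by-term verification of the resulting maximum. The hypotheses of \Cref{thm:FDRSupLORD}—independent superuniformity \eqref{eqn:IndSuperUniform} together with a monotone spending schedule \eqref{eqn:MonotoneSchedule}—are precisely the hypotheses required by Fact~\ref{fact:GAI++}, so that fact applies verbatim and yields
\begin{align*}
	\FDR_1 ~\leq~ \max\left(\{\beta_0 + \beta_1\} \cup \{\beta_i\}_{i \in \naturals \cup \{0\}}\right).
\end{align*}
It therefore suffices to show that every element appearing in this maximum is at most $\frac{\epsilon^*}{\coefext{a}{\frac{1}{\delta^*}}}$, and the whole proof becomes a bookkeeping argument over the elements of the boost sequence.

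I would then partition the boosts into three groups. The combined term $\beta_0 + \beta_1$ is bounded by $\frac{\epsilon^*}{\coefext{a}{\frac{1}{\delta^*}}}$ directly by the added hypothesis \eqref{eqn:FixedFDRBetaBound}; because the boost sequence is nonnegative, this simultaneously bounds $\beta_0$ and $\beta_1$ individually. Every boost $\beta_i$ earned from the $r^*$-th rejection onward (the indices $i > t_{r^* - 1}$) is at most $\frac{\epsilon^*}{\coefext{a}{\frac{1}{\delta^*}}}$ by condition \eqref{eqn:GeneralIncreaseSupLordBoostCondition}. The only remaining terms are the first-phase boosts $\beta_2, \dots, \beta_{r^* - 1}$, which condition \eqref{eqn:GeneralSumSupLordBoostCondition} constrains only in aggregate; here I would appeal to the default even-split assignment \eqref{eqn:GeneralSimpleSupLORDB}, under which each such boost equals $\frac{1}{r^*}\bigl(\frac{\epsilon^* r^*}{\coefext{a}{\frac{1}{\delta^*}}} - a\bigr) = \frac{\epsilon^*}{\coefext{a}{\frac{1}{\delta^*}}} - \frac{a}{r^*}$, which is strictly below $\frac{\epsilon^*}{\coefext{a}{\frac{1}{\delta^*}}}$ since $a, r^* > 0$. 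Collecting the three groups, the maximum in Fact~\ref{fact:GAI++} is at most $\frac{\epsilon^*}{\coefext{a}{\frac{1}{\delta^*}}}$, which is exactly the claimed bound.

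I expect the crux—rather than any deep obstacle—to be the special $\beta_0 + \beta_1$ term that Fact~\ref{fact:GAI++} singles out. The FDX boost conditions \eqref{eqn:GeneralSupLORDBoostConditions} do \emph{not} bound this term on their own: with the natural even-split boosts one has $\beta_0 + \beta_1 = \frac{2\epsilon^*}{\coefext{a}{\frac{1}{\delta^*}}} - \frac{2a}{r^*}$, which typically exceeds $\frac{\epsilon^*}{\coefext{a}{\frac{1}{\delta^*}}}$ (precisely when $\epsilon^* r^* > 2a\,\coefext{a}{\frac{1}{\delta^*}}$). This is exactly why the extra hypothesis \eqref{eqn:FixedFDRBetaBound} must be imposed, forcing the practitioner to shift wealth away from the first reward, and the one point needing care is checking that such a redistribution can still be made compatible with the aggregate constraint \eqref{eqn:GeneralSumSupLordBoostCondition}. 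Once that compatibility is confirmed, the proof closes immediately by the three-group verification above.
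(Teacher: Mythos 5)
Your proposal does not prove the assigned statement; it assumes it. The statement here is Fact~\ref{fact:GAI++} itself --- Theorem~1 of \citet{ramdas2017online} --- which the paper imports without proof (it appears, stated as a fact, inside the paper's proof of \Cref{thm:FDRSupLORD}). Your very first step is to declare that ``that fact applies verbatim and yields'' the bound \(\FDR_1 \leq \max(\{\beta_0+\beta_1\}\cup\{\beta_i\}_{i \in \naturals \cup \{0\}})\), and everything after that is bookkeeping toward the conclusion of \Cref{thm:FDRSupLORD}. That is circular as a proof of the fact. A blind proof of the fact would need to supply the two ingredients of the argument in \citet{ramdas2017online}, neither of which appears in your write-up: (i) the wealth accounting, namely that the GAI invariants \eqref{eqn:AlphaInvestingWealth} and \eqref{eqn:WealthInvariant} force \(\sum_{i \leq T}\alpha_i \leq \beta_0 + \sum_{j = 1}^{|\rejset_T|}\beta_{t_j}\), so that the empirical estimate \(\bigl(\sum_{i\leq T}\alpha_i\bigr)/(|\rejset_T|\vee 1)\) is pointwise at most \(\max(\{\beta_0+\beta_1\}\cup\{\beta_i\}_{i\in\naturals\cup\{0\}})\); and (ii) the leave-one-out coupling step showing \(\FDR(\rejset_T) \leq \expect\bigl[\sum_{i\leq T}\alpha_i/(|\rejset_T|\vee 1)\bigr]\), which is precisely where independence \eqref{eqn:IndSuperUniform} and monotonicity \eqref{eqn:MonotoneSchedule} are genuinely used: one swaps a null p-variable for a hallucinated rejection and uses monotonicity of the schedule to argue the denominator \(|\rejset_T|\vee 1\) can only grow, then applies superuniformity conditionally. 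Without step (ii) in particular, nothing in your proposal explains why the two hypotheses you cite are needed at all.

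To be fair, what you wrote is a perfectly reasonable proof of the \emph{downstream} result, \Cref{thm:FDRSupLORD}, and there it coincides with the paper's own one-line deduction (condition \eqref{eqn:FixedFDRBetaBound} plus Fact~\ref{fact:GAI++}); indeed your bookkeeping is more careful than the paper's, since you correctly flag that conditions \eqref{eqn:GeneralSupLORDBoostConditions} constrain the first-phase boosts \(\beta_2,\dots,\beta_{r^*-1}\) only in aggregate, so the stated conclusion requires either the even-split default \eqref{eqn:GeneralSimpleSupLORDB} (or the sequence \eqref{eqn:DefaultmFDRBoostSequence}) or an explicit per-boost cap --- a point the paper's ``follows directly'' glosses over. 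But none of that discharges the statement you were asked to prove: as a proof of Fact~\ref{fact:GAI++}, the attempt is missing the entire argument.
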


Thus, \Cref{thm:FDRSupLORD} follows directly from condition~\eqref{eqn:FixedFDRBetaBound} and \Cref{fact:GAI++}.
\end{proof}
An example specification for \(\{\beta_i\}_{i \in \naturals \cup \{0\}}\) that satisfies the conditions of \Cref{thm:FDRSupLORD} is as follows:
\begin{align}
	\beta_i = \begin{cases}
		\frac{1}{2r^*}\left(\frac{\epsilon^* r^*}{\coefext{a}{\frac{1}{\delta^*}}} - a\right) & i = 0\\
		\frac{1}{2r^*}\left(\frac{\epsilon^* r^*}{\coefext{a}{\frac{1}{\delta^*}}} - a \cdot \ind{r^* > 1}\right) & i\leq t_1\\
		\frac{1}{r^*}\left(\frac{\epsilon^* r^*}{\coefext{1}{\frac{1}{\delta^*}}} - a\right) & t_1 < i \leq t_{r^* - 1}\\
		\frac{\epsilon^*}{\coefext{1}{\frac{1}{\delta^*}}} & i > t_{r^* - 1}
	\end{cases} ~ ~ .
	\label{eqn:DefaultmFDRBoostSequence}
\end{align} We simply reduce \(\beta_0 + \beta_1\) such that they sum to at most the upper bound in condition~\eqref{eqn:GeneralIncreaseSupLordBoostCondition}.

\section{\(\mFDR\) Control for \(\SupLORD\)}
\label{sec:mFDRSupLORDProof}

\begin{fact}[Theorem 2 from \citealt{ramdas2017online}]
Let \(\tau\) be a stopping time. Assuming conditional superuniformity \eqref{eqn:SuperUniform}, any GAI algorithm with boost sequence \(\{\beta_i\}_{i \in \naturals \cup \{0\}}\) ensures the following:
\begin{align*}
    \mFDR(\rejset_\tau) \leq \max(\{\beta_0 + \beta_1\} \cup \{\beta_i\}_{i \in \naturals \cup \{0\}}).
\end{align*}
\label{fact:mFDRControl}
\end{fact}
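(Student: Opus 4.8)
The plan is to bound the numerator of $\mFDR(\rejset_\tau)$, namely $\expect\left[\left|\rejnull_\tau\right|\right]$, by $b \cdot \expect\left[\left|\rejset_\tau\right| \vee 1\right]$, where $b \equiv \max(\{\beta_0 + \beta_1\} \cup \{\beta_i\}_{i \in \naturals \cup \{0\}})$. Since $\mFDR$ in \eqref{def:mfdr} is a \emph{ratio of expectations} rather than an expectation of a ratio, dividing through by the (strictly positive) denominator $\expect\left[\left|\rejset_\tau\right| \vee 1\right]$ then gives the claim with no Jensen-type loss — this is precisely why $\mFDR$ at stopping times is tractable under only \eqref{eqn:SuperUniform}. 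First I would write $\left|\rejnull_\tau\right| = \sum_{j \in \hypset_0} R_j \ind{j \leq \tau}$ and control each null term by its alpha value.

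The delicate point is that $\tau$ is random, so the fixed-time inequality $\expect[R_j] \leq \expect[\alpha_j]$ is not directly available. To handle this, observe that $\{j \leq \tau\} = \{\tau \leq j - 1\}^c \in \filtration_{j-1}$ because $\tau$ is a stopping time, and that $\alpha_j$ is $\filtration_{j-1}$-measurable by predictability. Conditioning on $\filtration_{j-1}$ and invoking conditional superuniformity \eqref{eqn:SuperUniform} with the $\filtration_{j-1}$-measurable threshold $s = \alpha_j$ gives $\expect[R_j \mid \filtration_{j-1}] = \prob{P_j \leq \alpha_j \mid \filtration_{j-1}} \leq \alpha_j$ for each null $j$, so
\begin{align*}
\expect\left[\left|\rejnull_\tau\right|\right] = \sum_{j \in \hypset_0} \expect[R_j \ind{j \leq \tau}] \leq \sum_{j \in \hypset_0} \expect[\alpha_j \ind{j \leq \tau}] \leq \expect\left[\sum_{j = 1}^{\tau} \alpha_j\right],
\end{align*}
where the last step drops the restriction to nulls using $\alpha_j \geq 0$.

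Next I would convert the accumulated alpha spending into a statement about rewards via the wealth identity \eqref{eqn:AlphaInvestingWealth}. Nonnegativity of the wealth, which holds pointwise at every time (hence at $\tau$) as a consequence of the investing invariant \eqref{eqn:WealthInvariant} together with $\beta_i \geq 0$, yields $\sum_{j = 1}^{\tau} \alpha_j \leq \beta_0 + \sum_{r = 1}^{\left|\rejset_\tau\right|} \beta_r$, writing $\beta_r$ for the reward collected at the $r$-th rejection. A purely deterministic counting bound then finishes the job: grouping the seed $\beta_0$ with the first reward gives $\beta_0 + \beta_1 \leq b$, and each of the remaining $\left|\rejset_\tau\right| - 1$ rewards is at most $b$, so $\beta_0 + \sum_{r = 1}^{\left|\rejset_\tau\right|} \beta_r \leq b \left(\left|\rejset_\tau\right| \vee 1\right)$ (the $\vee 1$ covering the vacuous case $\left|\rejset_\tau\right| = 0$, where the bound reads $\beta_0 \leq b$). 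Taking expectations and combining with the previous display gives $\expect\left[\left|\rejnull_\tau\right|\right] \leq b\,\expect\left[\left|\rejset_\tau\right| \vee 1\right]$, as desired.

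The main obstacle is the stopping-time step: everything else is bookkeeping, but the passage from a fixed horizon to a random, data-dependent $\tau$ must be justified carefully. The key realization that makes it painless is that one never needs to optionally stop a martingale or control $W(\tau)$ in expectation; it suffices that $\{j \leq \tau\}$ is $\filtration_{j-1}$-measurable and that $W(\tau) \geq 0$ holds surely, both of which follow directly from the definitions. I would also flag the indexing convention for the boosts (reward indexed by rejection count versus hypothesis index) as the one place to state precisely, since the grouping $\beta_0 + \beta_1 \leq b$ relies on $\beta_1$ being the reward from the first rejection.
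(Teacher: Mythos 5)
Your proof is correct, and the first thing to note is that the paper itself contains no proof of this statement: it is imported as a black-box Fact (Theorem 2 of \citealt{ramdas2017online}) and used only to deduce \Cref{thm:mFDRSupLORD}. Your reconstruction matches the mechanics of the proof in that cited reference. The reason \(\mFDR\) is tractable here is exactly the one you identify --- it is a ratio of expectations, so it suffices to show \(\expect\left[\left|\rejnull_\tau\right|\right] \leq b \cdot \expect\left[\left|\rejset_\tau\right| \vee 1\right]\) --- and your two pillars are the standard ones: (i) \(\{j \leq \tau\} = \{\tau \leq j - 1\}^c \in \filtration_{j-1}\) combined with predictability of \(\alpha_j\) and conditional superuniformity \eqref{eqn:SuperUniform} gives \(\expect\left[R_j \ind{j \leq \tau}\right] \leq \expect\left[\alpha_j \ind{j \leq \tau}\right]\) for null \(j\) with no optional stopping theorem needed; (ii) pointwise nonnegativity of wealth, which follows from \eqref{eqn:WealthInvariant} and \(\beta_i \geq 0\) applied to \eqref{eqn:AlphaInvestingWealth}, converts \(\sum_{j \leq \tau} \alpha_j\) into \(\beta_0\) plus collected rewards, after which the counting bound and division by the strictly positive denominator finish the argument. (If \(\prob{\tau = \infty} > 0\), the sums become infinite series, but monotone convergence plus the fact that wealth nonnegativity holds at every finite time closes this without change.)

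The indexing caveat you flag at the end is not pedantry --- it is the one place where the statement as transcribed in this paper genuinely requires interpretation. The wealth update \eqref{eqn:AlphaInvestingWealth} indexes boosts by hypothesis index (\(\beta_i R_i\)), under which your grouping step would need \(\beta_0 + \beta_{t_1} \leq b\) with \(t_1\) the random time of the first rejection; the stated maximum contains only \(\beta_0 + \beta_1\), and one can construct sequences (e.g.\ \(\beta_1 = 0\), \(\beta_0 = \beta_2 = b\), first rejection at time \(2\)) for which \(\beta_0 + \beta_{t_1} = 2b\) and the bound fails under that literal reading. The rejection-count convention you adopt --- reward \(\beta_r\) collected at the \(r\)-th rejection --- is the intended one: it is the convention of the paper's footnote on ``the sum of the initial wealth and first reward,'' of its experimental parameter choices \(\beta_0 = 0.1\ell,\ \beta_1 = 0.9\ell\), and of the original theorem in \citealt{ramdas2017online}. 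Under it, your deterministic bound \(\beta_0 + \sum_{r = 1}^{\left|\rejset_\tau\right|} \beta_r \leq b\left(\left|\rejset_\tau\right| \vee 1\right)\) is airtight, and the proof is complete.
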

Essentially, \cite{ramdas2017online} prove that the same algorithm can control \(\FDR\) at fixed times under p-variable independence and monotone scheduling, and \(\mFDR\) at stopping times under conditional superuniformity. Thus, we derive \(\mFDR\) control for \(\SupLORD\):
\begin{theorem}
    Let \(\tau\) be a stopping time. Assuming conditional superuniformity \eqref{eqn:SuperUniform}, any GAI that satisfies condition~\eqref{eqn:GeneralSupLORDBoostConditions} and
	 \begin{align}
	     \beta_0 + \beta_1 \leq \frac{\epsilon^*}{\coefext{a}{\frac{1}{\delta^*}}},
	     \label{eqn:FixedFDRBetaBound}
	\end{align} can ensure the following fixed time \(\FDR\) bound:
	\ifarxiv{}{\vspace{-0.1in}}
	\begin{align*}
		\mFDR(\rejset_\tau) ~\leq~ \frac{\epsilon^*}{\coefext{a}{\frac{1}{\delta^*}}}.
	\end{align*}
	\label{thm:mFDRSupLORD}
\end{theorem}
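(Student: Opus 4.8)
The plan is to mirror the proof of \Cref{thm:FDRSupLORD} almost verbatim, substituting \Cref{fact:mFDRControl} --- the stopping-time \(\mFDR\) analogue of \Cref{fact:GAI++} --- in place of \Cref{fact:GAI++}. The crucial point is that \Cref{fact:mFDRControl} already delivers stopping-time \(\mFDR\) control assuming only conditional superuniformity \eqref{eqn:SuperUniform}, which is exactly the hypothesis of the theorem; no independence or monotonicity of the spending schedule is required. Consequently the entire content of the proof reduces to bounding the quantity \(\max(\{\beta_0 + \beta_1\} \cup \{\beta_i\}_{i \in \naturals \cup \{0\}})\) appearing on the right-hand side of \Cref{fact:mFDRControl} by \(\frac{\epsilon^*}{\coefext{a}{\frac{1}{\delta^*}}}\). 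Note that, unlike the \(\supFD\) and \(\FDX\) results, this bound needs no constraint of the form \(\tau \geq t_{r^*}\), since the right-hand side of \Cref{fact:mFDRControl} is a pointwise function of the boost sequence alone.

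First I would invoke \Cref{fact:mFDRControl} to obtain \(\mFDR(\rejset_\tau) \leq \max(\{\beta_0 + \beta_1\} \cup \{\beta_i\}_{i \in \naturals \cup \{0\}})\), and then bound each term of this maximum separately. The sum \(\beta_0 + \beta_1\) is controlled directly by the added hypothesis \eqref{eqn:FixedFDRBetaBound}; every phase-two boost \(\beta_i\) with \(i > t_{r^* - 1}\) is controlled by \eqref{eqn:GeneralIncreaseSupLordBoostCondition}; and it remains only to control the phase-one boosts \(\beta_2, \dots, \beta_{r^* - 1}\). Taking the maximum of the three resulting bounds, all equal to \(\frac{\epsilon^*}{\coefext{a}{\frac{1}{\delta^*}}}\), then yields the claim.

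The main obstacle is precisely the phase-one boosts of index at least two. Condition~\eqref{eqn:GeneralSumSupLordBoostCondition} constrains only their \emph{sum} (by \(\frac{\epsilon^* r^*}{\coefext{a}{\frac{1}{\delta^*}}} - a\)), which is generally far larger than \(\frac{\epsilon^*}{\coefext{a}{\frac{1}{\delta^*}}}\) when \(r^*\) is large; an arbitrary sequence obeying \eqref{eqn:GeneralSupLORDBoostConditions} could therefore place this entire budget on a single coordinate and violate the needed per-coordinate bound. To close this gap I would exhibit a boost sequence that spreads the phase-one budget evenly, so that no individual boost exceeds the threshold --- the default sequence \eqref{eqn:DefaultmFDRBoostSequence} does exactly this, since each of its phase-one boosts equals \(\frac{1}{r^*}\big(\frac{\epsilon^* r^*}{\coefext{a}{\frac{1}{\delta^*}}} - a\big) = \frac{\epsilon^*}{\coefext{a}{\frac{1}{\delta^*}}} - \frac{a}{r^*} \leq \frac{\epsilon^*}{\coefext{a}{\frac{1}{\delta^*}}}\), while its reallocation of mass into \(\beta_0, \beta_1\) preserves \eqref{eqn:FixedFDRBetaBound}. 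Once every coordinate is verified to be at most \(\frac{\epsilon^*}{\coefext{a}{\frac{1}{\delta^*}}}\), the maximum in \Cref{fact:mFDRControl} is bounded by that value and the proof concludes.
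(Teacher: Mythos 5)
Your proposal takes the same route as the paper: invoke \Cref{fact:mFDRControl} and bound \(\max(\{\beta_0+\beta_1\}\cup\{\beta_i\}_{i\in\naturals\cup\{0\}})\) by \(\frac{\epsilon^*}{\coefext{a}{\frac{1}{\delta^*}}}\) via \eqref{eqn:FixedFDRBetaBound} and \eqref{eqn:GeneralIncreaseSupLordBoostCondition}; the paper's proof is exactly this one-liner, followed by a pointer to the sequence \eqref{eqn:DefaultmFDRBoostSequence}. You are also right that, unlike the \(\FDX\) and \(\supFD\) results, no restriction \(\tau \geq t_{r^*}\) is needed, since the right-hand side of \Cref{fact:mFDRControl} depends only on the boost sequence. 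Where you go beyond the paper is in spotting that conditions~\eqref{eqn:GeneralSupLORDBoostConditions} constrain the phase-one boosts only through their sum \eqref{eqn:GeneralSumSupLordBoostCondition}, which is roughly \(r^*\) times the per-coordinate bound that \Cref{fact:mFDRControl} requires: a sequence concentrating the entire phase-one budget on \(\beta_2\), say, satisfies every stated hypothesis yet drives the maximum up to about \(\frac{\epsilon^* r^*}{\coefext{a}{\frac{1}{\delta^*}}}-a\), so the literal ``any GAI'' reading of the theorem does not follow from \Cref{fact:mFDRControl}. This is a genuine looseness in the theorem's phrasing that the paper's proof silently shares; the word ``can ensure'' and the paper's own reference to \eqref{eqn:DefaultmFDRBoostSequence} indicate the intended achievability reading, which is precisely what your patch establishes --- each phase-one coordinate of \eqref{eqn:DefaultmFDRBoostSequence} equals \(\frac{\epsilon^*}{\coefext{a}{\frac{1}{\delta^*}}}-\frac{a}{r^*}\) (or half that for \(\beta_0,\beta_1\)), and \(\beta_0+\beta_1\) still meets \eqref{eqn:FixedFDRBetaBound}, so every element of the maximum is at most the target. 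In short: same approach as the paper, correctly executed, and your phase-one observation is a defensible correction to the universal phrasing rather than a defect in your argument.
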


\Cref{thm:mFDRSupLORD} follows directly from conditions~\eqref{eqn:GeneralSupLORDBoostConditions} and from \Cref{fact:mFDRControl}. We can use the boost sequence in \eqref{eqn:DefaultmFDRBoostSequence} as an example boost sequence that would give \(\SupLORD\) \(\mFDR\) control, since it also suffices for the assumptions in \Cref{thm:mFDRSupLORD}

\ifarxiv{\clearpage}{}
\section{Additional Numerical Simulations}
\label{sec:AdditionalExperiments}

\subsection{\(\FDR\) and \(\supFD\) in \(\FDX\) Control Simulations}
\label{subsec:ErrorFDXExperiment}

\Cref{fig:FDRErrorFDXExperiment} shows that despite the nonexistent power \(\LORD\FDX\) possessed relative to \(\SupLORD\), its error in \(\FDR\) is surprisingly close to \(\SupLORD\). On other hand, \(\LORD\FDX\) has very low \(\supFD\) in \Cref{fig:SupFDErrorFDXExperiment}.
 
\begin{figure}[h]
    \centering
    \includegraphics[width=0.7\textwidth]{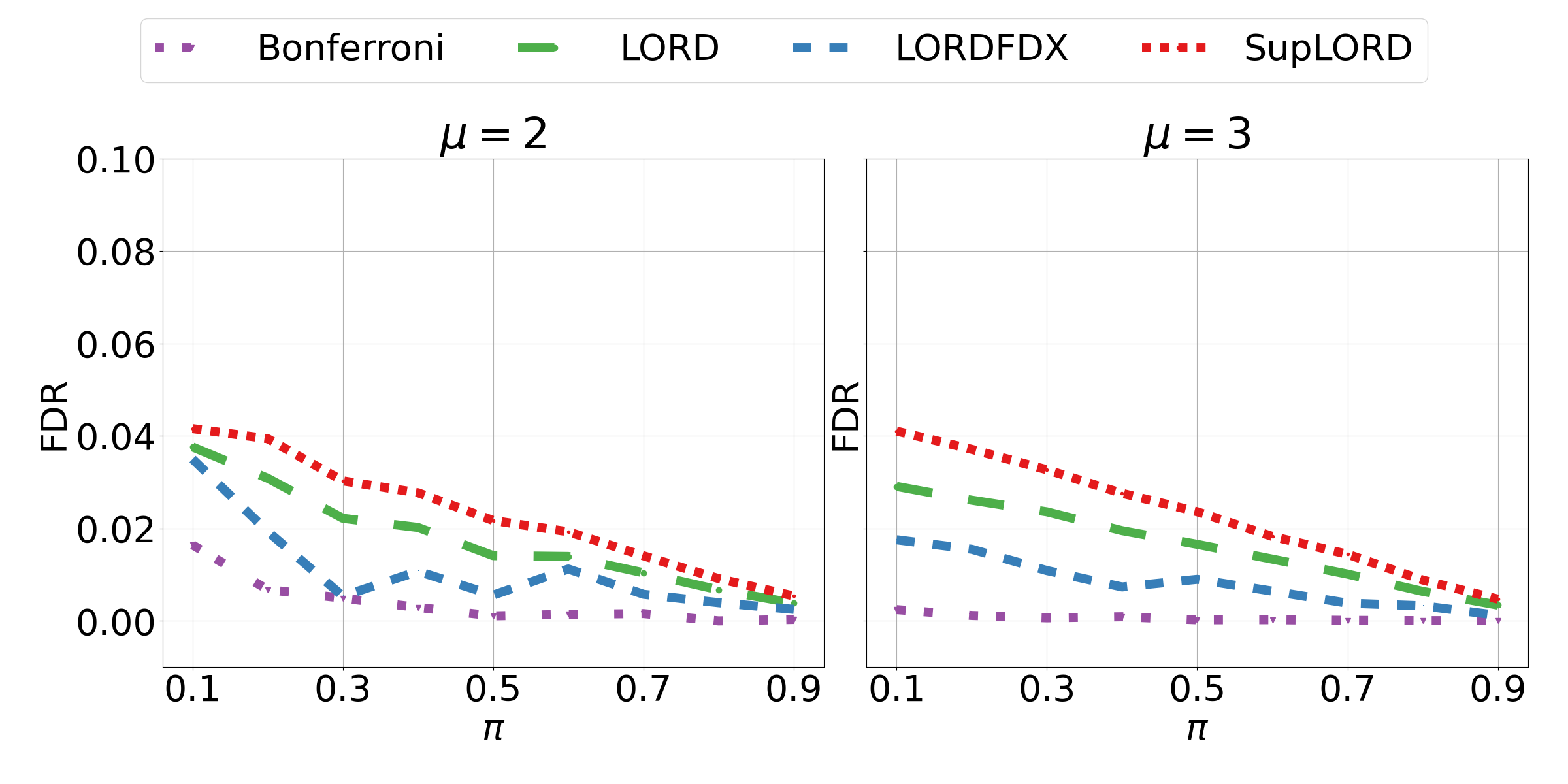}
    \caption{Plot of non-null likelihood, \(\pi\), vs.  \(\FDR\) for signal strengths of \(\mu \in \left\{2, 3\right\}\) in the constant data setting. \(\FDR\) is controlled at \(\ell =0.05\), and \(\SupLORD\) and \(\LORD\FDX\) are controlled at \(\FDX\) with \(\epsilon^*=0.15\)  at a level of \(\delta^*=0.05\). Experiment details in \Cref{subsec:FDXExperiments}. \(\SupLORD\) has \(\FDR\) that is somewhat appropriate to its results in \Cref{thm:FDRSupLORD}.}
    \label{fig:FDRErrorFDXExperiment}
\end{figure}

\begin{figure}[h]
    \centering
    \includegraphics[width=0.7\textwidth]{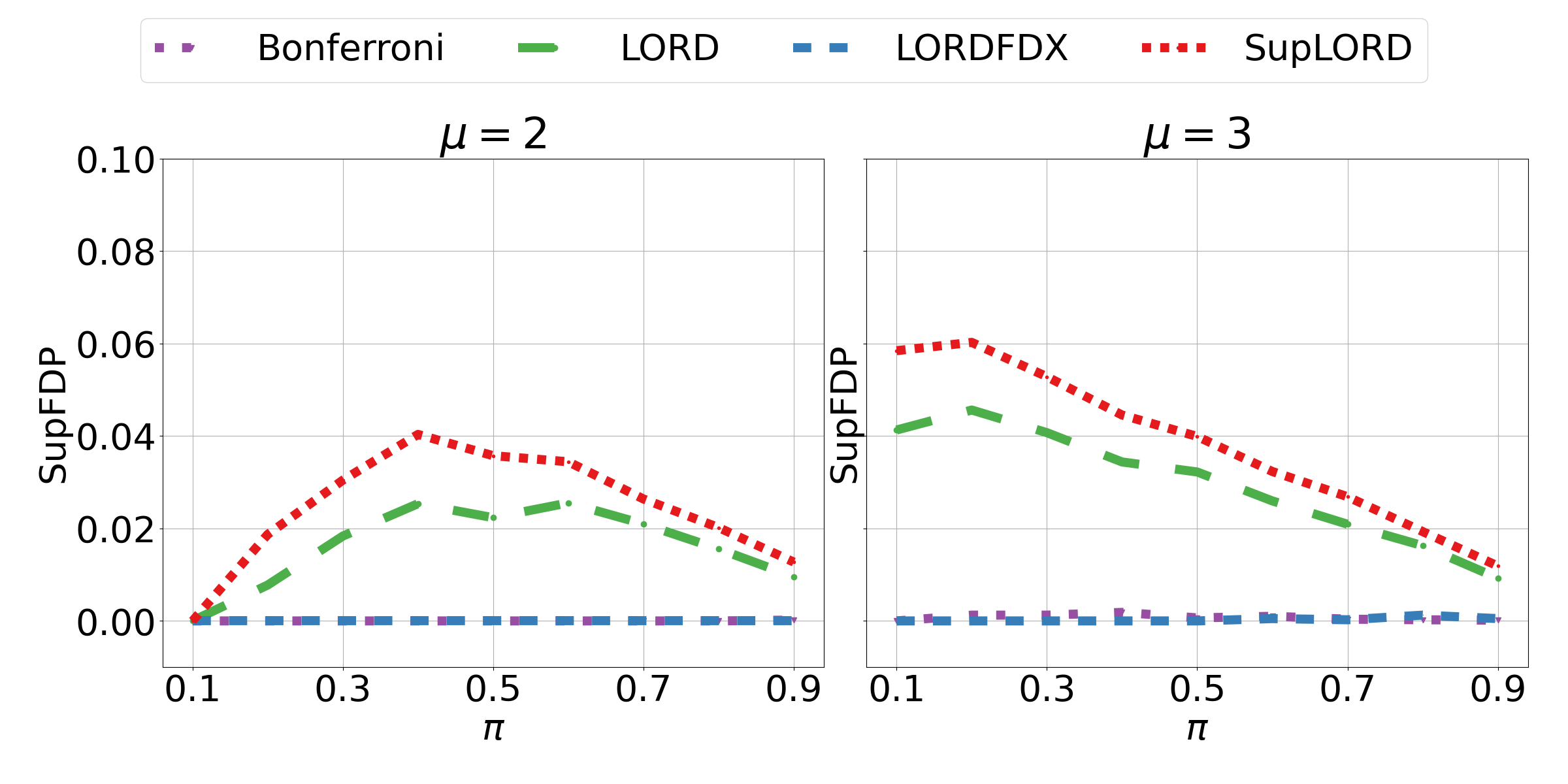}
    \caption{Plot of non-null likelihood, \(\pi\), vs.  \(\supFD_{30}\) for signal strengths of \(\mu \in \left\{2, 3\right\}\) in the constant data setting. \(\FDR\) is controlled at \(\ell =0.05\), and \(\SupLORD\) and \(\LORD\FDX\) are controlled at \(\FDX\) with \(\epsilon^*=0.15\)  at a level of \(\delta^*=0.05\). Experiment details in \Cref{subsec:FDXExperiments}. \(\SupLORD\) has \(\supFD_{30}\) below what is expected by \Cref{thm:SupremumBound}.}
    \label{fig:SupFDErrorFDXExperiment}
\end{figure}

\begin{figure}[!h]
    \centering
    \includegraphics[width=0.7\textwidth]{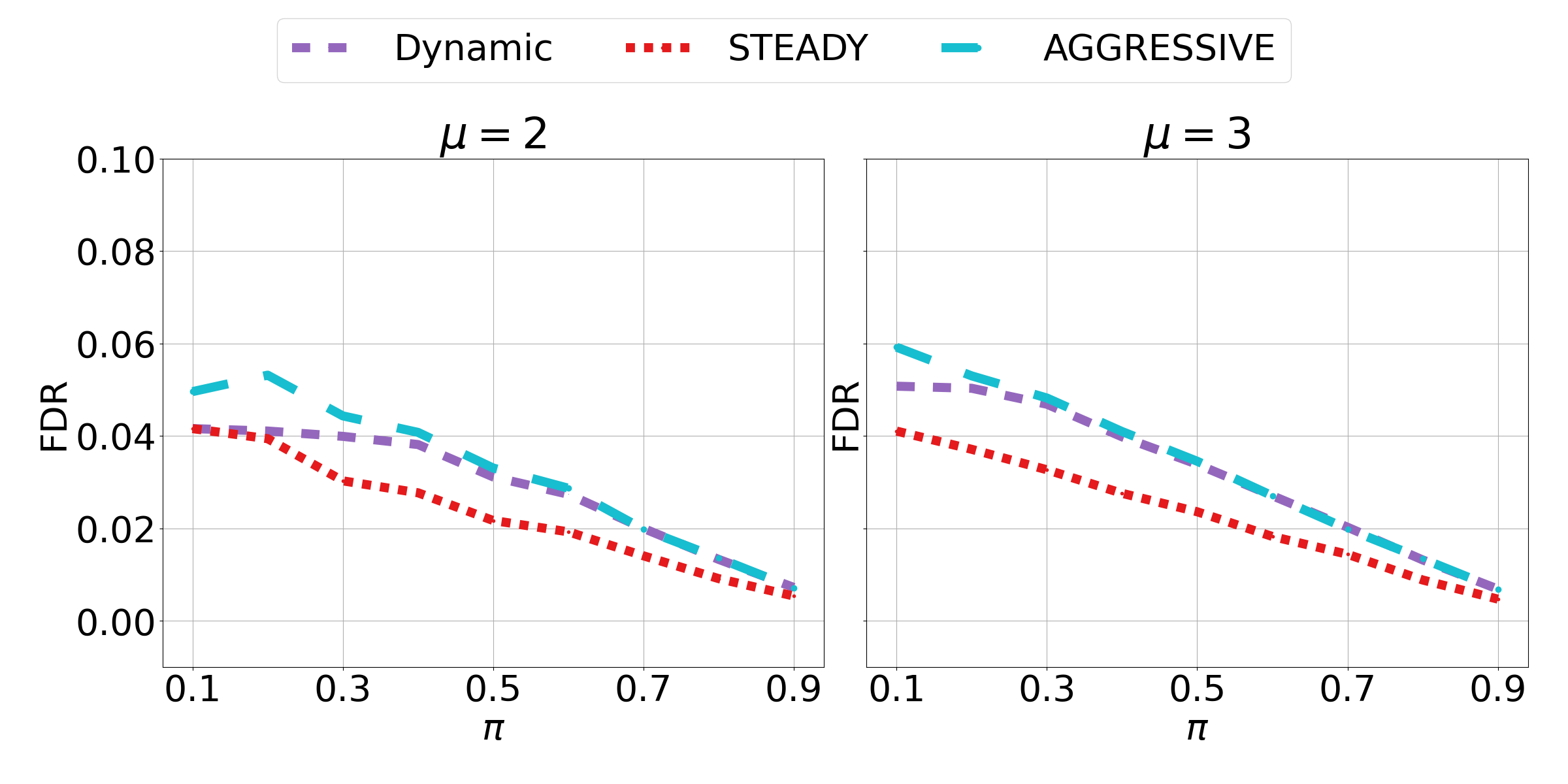}
    \caption{Plot of non-null likelihood, \(\pi\), vs.  \(\FDR\) for signal strengths of \(\mu \in \left\{2, 3\right\}\) in the constant data setting. \(\SupLORD\) is controlled at \(\FDX\) with \(\epsilon^*=0.15\)  at a level of \(\delta^*=0.05\). Data details in \Cref{subsec:FDXExperiments}, experiment details in \Cref{subsec:DynamicExperiments}. All schedules have similar \(\FDR\).}
    \label{fig:FDRErrorDynamicExperiment}
\end{figure}

\begin{figure}[!h]
    \centering
    \includegraphics[width=0.7\textwidth]{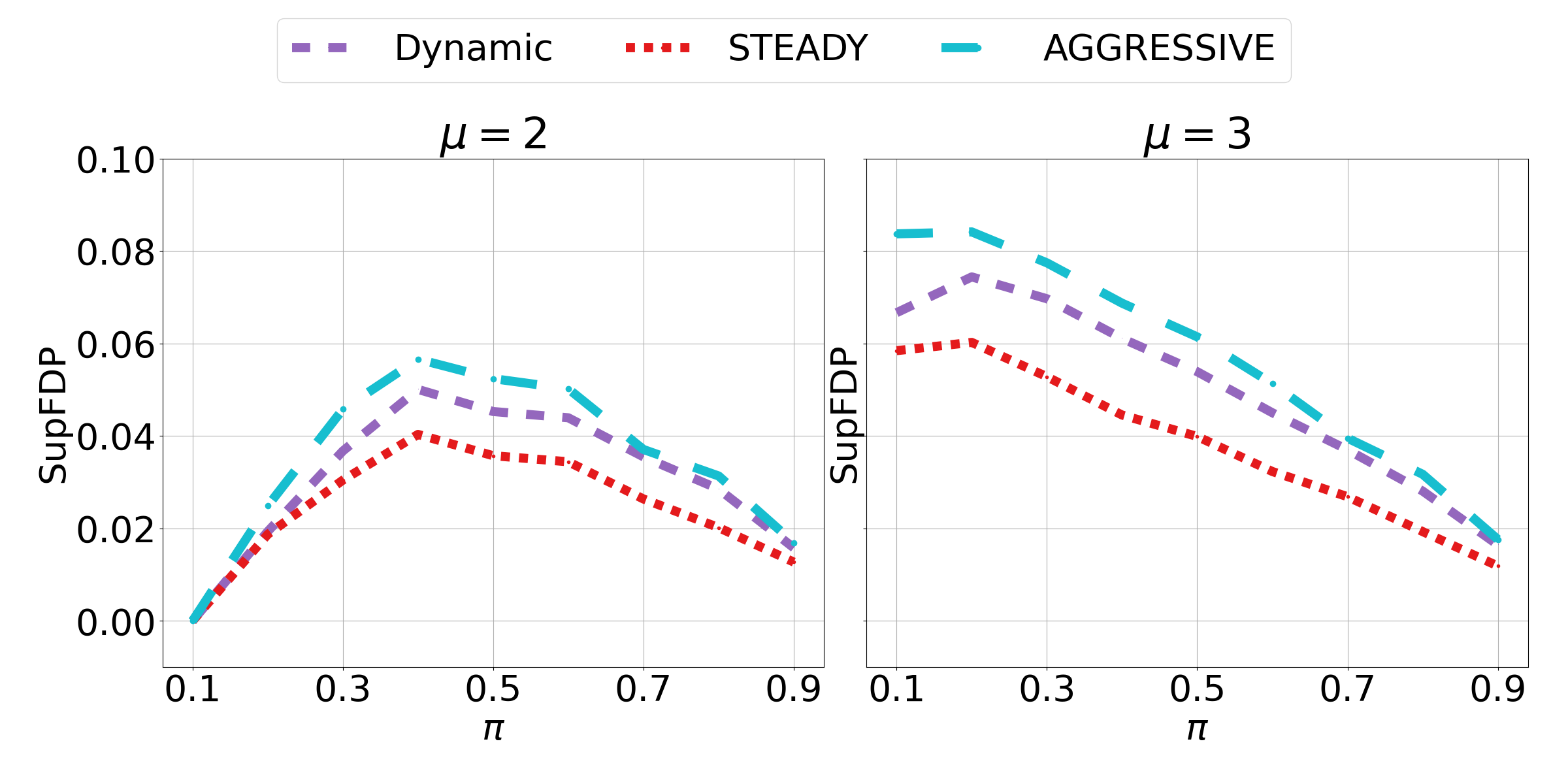}
    \caption{Plot of non-null likelihood, \(\pi\), vs.  \(\supFD_{30}\) for signal strengths of \(\mu \in \left\{2, 3\right\}\) in the constant data setting. \(\FDR\) is controlled at \(\ell =0.05\), and \(\SupLORD\) and \(\LORD\FDX\) are controlled at \(\FDX\) with \(\epsilon^*=0.15\)  at a level of \(\delta^*=0.05\). Data details in \Cref{subsec:FDXExperiments}, experiment details in \Cref{subsec:DynamicExperiments}. All schedules have similar \(\supFD\).}
    \label{fig:SupFDErrorDynamicExperiment}
\end{figure}
\clearpage

\subsection{Hidden Markov Model (HMM)}
In addition to the standard Gaussian setting, we also perform simulations using a HMM process to model a setting where non-null where each hypothesis not independent of neighboring hypotheses. We set \(\mu_i = \mu\) to be constant for all \(i\), but we define \(\pi_i\) as follows:
\begin{align*}
	\pi_1 = 0.5, \quad \quad 
	\pi_i = \begin{cases}
		\transprob & \text{if }H_{i - 1} \in \hypset_0,\\
		1 - \transprob & \text{if }H_{i - 1}\not\in \hypset_0.
	\end{cases}
\end{align*}
If we let \(\beta_i\) be the state at the \(i\)th time step, we can view the data generating process as a two state HMM model with a \(\transprob\) probability of transitioning to the other state at the next time step. In expectation, there is an equal number of hypotheses that are null and non-null, regardless of the choice of \(\transprob\). However, a smaller \(\transprob\) will result in large clusters of null or non-null hypotheses in consecutive time steps.

\subsection{Results on HMM}

We show extra results of the \(\SupLORD\) comparison with \(\LORD\) and \(\LORD\FDX\) on the HMM model in \Cref{fig:BaselineHMM}.

\begin{figure}[h!]
	\centering
	\subfigure{
		\includegraphics[width=0.7\textwidth]{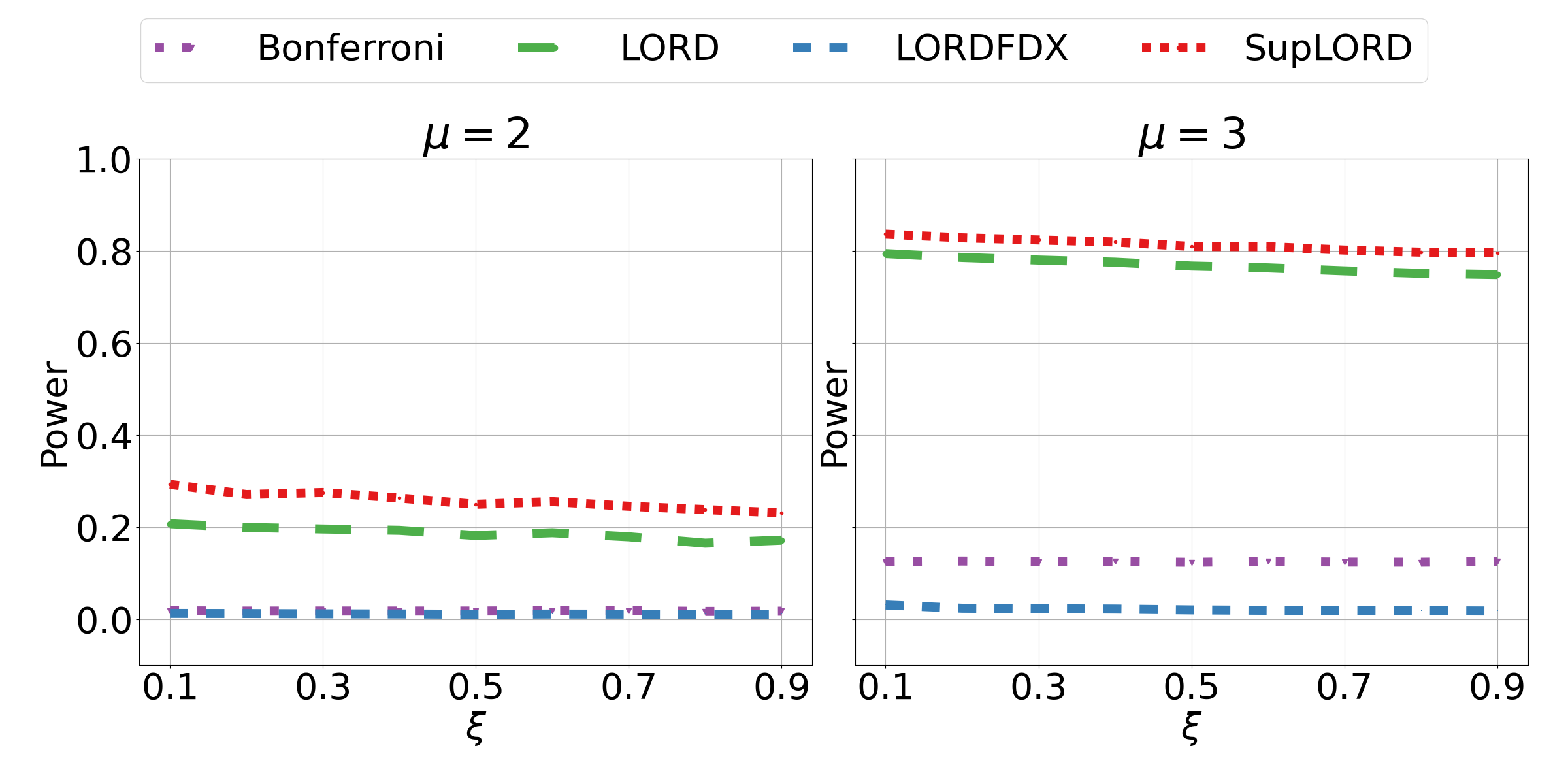}
	}
	
	\subfigure{
		\centering
		\adjincludegraphics[trim={0 0 0 {0.11\height}}, clip,width=0.7\textwidth]{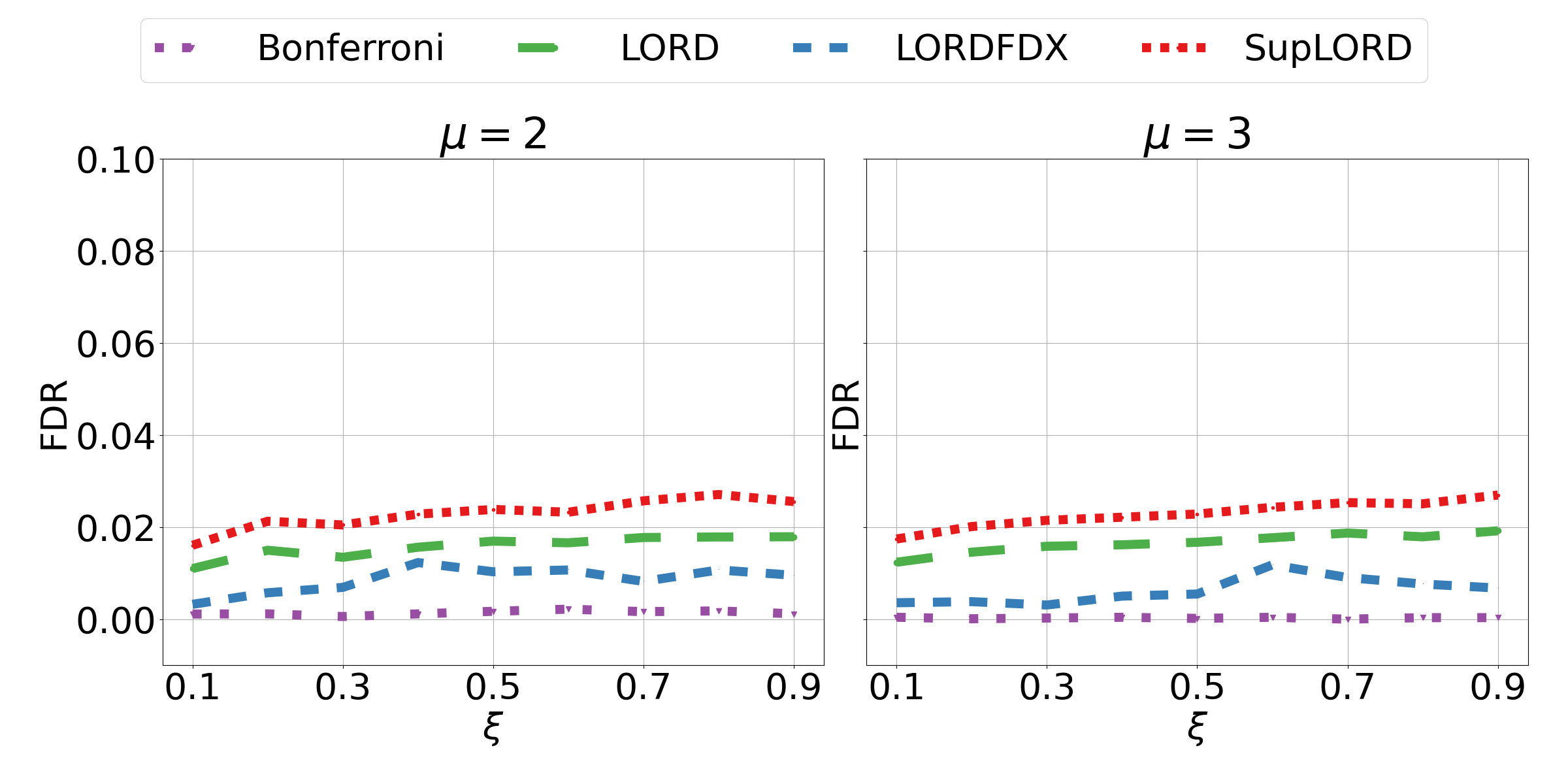}
	}
	
	\subfigure{
		\centering
		\adjincludegraphics[trim={0 0 0 {0.11\height}}, clip,width=0.7\textwidth]{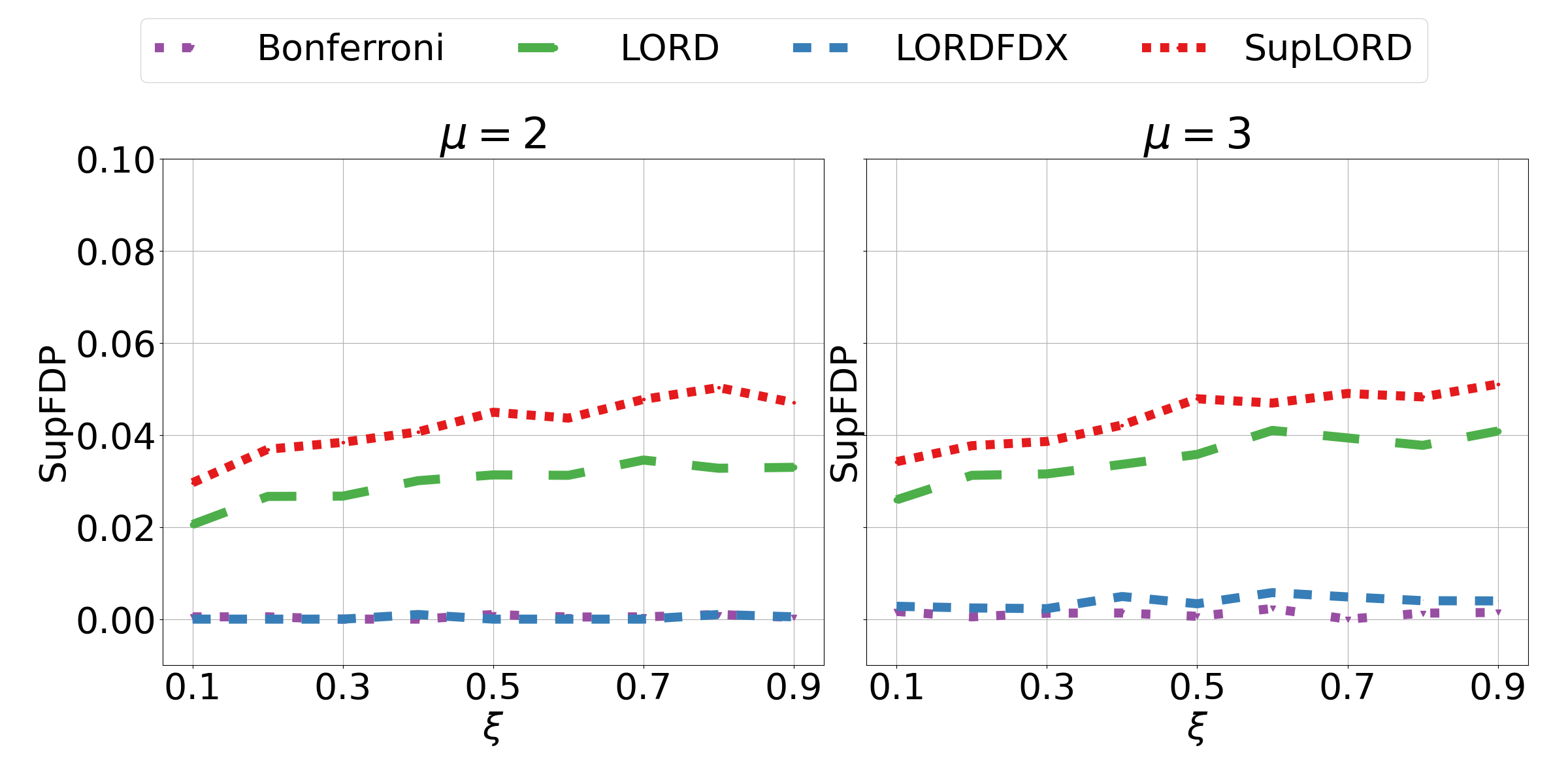}
	}
	
	\caption{Plot of state transition probability, \(\transprob\), vs. FDR, \(\supFD_{30}\), and power where \(\ell=0.05\) for signal strengths of \(\mu \in \left\{2, 3\right\}\) in the HMM data setting. \(\SupLORD\) and \(\LORD\FDX\) are calibrated with \(\delta^*=0.05\) for \(\epsilon^*=0.15\). We observe that \(\SupLORD\) has higher power across all \(\transprob\) and \(\mu\), showing that it is not penalized by clustering of null or non-null hypotheses, relative to prior methods. (Recall that the definition of power only considers correctly rejected hypotheses. The gain in power is due to a less conservative algorithm that makes better use of its \(\FDR\)/\(\FDX\) budget.)}
	\label{fig:BaselineHMM}
\end{figure}

\begin{figure}
	\centering
	\subfigure{
		\includegraphics[width=0.7\textwidth]{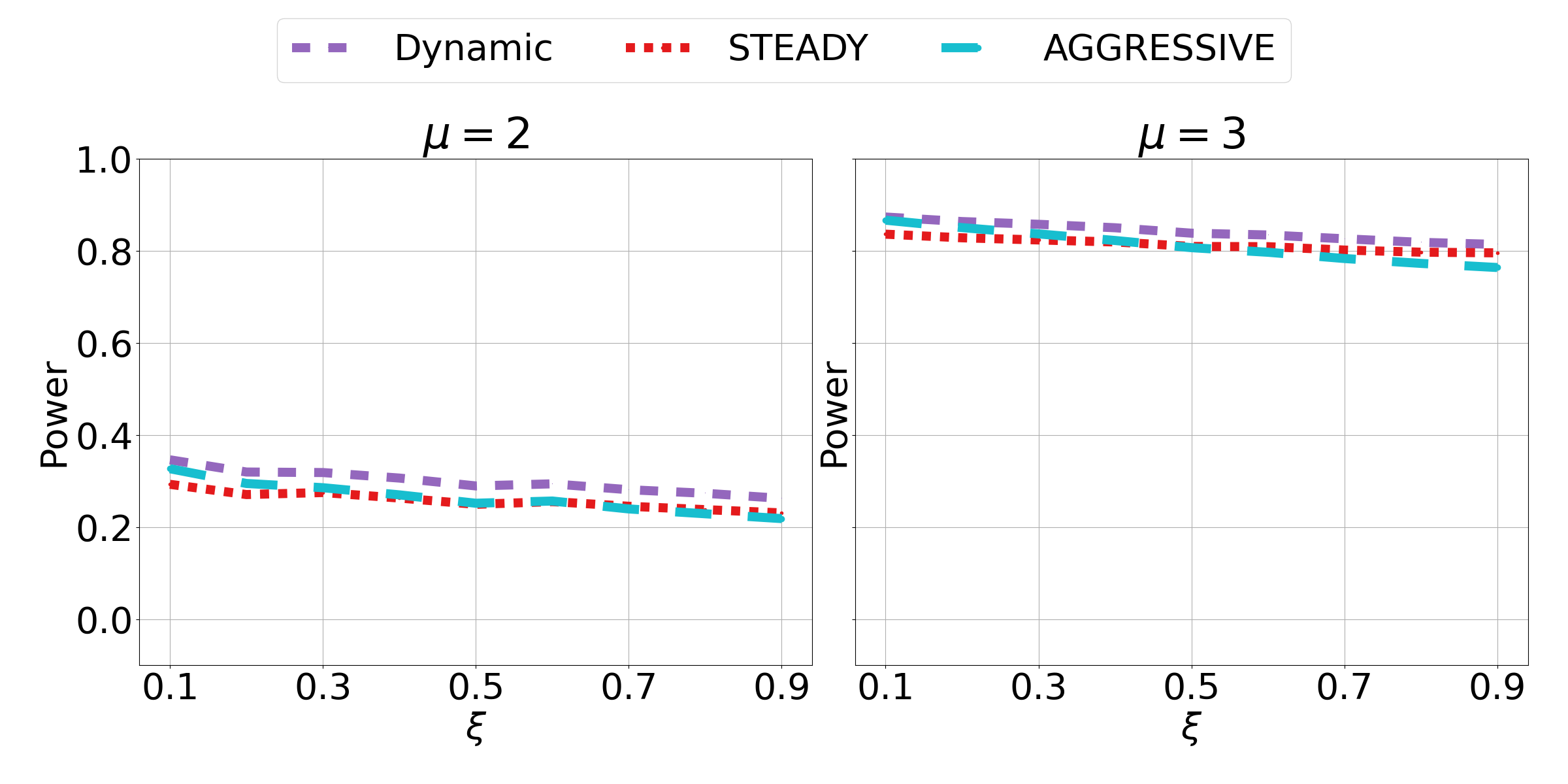}
	}
	
	\subfigure{
		\centering
		\adjincludegraphics[trim={0 0 0 {0.11\height}}, clip,width=0.7\textwidth]{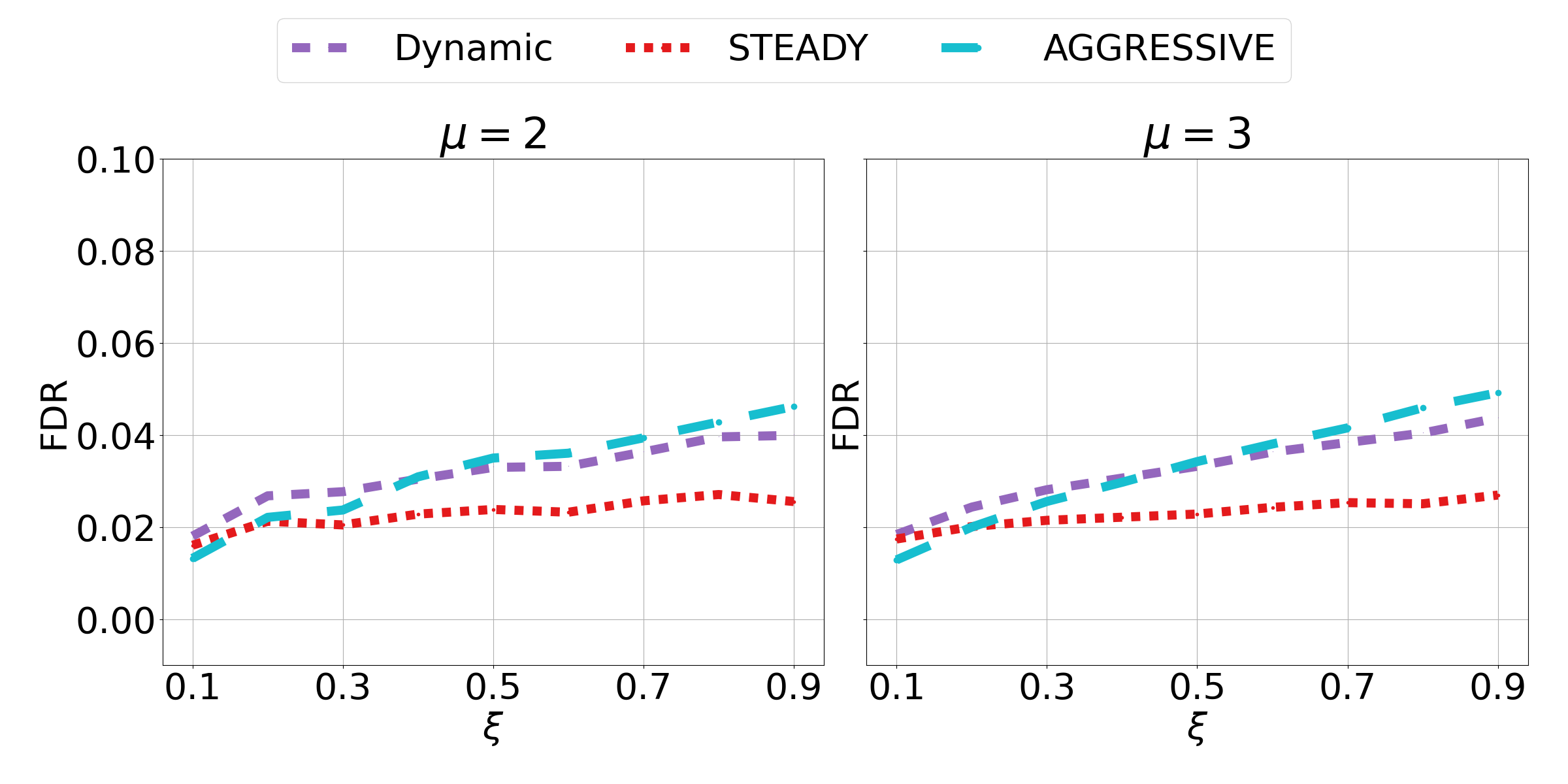}
	}
	
	\subfigure{
		\centering
		\adjincludegraphics[trim={0 0 0 {0.11\height}}, clip,width=0.7\textwidth]{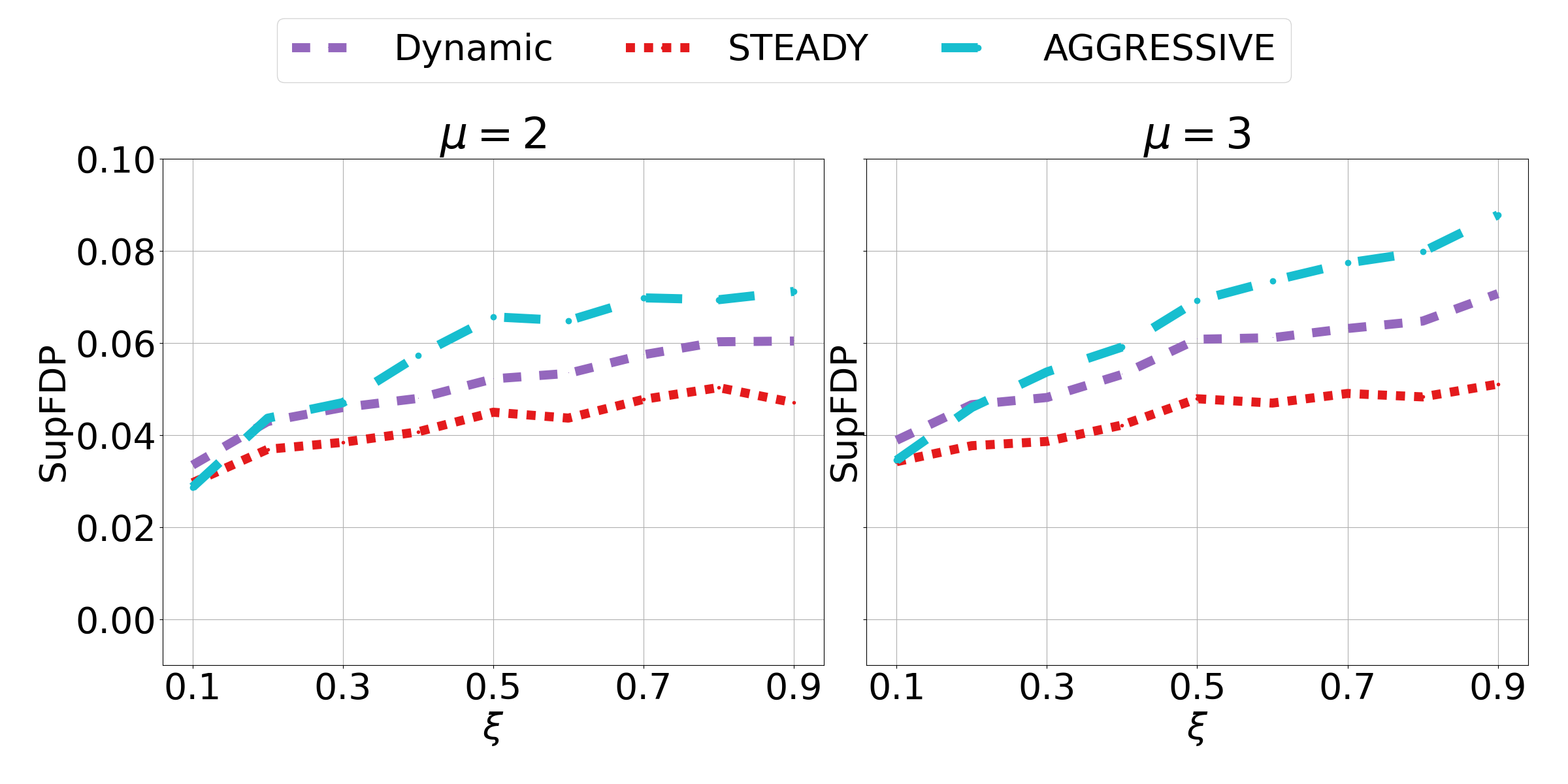}
	}
	\caption{Plot of the state change probability, \(\transprob\), vs. \(\FDR\), \(\supFD_{30}\), and power in the HMM data setting comparing between \(\SupLORD\) with dynamic scheduling and without. The power plots suggests that dynamic scheduling is uniformly better than either \steady\ or \aggressive.}
	\label{fig:DynamicHMM}
\end{figure}

In \Cref{fig:DynamicHMM}, we observe that in the HMM setting, dynamic scheduling remains the best choice, as it is superior or equal to the other schedules in power on all signal strengths and transition probabilities. In addition, if we compare the performance of dynamic scheduling and \aggressive\ we note that the gap between the power of these two schedules is virtually nonexistent when the probability of changing states, \(\transprob\), is low. The exact reverse is true when comparing dynamic scheduling and \steady\ --- they have similar power when \(\transprob\) is high. This reflects the ability of dynamic scheduling to adapt to the distribution of non-null hypotheses by dynamically adjusting its behavior in accordance with the wealth of the algorithm. Consequently, it incorporates the strengths of both \steady\ and \aggressive, and is more powerful regardless of \(\transprob\) as a result.
\clearpage

\revision{\section{Illustrating the Dynamic Scheduling Improvement}
\label{sec:DynamicSchedulingExp}

To illustrate the difference between dynamic scheduling and baseline static schedules, we plot the different alpha values produced by each scheduling algorithm with \(\SupLORD\) for a single trial. In \Cref{fig:AlphaComp}, we plot the difference between alpha values of dynamic scheduling and each of the baseline schedules for a single trial in the constant data setting, where non-null frequency is \(\pi=0.3\) and non-null mean is \(\mu=3\). We see that for the majority of hypotheses, dynamic scheduling outputs \(\alpha_i\) that are larger by about 0.005. Note that a dynamic schedule only makes more rejections if the p-value lies in the gap between the alpha value output by the baseline schedule, and the alpha value output by the dynamic schedule. Further, a p-value from a non-null hypothesis generally lies in this alpha value gap with slightly higher probability than a uniform distribution for the null hypothesis e.g.\ it may be within the gap with 0.01--0.02 probability. Thus, dynamic scheduling gets consistent power gains over each of the baselines by having generally larger alpha values than baseline schedules.} 

\begin{figure}[!h]
    \centering
    \subfigure[Dynamic \(\alpha\) -- \steady\ \(\alpha\)]{
    \centering
    \includegraphics[width=0.6\textwidth]{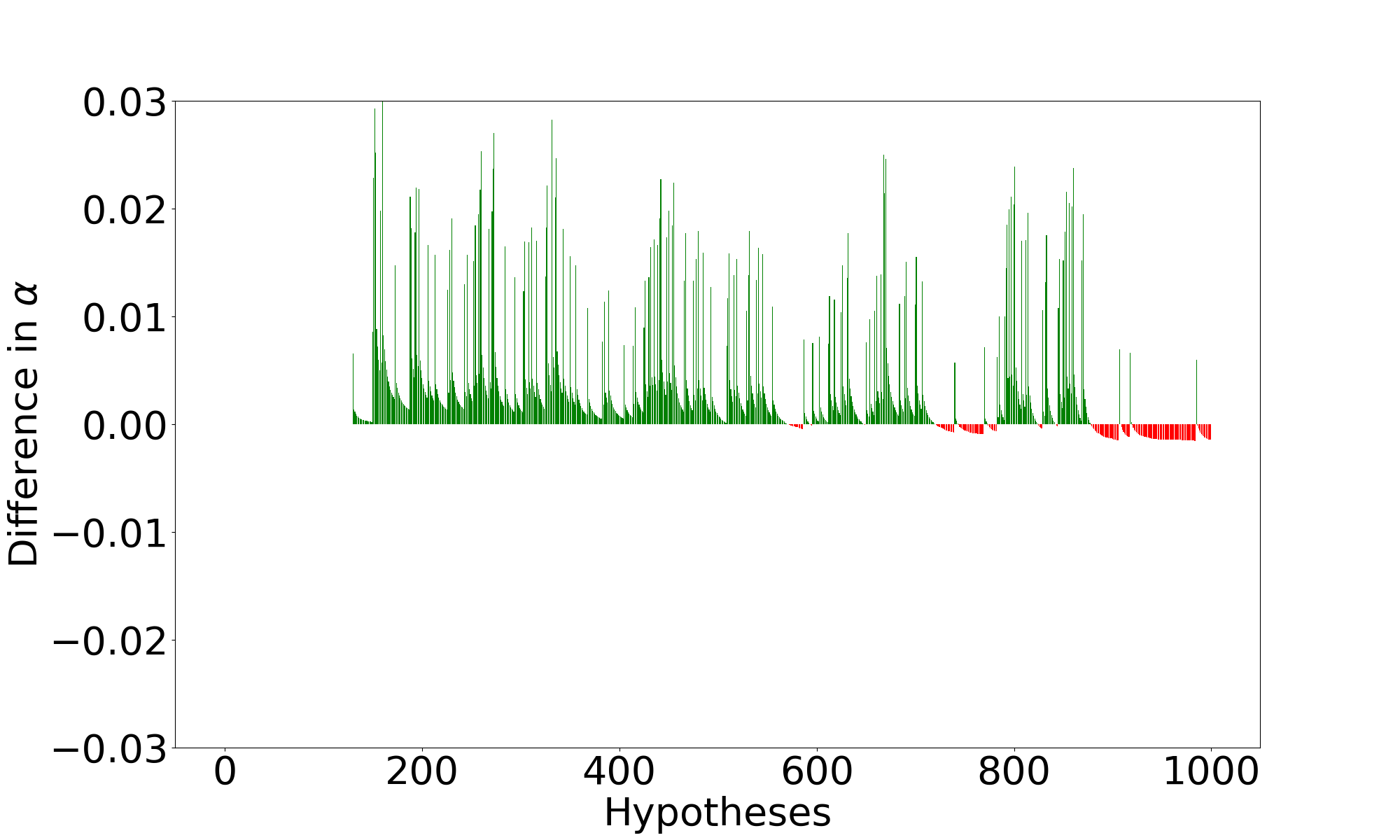}
    }
    
    \subfigure[Dynamic \(\alpha\) -- \aggressive\ \(\alpha\)]{
    \centering
    \includegraphics[width=0.6\textwidth]{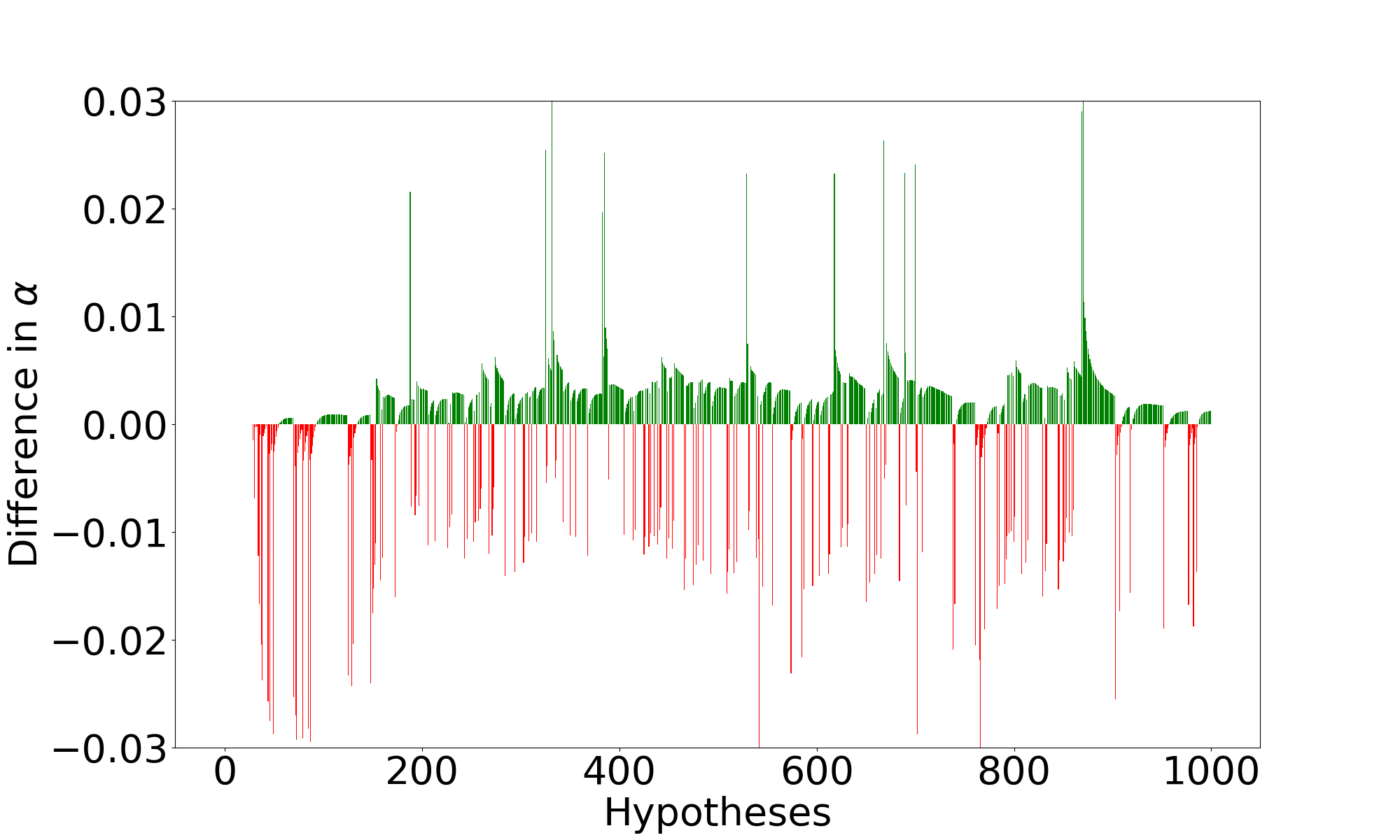}
    }
    \caption{\revision{Plot of differences in alpha values between dynamic scheduling and each baseline static schedule for an identical trial of p-values generated by the same seed in the constant data setting with non-null frequency \(\pi=0.3\), and non-null mean \(\mu=3\). Green bars mark steps where dynamic scheduling has higher alpha value, and red marks steps where the baseline static schedule has a higher alpha value. Dynamic scheduling has higher alpha values by about 0.005 for the majority of the trial when compared to both baseline schedules.}}
    \label{fig:AlphaComp}
\end{figure}

\revision{We plot the actual alpha values for each schedule of the single run in \Cref{fig:SingleAlphaRun} and the mean alpha values of each schedule across 200 runs in \Cref{fig:MeanAlphaRun}. In \Cref{fig:MeanAlphaRun}, we can observe that dynamic scheduling has consistently larger alpha values than both schedules, albeit by a small amount.}
\begin{figure}[h]
    \centering
    \includegraphics[width=0.7\textwidth]{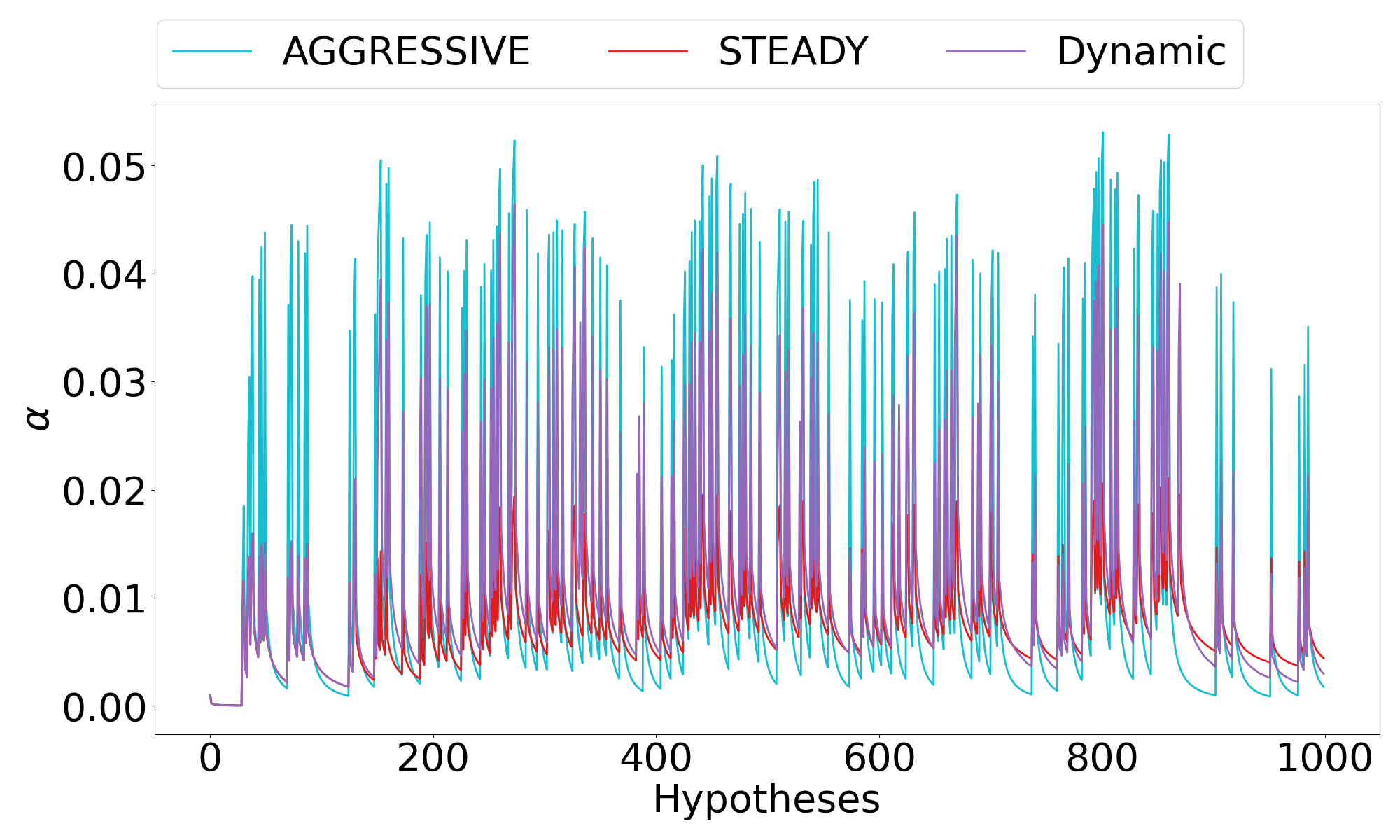}
    \caption{\revision{(Raw values) Plot of actual alpha values between dynamic scheduling and each baseline static schedule for an identical trial of p-values generated by the same seed in the constant data setting with non-null frequency \(\pi=0.3\), and non-null mean \(\mu=3\).}}
    \label{fig:SingleAlphaRun}
\end{figure}
\begin{figure}[h]
    \centering
    \includegraphics[width=0.7\textwidth]{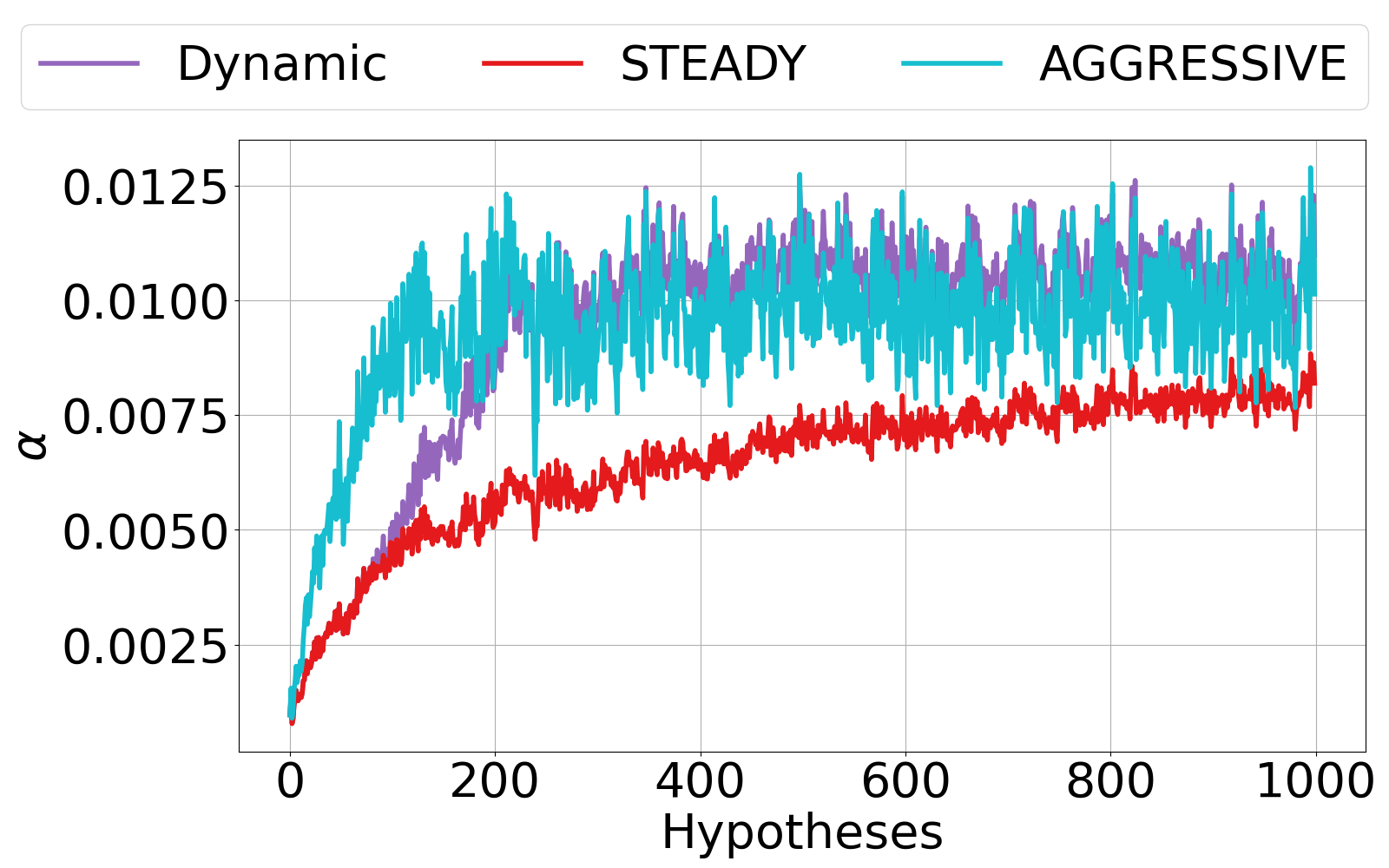}
    \caption{(Mean values) \revision{Plot of mean alpha values between dynamic scheduling and each baseline static schedule in the constant data setting with non-null frequency \(\pi=0.3\), and non-null mean \(\mu=3\) over 200 trials. We can observe that dynamic scheduling is generally larger than both baseline schedules by a small amount in a consistent fashion.}}
    \label{fig:MeanAlphaRun}
\end{figure}

	\vfill
\end{document}